\newif\ifcomments\commentsfalse 
\newif\iflegacydoc\legacydocfalse 
\newif\ifnewranksection\newranksectionfalse
\crefname{appendix}{Appendix}{Appendix}
\Crefname{appendix}{Appendix}{Appendix}
\newcommand{\todo}[1]{{\color{red} {\textsc{TODO:}} #1}}
\newcommand{\draft}[1]{{\color{black!50} #1}}
\newcommand{\outdated}[1]{{\color{black!50} {\textsc{OUTDATED:}} #1}}
\newcommand{\jrc}[1]{{\color{red} {\textsc{JRC:}} #1}}
\newcommand{\fe}[1]{{\color{red}  {\textsc{FE:}}  #1}}
\newcommand{\sam}[1]{{\color{red} {\textsc{SL:}}  #1}}
\newcommand{\js}[1]{{\color{red}  {\textsc{JS:}}  #1}}
\newcommand{\todo}[1]{}
\newcommand{\draft}[1]{}
\newcommand{\outdated}[1]{}
\newcommand{\jrc}[1]{}
\newcommand{\fe}[1]{}
\newcommand{\sam}[1]{}
\newcommand{\js}[1]{}
\declaretheorem[name=Theorem]{thm}
\declaretheorem[name=Lemma,sibling=thm]{lem}
\declaretheorem[name=Remark]{remark}
\renewcommand{\restriction}[2]{{#1}\mathord{\upharpoonright}_{#2}}
\newcommand{\invertedrestriction}{\mathord{\downharpoonright}}
\newcommand{\freezeml}{FreezeML\xspace}
\newcommand{\dec}[1]{\mathsf{#1}}    
\newcommand{\key}[1]{{\rm {\bf #1}}} 
\newcommand{\meta}{\mathsf}
\newcommand{\freeze}[1]{\lceil{#1}\rceil}
\newcommand{\Let}{\key{let}}
\newcommand{\Def}{\key{def}}
\newcommand{\In}{\key{in}}
\newcommand{\Int}{\dec{Int}}
\newcommand{\List}{\dec{List}}
\newcommand{\letexists}[1]{\sqcap #1.}
\newcommand{\poly}{\star}
\newcommand{\mono}{{\mathord{\bullet}}}
\newcommand{\true}{\dec{true}}
\newcommand{\monoc}{\dec{mono}}
\newcommand{\rsubst}{\delta}  
\newcommand{\fsubst}{\theta}  
\newcommand{\type}[2]{#1 \vdash #2}
\newcommand{\unify}{\meta{unify}}
\newcommand{\freezemlLab}[1]{\text{\scshape{#1}}}
\newcommand{\lab}[1]{\text{\scshape{#1}}}
\newcommand{\slab}[1]{\textrm{#1}}
\newcommand{\parlab}[1]{(\lab{#1})} 
\newcommand{\congen}[1]{\llbracket #1 \rrbracket}
\newcommand{\many}[1]{\tilde{#1}} 
\newcommand{\omany}[1]{\overline{#1}} 
\newcommand{\wmany}[1]{\widetilde{#1}} 
\newcommand{\mgen}{\meta{gen}}
\newcommand{\msplit}{\meta{split}}
\newcommand{\emptystack}{\cdot}
\newcommand{\emptysubst}{\emptyset}
\newcommand{\emptykenv}{\cdot}
\newcommand{\emptygamma}{\cdot}
\newcommand{\emptydelta}{\cdot}
\newcommand{\subst}{\theta}
\newcommand{\substunifier}{\mathcal{U}}
\newcommand{\ceq}{\sim}   
\newcommand{\comp}{\circ} 
\newcommand{\kenv}{\Theta}
\newcommand{\wf}[1]{#1\ \mathbf{ok}} 
\newcommand{\sat}[1]{#1\ \mathbf{sat}} 
\newcommand{\resd}[2]{{#1}\invertedrestriction_{#2}} 
\newcommand{\resk}[2]{{#1}\invertedrestriction_{#2}} 
\newcommand{\resg}[2]{{#1}\invertedrestriction_{#2}} 
\newcommand{\ress}[2]{{#1}\invertedrestriction_{#2}} 
\newcommand{\btv}{\meta{btv}}    
\newcommand{\bv}{\meta{bv}}      
\newcommand{\ftv}{\meta{ftv}}    
\newcommand{\ba}{\begin{array}}
\newcommand{\ea}{\end{array}}
\newcommand{\bl}{\ba[t]{@{}l@{}}}
\newcommand{\el}{\ea}
\crefname{equation}{}{}
\Crefname{equation}{Statement}{Statements}
\crefname{lemma}{Lemma}{Lemmas}
\Crefname{lemma}{Lemma}{Lemmas}
\crefname{lem}{Lemma}{Lemmas}
\Crefname{lem}{Lemma}{Lemmas}
\crefname{figure}{Figure}{Figures}
\Crefname{figure}{Figure}{Figures}
\crefname{section}{Section}{Section}
\Crefname{section}{Section}{Section}
\crefname{thm}{Theorem}{Theorems}
\Crefname{thm}{Theorem}{Theorems}
\crefname{defn}{Definition}{Definitions}
\Crefname{defn}{Definition}{Definitions}
\def\currentprefix{proof:}
\newcounter{premisscounter}
\newcommand{\resetNum}{\setcounter{premisscounter}{0}\setcounter{obligationcounter}{0}}
\newcommand*{\proofContext}[1]{\resetNum\def\currentprefix{proof:#1}}
\newcommand*{\localeqlabel}[1]{\label[premisscounter]{\currentprefix:#1}}
\newcommand*{\localref}[1]{%
\ref*{\currentprefix:#1}}
\let\Item\item
\newenvironment{premisses}
  {\renewcommand{\item}{\refstepcounter{premisscounter} \Item}
   \begin{list}{\textbf{(\arabic{premisscounter})}}{}}
  {\end{list}}
\newcommand*{\premissLabel}[1]{\label{\currentprefix:#1}}
\newcommand*{\premissNum}[1]{\refstepcounter{premisscounter}\localeqlabel{#1}\textbf{(\arabic{premisscounter})}}
\newcommand*{\premissRef}[1]{(\localref{#1})}
\def\maketag@@@#1{\hbox{\m@th\normalfont\bfseries#1}}
\newcounter{obligationcounter}
\let\Item\item
\newenvironment{obligations}
  {\renewcommand{\item}{\refstepcounter{obligationcounter} \Item}
   \begin{list}{\textbf{(\Roman{obligationcounter})}}{}}
  {\end{list}}
\newcommand*{\obligationLabel}[1]{\label{\currentprefix:#1}}
\newcommand*{\obligationNum}[1]{\refstepcounter{obligationcounter}\premissLabel{#1}{\textbf{(\Roman{obligationcounter})}}}
\newcommand*{\obligationRef}[1]{(\ref{\currentprefix:#1})}
\newcommand*{\obligationFulfilled}[1]{\textit{(\localref{#1})}}
\newcommand{\disjoint}{\mathop{\#}}
\newcommand{\N}{\mathbb{N}}
\newcommand{\Gammaof}[1]{\dec{tc}(#1)}
\newcommand{\Deltaof}[1]{\dec{rc}(#1)}
\newcommand{\Xiof}[1]{\dec{fc}(#1)}
\newcommand{\Uof}[1]{\mathfrak{U}(#1)}
\newcommand{\snoc}{::}
\newcommand{\sysname}{Constraint FreezeML\xspace}
\newcommand{\HMX}{HM($X$)\xspace}
\newcommand{\HM}[1]{HM($#1$)\xspace}
\newcommand{\note}[1]{{\color{red} {\textsc{NOTE:}} #1}}
\newcommand{\note}[1]{}
\begin{document}
\title{Constraint-based type inference for FreezeML}
\author{Frank Emrich}
\email{frank.emrich@ed.ac.uk}

\author{Jan Stolarek}
\email{jan.stolarek@ed.ac.uk}


\additionalaffiliation{%
  \institution{Lodz University of Technology}
  \country{Poland}
}

\author{James Cheney}
\email{james.cheney@ed.ac.uk}


\additionalaffiliation{%
  \institution{The Alan Turing Institute}
  \country{UK}
}

\author{Sam Lindley}
\email{sam.lindley@ed.ac.uk}

\affiliation{%
  \institution{The University of Edinburgh}
  \country{UK}
}

\renewcommand{\shortauthors}{Emrich et al.}

\begin{abstract}
  FreezeML is a new approach to first-class polymorphic type inference that
  employs term annotations to control when and how polymorphic types are
  instantiated and generalised. It conservatively extends Hindley-Milner type
  inference and was first presented as an extension to Algorithm W. More modern
  type inference techniques such as \HMX and OutsideIn($X$) employ constraints to
  support features such as type classes, type families, rows, and other
  extensions. We take the first step towards modernising FreezeML by presenting
  a constraint-based type inference algorithm. We introduce a new constraint
  language, inspired by the Pottier/R\'emy presentation of \HMX, in order to allow
  FreezeML type inference problems to be expressed as constraints. We present a
  deterministic stack machine for solving FreezeML constraints and prove its
  termination and correctness.
\end{abstract}

\keywords{first-class polymorphism, type inference, impredicative types, constraints}
\maketitle
\bibliographystyle{ACM-Reference-Format}
\citestyle{acmauthoryear}   

\section{Introduction}

Hindley-Milner type inference is well-studied, yet extending it to provide full
support for polymorphism (``first-class'' polymorphism a la System F) remains an
active research topic---characterised in one recent paper as ``a deep, deep
swamp''~\citep{SerranoHVJ18}.
A term such as $\lambda f. f\: f$ , which would be rejected by Hindley-Milner
type inference, may be accepted by a type system permitting first-class
polymorphism, by assigning a sufficiently polymorphic type to $f$, such as
$\forall a. a \to a$.
However, type inference for System F is undecidable~\citep{Wells94}, meaning that
some restrictions must be imposed.
Choosing $\forall a. a$ as the type for $f$ also allows the example above to
type-check, but no System F type can be given to the function that subsumes both
choices.
A wide range of solutions has emerged to explore the resulting design space,
yielding systems that go beyond System F types, employ elaborate heuristics that
determine the system's behaviour, require type annotations for certain terms, or
rely on additional syntax, or give up on completeness or principal typing, to
name a few~\citep{%
BotlanR03,%
Leijen08,%
RussoV09,%
SerranoHVJ18,%
GarrigueR99,%
SerranoHJV20,%
VytiniotisWJ06}.

Recently, \citet{EmrichLSCC20} proposed a new approach called FreezeML that has
several desirable properties: it conservatively extends ML type inference,
allows expressing arbitrary System F types and computations, and retains
decidable, complete type inference. The key ingredient of FreezeML is the
``freezing'' operation, an annotation on term-level variables that \emph{blocks}
automatic instantiation of any quantifiers in that variable's type. FreezeML
also includes let- and lambda-bindings with ascribed types (which are standard
in other systems).
Unlike other approaches to first-class polymorphism that err on the side of
explicitness~\citep{RussoV09,GarrigueR99}, FreezeML uses just System F types
instead of introducing different, incompatible sorts of polymorphic types.

Freezing enables the programmer to control when instantiation happens instead of
requiring the type inference algorithm to guess or employ some heuristic that
the programmer must then work around.
For instance, suppose we have defined functions $\dec{single} : \forall
a.a \to \List\;a$, which creates a singleton list, and $\dec{choose} : \forall
a.a \to a \to a$, which returns one of its arguments (for type inference
purposes it does not matter which).
In FreezeML we can define $f_1 \; () = \dec{single}\;\dec{choose}$ and $f_2 \; ()
= \dec{single}\;\freeze{\dec{choose}}$. In the former (following the usual ML
convention) both $\dec{single}$ and $\dec{choose}$ are fully instantiated before
the body is generalised, hence $f_1 : \forall a. \dec{unit} \to \List\;(a \to a \to a)$.
In the latter, however, instantiation of $\dec{choose}$ is frozen, hence $f_2 :
\dec{unit} \to \List\;(\forall a.a \to a \to a)$.


The original presentation of FreezeML type inference was given as an extension
to Algorithm W~\citep{DamasM82}. Although Algorithm W is well-understood, many
modern type inference implementations, notably Haskell,
employ \emph{constraint-based type
inference}~\citep{Pottier14,PottierR05,VytiniotisJSS11} instead in which type
inference is split into two stages, mediated by an intermediate logical language
of \emph{constraints}~\citep{OderskySW99,PottierR05}. In the first stage,
programs $M$ are translated to constraints $C$ such that $C$ is solvable if and
only if $M$ is typable, and the solutions to $C$ are the possible types of
$M$. In the second stage, the constraint $C$ is solved (or shown to be
unsatisfiable), without further reference to $M$. Adopting a constraint-based
inference strategy has several potential benefits over the traditional Algorithm
W-style, including separating the core logic of type inference from the details
of the surface language, leveraging already-known efficient techniques for
constraint solver implementation, and supporting extensions such as type classes
and families, subtyping, rows, units of measure,
GADTs~\citep{OderskySW99,SimonetP07,VytiniotisJSS11}, etc.

One influential approach to constraint-based type inference
is \HMX~\citep{OderskySW99,PottierR05}, that is, Hindley-Milner type inference
``parameterised over X'', where $X$ stands for a constraint domain that can be
used in types.  For example, if $X$ is a theory of type equality, one obtains
standard Hindley-Milner type inference \HM{=}; if $X$ is a theory of row types
one obtains row type inference; if $X$ is a theory of subtyping one gets type
inference with subtyping.

In this paper, we take a first step towards such a constraint-parametric system,
a version of FreezeML parameterised in a constraint domain $X$, in the spirit of
\HMX. Specifically, we introduce a constraint language for FreezeML, inspired
by \HMX, in which type expressions can include arbitrary polymorphism, and which
provides suitable constraints to encode type inference for FreezeML
programs. (We have not yet explored parameterising the system over the
constraint domain $X$, but even adapting FreezeML to a constraint-based approach
turns out to require surmounting significant technical obstacles.) We also
provide a deterministic stack machine for solving these constraints (again
inspired by the presentation of constraint solving for \HMX by Pottier and
R\'emy). Full correctness proofs for both contributions are included in an
appendix.

Formulating a suitable constraint language for FreezeML and a (provably) correct
translation and sound and complete solver involves several subtleties. Handling
the freeze operator itself turns out to be straightforward by adding a
constraint that checks that the type of a variable exactly matches an expected
type. Unification of types needs to account for polymorphism occurring anywhere in a type, and constraints need to be
extended with universal quantifiers as well, in order to deal with polymorphism
in ascribed types. We also add a ``monomorphism constraint'' to enforce
FreezeML's requirement that certain types are required to be
monomorphic. Finally, to deal with FreezeML's approach to the value
restriction (in which let-bindings of non-values are allowed but not
generalised)
we introduce an additional constraint form to handle type inference of
non-generalisable expressions. However, the most challenging problem is to
design a constraint language and semantics that preserves the necessary
invariants to ensure that FreezeML type inference remains sound, complete, and
principal: specifically, to ensure that flexible type variables occurring in the
inferred types of variables are always monomorphic, which is necessary in
FreezeML to avoid the need to ``guess polymorphism'' when a polymorphic type is
instantiated.

%

Like certain other systems~\citep{DynamicsML,VytiniotisWJ06,Leijen08}, typing
derivations in FreezeML require principal types to be assigned to certain
subterms.
To the best of our knowledge, the inference algorithm shown in this paper is the
first one based on constraint solving for such a type system, requiring similar
principality conditions in the semantics of the constraint language.
Detailed proofs of correctness are provided in an appendix.

We characterise these contributions as a first step towards a longer-term goal:
parameterising FreezeML type inference over other constraint domains X.  This is
a natural next step for future work, and would enable experimentation with
combining FreezeML-style polymorphism with features found in other modern type
systems, such as Haskell's type classes and families (and the numerous libraries that rely on them), higher-kinded types, and
GADTs~\citep{VytiniotisJSS11}, row types as found in Links~\citep{LindleyC12},
Koka~\citep{Leijen14} or Rose~\citep{MorrisM19}, units of measure as found
in F\#~\citep{Kennedy09} and some Haskell libraries~\citep{Gundry15}. To
summarise, in this paper we:
\begin{itemize}
\item
  present background on FreezeML (Section~\ref{section:freezeml});

\item
  introduce a constraint language inspired by Pottier and R\'emy's presentation
  of \HMX and give a translation from FreezeML programs to constraints
  representing type inference problems (Section~\ref{section:language});

\item
  present a stack machine for solving the constraints (again inspired by Pottier and R\'emy's) and show that
  it is correct, deterministic, and terminating (Section~\ref{section:solving});

\item
  discuss extensions (Section~\ref{sec:discussion}), related and future work
  (Section~\ref{section:related}), and conclude (Section~\ref{section:concl}).
\end{itemize}

\section{FreezeML}
\label{section:freezeml}

In this section we summarise the syntax and typing rules of FreezeML.
(We omit the dynamic semantics, given by elaboration into System
F~\cite{EmrichLSCC20}, as it is not relevant to the current paper.)

\paragraph{Lists as sets.}
We write $\many{X}$ for a (possibly empty) set $\{X_1, \dots, X_n\}$ and
$\omany{X}$ for a (possibly empty) sequence $X_1, \dots, X_n$.
%
%
We overload comma for use as a union / concatenation operator for sets and
sequences, writing $\many{X}, \many{Y}$ for the set $\{X_1, \dots, X_m,
Y_1, \dots, Y_n\}$ where $\many{X} = \{X_1, \dots, X_m\}$ and $\many{Y}
= \{Y_1, \dots, Y_n\}$, and writing $\omany{X}, \omany{Y}$ for the sequence
$X_1, \dots, X_m, Y_1, \dots, Y_n$ where $\omany{X} = X_1, \dots, X_m$ and
$\many{Y} = Y_1, \dots, Y_n$.
%
%
Given $\omany{X}$, we may write $\many{X}$ for the set containing the same
elements.
We sometimes indicate that sets or sequences are required to be disjoint using
the $\disjoint$ relation, e.g. $\Delta \disjoint \Delta'$ means that $\Delta$
and $\Delta'$ are disjoint.
%



\paragraph{Types.}

The syntax of types, instantiations, and contexts is as follows.
\[
\ba[t]{@{}l@{}r@{~}c@{~}l@{}}
\slab{Type Variables}         &a, b, c \\
\slab{Type Constructors}      &D &::= &
                                        \mathord{\to}
                                   \mid \times
                                   \mid \Int
                                   \mid \dots \\
\slab{Types}                  &A, B      &::= &
                                         a
                                    \mid D\,\omany{A}
                                    \mid \forall a.A \\
\slab{Monotypes}              &S, T      &::= &
                                         a
                                    \mid D\,\omany{S} \\
\slab{Guarded Types}          &G, H &::= &
                                         a
                                    \mid D\,\omany{A} \\
\slab{Type Instantiation}     &\rsubst  &::= &
                                          \emptyset \mid \rsubst[a \mapsto A] \\
\slab{Type Contexts}      &\Delta, \Xi    &::= &
                                        \cdot \mid \Delta, a\\
\slab{Term Contexts}      &\Gamma    &::= &
                                          \cdot \mid \Gamma, x : A \\
\ea
\]
Types are assembled from type variables ($a, b, c$) and type constructors ($D$).
Type constructors include at least functions ($\to$), products ($\times$), and
base types.
FreezeML uses System F types ($A, B$), and the only syntactic form for expressing type polymorphism is $\forall a. A$.
A type is either a type variable ($a$), a data types ($D\,\omany{A}$) with type
constructor $D$ and type arguments $\omany{A}$, or a polymorphic type $(\forall
a.A$) that binds type variable $a$ in type $A$.
We consider types equal modulo alpha-renaming, but not up to reordering of
quantifiers or the addition/removal of superfluous (i.e., unused) quantified
variables.  For example, the following types are all different: $\forall
a. \forall b. a \to b$, $\forall b. \forall a. a \to b$, $\forall a. \forall
b. \forall c. a \to b$.
Monotypes ($S,T$) disallow any polymorphism.
Guarded types ($G,H$) disallow polymorphism at the top-level.
A type instantiation ($\rsubst$) maps type variables to types.
Unlike traditional presentations of ML, we explicitly track type variables in a
type context ($\Delta$).
By convention we reserve $\Xi$ for flexible type contexts which we will not need
until we treat constraints in Section~\ref{section:language}.
Term contexts ($\Gamma$) ascribe types to term variables.
Contexts are unordered and duplicates are disallowed.
As such, we will frequently take advantage of the fact that a type context
$\Delta$ is a set of type variables $\many{a}$ and use both notations
interchangeably.
This means that we impose the same disjointness conditions when writing
$\Delta, \Delta'$.


\paragraph{Typing judgements.}

FreezeML typing judgements have the form $\Delta; \Gamma \vdash M : A$, stating
that term $M$ has type $A$ in type context $\Delta$ and term context $\Gamma$.
We assume standard well-formedness judgements for types and term contexts:
$\Delta \vdash \wf{A}$ and $\Delta \vdash \wf{\Gamma}$, which state that only
type variables in $\Delta$ can appear in $A$ and $\Gamma$ respectively.
Moreover, the term well-formedness judgement $\Delta;\Gamma \vdash \wf{M}$
states that all free term variables of $M$ appear in $\Gamma$ and type
annotations are well-formed.
This judgement also implements the scoping rules of FreezeML, where
certain let bindings bring type variables in scope such that they become
available in type annotations~\cite{EmrichLSCC20}.
The scoping behaviour interacts with the value restriction adopted by FreezeML,
we therefore introduce $\Delta; \Gamma \vdash \wf{M}$ formally when discussing
let bindings later in this section.

The typing rules are given in Fig.~\ref{fig:freezeml-typing}.
As usual, in these rules we implicitly assume that types and term contexts are
well-formed with respect to the type context and that the term is well-formed
with respect to the type and term context (i.e., $\Delta;\Gamma \vdash \wf{M}$).
In the following running examples, we assume that the function $\dec{id}$ is in scope and has
type $\forall a. a \to a$.
\begin{figure}
\raggedright
$\boxed{\type{\Delta; \Gamma}{M : A}}$

\begin{mathpar}
  \inferrule*[Lab=\freezemlLab{VarFrozen}]
    {x : A \in \Gamma}
    {\type{\Delta; \Gamma}{\freeze{x} : A}}

  \inferrule*[Lab=\freezemlLab{VarPlain}]
    {{x : \forall \omany{a}.H \in \Gamma} \\
      \Delta \vdash \rsubst : \many{a} \Rightarrow_\poly \cdot
    }
    {\type{\Delta; \Gamma}{x : \rsubst(H)}}

  \inferrule*[Lab=\freezemlLab{App}]
    {\type{\Delta; \Gamma}{M : A \to B} \\
     \type{\Delta; \Gamma}{N : A}
    }
    {\type{\Delta; \Gamma}{M\,N : B}} \\

  \inferrule*[Lab=\freezemlLab{LamPlain}]
    {\type{\Delta; (\Gamma, x : S)}{M : B}}
    {\type{\Delta; \Gamma}{\lambda x.M : S \to B}}

  \inferrule*[Lab=\freezemlLab{LamAnn}]
    {\type{\Delta; (\Gamma, x : A)}{M : B}}
    {\type{\Delta; \Gamma}{\lambda (x : A).M : A \to B}}

  \inferrule*[Lab=\freezemlLab{LetPlain}]
    { \omany{a} = \ftv(A') - \Delta \\\\
     (\Delta, \omany{a}, M, A') \Updownarrow A \\
     (\Delta, \many{a}); \Gamma \vdash M : A' \\
     \type{\Delta; (\Gamma, x : A)}{N : B} \\
     \meta{principal}(\Delta, \Gamma, M, \many{a}, A')
    }
    {\type{\Delta; \Gamma}{\Let \; x = M\; \In \; N : B}}

  \inferrule*[Lab=\freezemlLab{LetAnn}]
    {(\many{a}, A') = \msplit(A, M) \\
     \type{(\Delta, \many{a}); \Gamma}{M : A'} \\
     \type{\Delta; (\Gamma, x : A)}{N : B}
    }
    {\type{\Delta; \Gamma}{\Let \; (x : A) = M\; \In \; N : B}}

\end{mathpar}

\caption{\freezeml Typing Rules.}
\label{fig:freezeml-typing}
\end{figure}


\paragraph{Variables and instantiation.}

A frozen variable ($\freeze{x}$) can only have the exact 
type as given by the term environment $\Gamma$ (rule \textsc{VarFrozen}). This means meaning that the
\emph{only} type of $\freeze{\dec{id}}$ is $\forall a. a \to a$.
In contrast, plain variables ($x$) can be instantiated, as in algorithmic
presentations of ML (rule \textsc{VarPlain}).
In fact, plain variables are the only terms in FreezeML that eliminate
polymorphic types.
This means that if we have $\Gamma(x) = \forall \omany{a}.H$, then the possible types of
$x$ are all results of instantiating all $\omany{a}$ in $H$, using arbitrarily
polymorphic types.
Potential nested quantifiers inside $H$ are not instantiated, however.
As a result, for any well-formed $B$, the type $B \to B$ is a possible type of
$\dec{id}$, whereas $\forall a. a \to a$ is not.

%
Formally, the \textsc{VarPlain} typing rule relies on an \emph{instantiation} $\delta$.
Each instantiation is parameterised by
a \emph{restriction}\footnote{\citet{EmrichLSCC20} called these \emph{kinds},
but we prefer to avoid potential confusion with other uses of this overloaded
term.} $R$ which can be either monomorphic ($\mono$) or polymorphic ($\poly$),
indicating whether type variables may be substituted with monotypes or arbitrary
types.
The instantiation judgement $\Delta \vdash \rsubst
: \Delta' \Rightarrow_R \Delta''$ states that instantiation $\rsubst$
instantiates type variables in $\Delta'$ with types subject to
restriction $R$ using the type context $\Delta, \Delta''$.
Variables in $\Delta$ are considered to be mapped to themselves.
This means that if $R$ is $\poly$, then for all $a \in \Delta'$, $\delta(a)$
must be a well-formed type in context $\Delta, \Delta''$, which may therefore be
arbitrarily polymorphic.
Otherwise, if $R$ is $\mono$, then each such $\delta(a)$ must be a monomorphic
type.
Note that due to all variables $b \in \Delta$ being mapped to themselves,
$\delta(b)$ is always a (monomorphic) well-formed type in context
$\Delta, \Delta''$ for all such $b$.

Intuitively, the variables in $\Delta$ correspond to those appearing in the surrounding context, whereas $\Delta'$ corresponds to variables being instantiated and $\Delta''$ contains new type variables appearing in the instantiation.  In \textsc{VarPlain}, the $\Delta''$ environment is empty, but in the $\meta{principal}$ operation discussed below, $\Delta''$ need not be empty.
In order for this interpretation to make sense the judgement has an implicit
precondition that
$\Delta' \# \Delta$ and $\Delta'' \# \Delta$.
It is defined as follows.

{
\vspace{8pt}
\raggedright
$\boxed{\Delta \vdash \rsubst : \Delta' \Rightarrow_R \Delta''}$
\begin{mathpar}
\inferrule
  { }
  {\Delta \vdash \emptyset : \cdot \Rightarrow_R \Delta'}

\inferrule
  {\Delta \vdash \rsubst : \Delta' \Rightarrow_R \Delta'' \\
   \Delta, \Delta'' \vdash_R \wf{A}}
  {\Delta \vdash \rsubst[a \mapsto A] : (\Delta', a) \Rightarrow_R \Delta''}
\end{mathpar}
}%
We write $\Delta \vdash_R \wf{A}$ for the well-formedness judgement for types.
It is standard except for the presence of $R$; if $R$ is $\mono$ then
$\Delta \vdash_\mono A$ only holds if $A$ is a monotype $S$.
In other words, the judgement enforces the restriction $R$, where any
well-formed type satisfies the restriction $\poly$.
Therefore, the well-formedness judgement $\Delta \vdash \wf{A}$ introduced
earlier is a shorthand for $\Delta \vdash_\poly \wf{A}$.

\paragraph{Functions.}
Function applications ($M\,N$) are standard and oblivious to polymorphism.
The parameter type $A$ of the function $M$ must exactly match that of the
argument $N$, where $A$ may be arbitrarily polymorphic.
In particular, $\freeze{id} \: 3$ is ill-typed because $\freeze{id}$'s type
$\forall a. a \to a$ is not a function type.
Conversely, $\dec{id} \: \freeze{\dec{id}}$ has type $\forall b. b \to b$.
The first occurrence of $\dec{id}$ is instantiated, by picking the type
$\forall b. b \to b$ of $\freeze{\dec{id}}$ for the quantified type variable.
This showcases the impredicative nature of FreezeML, with alpha-renaming
performed for the sake of clarity.

Plain (i.e., unannotated) lambda abstractions ($\lambda x.M$) restrict the
domain to be monomorphic.
This is a simple way to keep type inference tractable, in line with
other systems~\cite{Leijen08,SerranoHVJ18}.
Annotated lambda abstractions ($\lambda(x : A).M$) allow the domain
to be polymorphic, at the cost of a type annotation.
As a result, the example term $\lambda f. f \: f$ given in the introduction
is rejected in FreezeML, unless $f$ is annotated with an appropriate type.
Writing $\lambda (f : \forall a. a \to a). f \: f$ yields a function of type
$(\forall a. a \to a) \to (B \to B)$ for any well-formed $B$.
The return types of both forms of lambda abstractions may be arbitrarily
polymorphic: both $\lambda (f : \forall a. a \to a). f \: \freeze{f}$ and
$\lambda x . \freeze{\dec{id}}$ yield functions with polymorphic return types.

\paragraph{Principality}
The \textsc{LetPlain} rule has a \emph{principality} side condition that
requires that the type inferred for $x$ is a principal one.  Terms cannot
arbitrarily be generalised in FreezeML while retaining typability.
The term $\dec{id}$ has type $A \to A$ for any type $A$, and in particular
$a \to a$ for any type variable $a$.
However, it does not have type $\forall a. a \to a$.
As in System F, there is no direct relationship between the types
$\forall a. a \to a$ and $A \to A$ in FreezeML; generalisation only happens when a let-bound generalised value is encountered, and instantiation only happens if
triggered by a plain variable occurrence.

The fact that FreezeML typing judgements carry type contexts specifying all
in-scope type variables makes it possible to characterise principal types
without universally quantifying additional type variables.
Principal types are always given in their context $\Delta; \Gamma$ and may use
free type variables not present in $\Delta$.
For example, the principal types of term $\dec{id}$ in the context
$\Delta;\Gamma$ are exactly the types $b \to b$ for any $b \not\in \Delta$.
This presentation is syntax-directed, in contrast to declarative presentations of ML that allow generalisation
at any point, and would typically refer to the \emph{type scheme} $\forall
a. a \to a$ (not to be confused with the corresponding System F type) as the
principal type of $\dec{id}$.

We formalise the notion of principal type using the predicate
$\meta{principal}(\Delta, \Gamma, M, \Delta', A')$.
It first asserts that $A'$ is a possible type of $M$ in the context
$\Delta,\Delta'; \Gamma$, where $\Delta'$ includes all type variables of $A'$ not in
$\Delta$.
A minimality condition must also be satisfied:
 any other possible type of $M$ can be obtained by instantiating the
variables in $\Delta'$, possibly using some new type variables in context
$\Delta''$, but (crucially) \emph{not} instantiating any of those in $\Delta$.
Formally, a \emph{principal type} of $M$ in the context of $\Delta$ and $\Gamma$
is any pair $(\Delta',A')$ such that
$\meta{principal}(\Delta,\Gamma,M,\Delta',A')$ holds.  This is unique up to safe
renaming of the variables in $\Delta'$ (that is, avoiding already-known
variables in $\Delta$) and the occurrence of superfluous variables in $\Delta'$
(i.e. variables not actually occuring in $A'$).\footnote{We could impose
minimality of $\Delta'$ and $\Delta''$ in the definition of $\meta{principal}$,
but any superfluous variables in either context simply have no effect.}
Given a particular $A'$ we can recover $\Delta'$, so
we say \emph{the} principal type to refer to a particular $A'$ when $\Delta'$ is
clear from context.
\begin{mathpar}
\bl
\meta{principal}(\Delta, \Gamma, M, \Delta', A') = \\
\qquad
\Delta, \Delta'; \Gamma \vdash M : A' \;\text{ and } \\
\qquad\bl
(\text{for all }\Delta'', A'' \mid
       \bl
       \text{if }
       \Delta, \Delta''; \Gamma \vdash M : A'' \\
       \text{then there exists }
       \rsubst
       \text{ such that } \\
       \;\Delta  \vdash \rsubst : \Delta' \Rightarrow_\poly \Delta''
       \text{ and }
       \rsubst(A') = A'' )\\
       \el \\
\el
\el
\end{mathpar}
\begin{remark}[Abuse of notation]
Notice that the definition of $\mathsf{principal}$ refers to typing derivations in the ``if'' part of the condition.  The reader may be concerned about whether the typing judgement is well-defined 
given that it appears in a negative position in the definition of
$\meta{principal}$.  
As \citet{EmrichLSCC20} explain we can see that the definition is nevertheless
well founded by indexing by untyped terms or the height of derivation
trees. Likewise, proofs involving typing derivations are typically by induction
on $M$ rather than by rule induction.  Although it is a slight abuse of
notation, we prefer to present the typing rules using inference rule notation
for ease of comparison with other systems.  Formally however the rules in
Figure~\ref{fig:freezeml-typing} are implications that happen to hold of the
typing relation, not an inductive definition of it.  We refer to the extended
version of \citet{EmrichLSCC20} for the precise definition.
\end{remark}


\paragraph{Plain let bindings.}
\label{part:plain-let-terms}

Following ML, FreezeML adopts a syntactic value restriction~\cite{wright95lsc},
distinguishing two subcategories of terms.
\[
\ba[t]{@{}l@{}r@{~}c@{~}l@{}}
\slab{Values}                 &\dec{Val} \ni V, W       &::= &
                                        \freeze{x}
                                   \mid x
                                   \mid \lambda x.M
                                   \mid \lambda (x : A). M
                           \mid \Let\; x = V \;\In\; W
                                   \mid  \Let\; (x : A) = V \;\In\; W \\
\slab{Guarded Values}         &\dec{GVal} \ni U       &::= &
                                   \phantom{\freeze{x} \mid\;}
                                        x
                                   \mid \lambda x.M
                                   \mid \lambda (x : A). M
                           \mid \Let\; x = V \;\In\; U
                                   \mid  \Let\; (x : A) = V \;\In\; U
\ea
\]
Values disallow applications. Guarded values disallow frozen variables, and thus
must have guarded type.\footnote{The only guarded value with a top-level
polymorphic type is a plain variable $x$ of type $\forall a_1, \dotsc,
a_n. a_i$.
This special case is handled gracefully by FreezeML. }

Plain let bindings ($\Let\; x = M \;\In\; N$) generalise -- subject to the
value restriction -- the principal type $A'$ of $M$ and ascribe it to $x$.
Here, the predicate $\meta{principal}$ is used to determine
the type $A'$, using fresh variables $\omany{a}$.
Note that the free type variable operator $\ftv$ returns a sequence rather than
a set when applied to a type, returning variables in the order of their
appearance.
This reflects the fact that the order of quantifiers matters in FreezeML.

If $M$ is a guarded value, the type $A$ of $x$ is then $\forall \omany{a}.A'$,
performing the actual generalisation step.
This is achieved using the $\Updownarrow$ auxiliary judgement that enforces the value restriction, which we return
to shortly.

Generalising the principal type $A'$ rather than an arbitrary type of $M$ is
necessary to ensure the existence of principal types in the overall
system~\cite{EmrichLSCC20}.
First, recall that a principal type of an expression $M$ with
respect to $\Delta$ and $\Gamma$ is formally a \emph{pair} $(\Delta',A')$ such that
$\meta{principal}(\Delta,\Gamma,M,\Delta',A')$ hold.  Generalisation then quantifies the variables in
$\Delta'$ yielding a type $\forall a_1,\ldots,a_n.A$ where $a_1,\ldots,a_n$ are
in the order in which the variables first appear in $A'$.
Now consider the term
\[ \Let\; f = \lambda x.x \;\In\; \freeze{f}
\]
which is an example of FreezeML's ``explicit generalisation'' operation
$\$V \equiv \Let\; f = V \;\In\; \freeze{f}$ and allows capturing the
generalised principal type of a value.  The principal type of $\lambda x.x$ is
$a \to a$ (provided $a \not\in \Delta$), and by generalising this we
obtain $\forall a. a \to a$ as the type of $f$, which then becomes the type of
the overall let term due to $f$ being frozen in its body.  Note that the type
$\forall a. a \to a$ is \emph{not} the principal type of $\lambda x. x$ (formally: no pair $(\Delta'',\forall a. a \to a)$ is a principal type).
If the typing rule permitted assigning other, non-principal, types to $\lambda x.x$, such as
$\Int \to \Int$, then generalisation would have no effect.
This would make $\Int \to \Int$ another possible type of the overall let term
(as freezing a variable with a guarded or monomorphic type has no effect).
However, this would mean that the overall let term has no principal type.
The two types $\Int \to \Int$ and $\forall a. a \to a$ don't have a shared
more general type in FreezeML, as discussed earlier.

As mentioned before, the auxiliary judgement
$(\Delta, \omany{a}, M, A') \Updownarrow A$ enforces the value restriction.
Given $\Delta, \many{a} \vdash M : A'$, the judgement determines $A$ to be
$\forall \omany{a}.A'$ if $M$ is a guarded value.
Otherwise, $A$ is obtained from $A'$ by instantiating all of $\many{a}$
with \emph{monotypes}.

{
\vspace{8pt}
\raggedright
$\boxed{(\Delta, \omany{a}, M, A') \Updownarrow A}$
\begin{mathpar}
\inferrule
  {M \in \dec{GVal}}
  {(\Delta, \omany{a}, M, A') \Updownarrow \forall \omany{a}. A'}

\inferrule
  {\Delta' = \many{a} \\
   \Delta \vdash \rsubst : \Delta' \Rightarrow_\mono \cdot \\ M \not\in \dec{GVal}}
  {(\Delta, \omany{a}, M, A') \Updownarrow \rsubst(A')}
\end{mathpar}
}

As is well-known, type inference for System F is undecidable, even with nontrivial restrictions~\cite{Pfenning93,Wells94}.
The condition to instantiate monomorphically is one of several design choices in  FreezeML's to keep type inference decidable and tractable.
%
Along with the monomorphic restriction on the arguments to plain lambda
abstractions, FreezeML ensures that polymorphism can only ever appear in the term
context if it was written explicitly by a programmer in a type annotation or
inferred as a principal type of a plain let binding.

\paragraph{Annotated let bindings.}
\label{part:definition-of-split-function}

Annotated let bindings ($\Let\; (x : A) = M \;\In\; N$) also generalise, subject
to the value restriction, but ascribe the type $A$ to $x$.
The splitting operation $\meta{split}(A, M)$ enforces the value restriction for
annotated let terms.
It decomposes $A$ into a collection of top-level quantifiers and another type.
The first component of the returned pair is maximal if $M$ is a guarded value
and empty otherwise due to the value restriction.
\begin{mathpar}
\bl
\meta{split}(\forall \omany{a}.H, M) =
\left\{
\ba{@{~}l@{}@{\quad}l}
  (\many{a}, H)              & \text{if $M \in \dec{GVal}$} \\
  (\cdot,\forall \omany{a}.H) & \text{otherwise}
\ea
\right.
\el
\end{mathpar}
It is also important to note that in the generalising case (i.e. when the
let-bound expression is a guarded value $U$), the top-level quantifiers in type
annotations are in scope and can be used in $U$ (e.g. in other type
annotations). This is reflected in the $\meta{split}$ operation which returns
these variables in its first argument. In contrast, in the non-generalising case
where $M$ is not a guarded value, these variables are not in scope in $M$.
Since $M$'s type is not being generalised, the only way it can end up with a
polymorphic type is by referencing (frozen) variables with polymorphic types.

Note that this scoping behaviour also needs to be reflected in the term
well-formedness judgement $\Delta;\Gamma \vdash \wf{M}$ mentioned earlier.
To this end, the well-formedness rule for annotated let bindings also uses the
$\meta{split}$ operation, as shown in \cref{fig:term-wellformedness}.
The judgement $\Delta;\Gamma \vdash \wf{M}$ only requires the
presence of a binding for all free term variables, but ignores the associated
types.
As a result, the rules for unannotated lambda functions and let bindings add
arbitrary types $A$ to the the term context.

\begin{figure}[htp]
\raggedright
$\boxed{\Delta;\Gamma \vdash \wf{M}}$
\begin{mathpar}
  \inferrule
    {x \in \Gamma}
    {\Delta;\Gamma \vdash \wf{\freeze{x}}}

  \inferrule
    {x \in \Gamma}
    {\Delta;\Gamma \vdash \wf{x}}

  \inferrule
    {\Delta;(\Gamma, x : A) \vdash \wf{M}}
    {\Delta;\Gamma \vdash \wf{\lambda x.M}}

  \inferrule
    {\Delta \vdash \wf{A} \\\\
     \Delta;(\Gamma, x : A) \vdash \wf{M}
    }
    {\Delta;\Gamma \vdash \wf{\lambda (x : A).M }}

  \inferrule
    {\Delta;\Gamma \vdash \wf{M} \\\\
     \Delta;\Gamma \vdash \wf{N}
    }
    {\Delta;\Gamma \vdash \wf{M\,N}}

  \inferrule
    {\Delta;\Gamma \vdash \wf{M } \\
     \Delta;(\Gamma, x : A) \vdash \wf{N}
    }
    {\Delta;\Gamma \vdash \wf{\Let \; x = M\; \In \; N }}

  \inferrule
    {\Delta \vdash \wf{A}  \\
     (\Delta', A') = \msplit(A, M) \\
     (\Delta, \Delta');\Gamma \vdash \wf{M} \\
     \Delta;(\Gamma, x :A) \vdash \wf{N}
    }
    {\Delta;\Gamma \vdash \wf{\Let \; (x : A) = M\; \In \; N}}
\end{mathpar}
\caption{Well-formedness of terms.}
\label{fig:term-wellformedness}
\end{figure}




\section{Constraint language}
\label{section:language}
In this section, we present the constraint language and a function for generating
typing constraints from terms.
%
%
%
%
Following \citet{PottierR05}, our constraint language uses both term
variables and type variables.
Following \citet{EmrichLSCC20}, we distinguish rigid and flexible
type variables. The former arise in the object language from universal
quantification. The latter are used to represent unknown types.

The syntax and satisfiability judgement for constraints is given
in \cref{paper-fig:constraints-semantics}.
The judgement $\Delta; \Xi; \Gamma; \delta \vdash C$ states that in rigid type
context $\Delta$, flexible type context $\Xi$, term context $\Gamma$, using
instantiation $\rsubst$, the constraint $C$ is satisfied.
Note that rigid and flexible type contexts follow the same grammar, but we use
the convention that $\Delta$ is used for rigid variables, whereas $\Xi$ contains
flexible ones.
Therefore, using both environments in the judgement allows us to distinguish the
flexible variables in scope from those that are rigid.
In the judgement $\Delta; \Xi; \Gamma; \rsubst \vdash C$ we implicitly assume
that the term environment $\Gamma$ is well-formed and contains no flexible
variables ($\Delta \vdash \wf{\Gamma}$) and that type instantiations close over
the flexible type variables ($\Delta \vdash \rsubst
: \Xi \Rightarrow_\poly \cdot$).

To support composition of constraints we start with the always true constraint
($\meta{true}$) and conjunction ($C_1 \wedge C_2$).
The equality constraint $A \ceq B$ asserts that $A$ and $B$ are equivalent.
The frozen constraint $\freeze{x : A}$ asserts that $x$ has type $A$.
The instance constraint $x \preceq A$ asserts that top-level quantifiers of $x$'s
type can be instantiated to yield $A$.
The universal constraint $\forall a.C$ binds rigid type variable $a$ in $C$.
The existential constraint $\exists a.C$ binds flexible type
variable $a$ in $C$.
Monomorphism constraints $\monoc{}(a)$ assert that the flexible variable $a$
must only be instantiated with monotypes.
The definition constraint $\Def\; (x : A) \;\In\; C$ binds term variable $x$ in
$C$. (It also includes a side-constraint which we will return to shortly.)
The polymorphic let constraint $\Let_\poly\; x = \letexists{a} C_1 \;\In\; C_2$
and monomorphic let constraint $\Let_\mono\; x = \letexists{a} C_1 \;\In\; C_2$
are used to bind $x$ in $C_2$, subject to the restrictions imposed on $a$ in $C_1$.
The two forms differ in how the type of $x$ is obtained from solving $C_1$ for $a$:
either by generalisation $(\poly)$ or monomorphic instantiation $(\mono$).
These constraints are somewhat involved, so we defer a full explanation until we
present the constraint-generation function.

We consider constraints equivalent modulo alpha-renaming of all binders, of both
type and term variables.




\begin{figure}[tb!]
\[
\bl
\ba[t]{@{}l@{\quad}r@{~}c@{~}l@{}}
&C &::= &
                                        \dec{true}
                                   \mid C \wedge C
                                   \mid A \ceq B
                                   \mid \freeze{x : A}
                                   \mid x \preceq A
                                   \mid \forall a. C
                                   \mid \exists a.C
                                  \mid \monoc{}(a)
\\
                                 &&\mid&\Def\; (x : A) \;\In\; C
                                 \mid \Let_\poly\; x = \letexists{a} C \;\In\; C
                                   \mid \Let_\mono\; x = \letexists{a} C \;\In\; C\\
\ea \\
\el
\]
\begin{mathpar}

\inferrule[\lab{Sem-True}]
  { } {\Delta; \Xi; \Gamma; \rsubst \vdash \dec{true}}

\inferrule[\lab{Sem-And}]
  {
    \Delta; \Xi; \Gamma; \rsubst \vdash C_1 \\
    \Delta; \Xi; \Gamma; \rsubst \vdash C_2 \\
  }
  {\Delta; \Xi; \Gamma; \rsubst \vdash C_1 \wedge C_2}

\inferrule[\lab{Sem-Equiv}]
  {
    (\Delta, \Xi) \vdash_\poly \wf{A} \\
    (\Delta, \Xi) \vdash_\poly \wf{B} \\\\
    \rsubst(A) = \rsubst(B)
  }
  {\Delta; \Xi; \Gamma; \rsubst \vdash A \ceq B}
\\
\inferrule[\lab{Sem-Freeze}]
  {
    \Gamma(x) = \rsubst(A)
  }
  {\Delta; \Xi; \Gamma; \rsubst \vdash \freeze{x : A}}

\inferrule[\lab{Sem-Instance}]
  {
    \Delta' = \many{a} \\
    \Delta \vdash \rsubst' : \Delta' \Rightarrow_\poly \emptydelta \\\\
    \Gamma(x) = \forall \omany{a}.H \\
    \rsubst'(H) = \rsubst(A)
  }
  {\Delta; \Xi; \Gamma; \rsubst \vdash x \preceq A}
\\
\inferrule[\lab{Sem-Forall}]
  {
    (\Delta, a); \Xi; \Gamma; \rsubst \vdash C
  }
  {\Delta; \Xi; \Gamma; \rsubst \vdash \forall a.C}

\inferrule[\lab{Sem-Exists}]
  {
    \Delta; (\Xi, a ); \Gamma; \rsubst[a \mapsto A] \vdash C
  }
  {\Delta; \Xi; \Gamma; \rsubst \vdash \exists a.C}
\\
\inferrule[\lab{Sem-Mono}]
  {
    \Delta \vdash_\mono \wf{\rsubst(a)}
  }
  {\Delta; \Xi; \Gamma; \rsubst \vdash \monoc(a)}

  \inferrule[\lab{Sem-Def}]
  {
    \text{for all } a \in \ftv(A) - \Delta \;\mid\;  \Delta; \Xi; \Gamma; \rsubst \vdash \monoc(a) \\\\
    \Delta; \Xi; (\Gamma,x :  \rsubst A); \rsubst \vdash C
  }
  {\Delta; \Xi; \Gamma; \rsubst \vdash \Def\; (x : A) \;\In\; C}
\\
\inferrule[\lab{Sem-LetPoly}]
  {
    \meta{mostgen}(\Delta, (\Xi, a), \Gamma, C_1, \Delta_\mathrm{m}, \rsubst_\mathrm{m}) \\\\
    \Delta_\mathrm{o} = \ftv(\rsubst_\mathrm{m}(\Xi)) - \Delta \\
    \omany{b} = \ftv(\rsubst_\mathrm{m}(a)) - \Delta, \Delta_\mathrm{o} \\\\
    \Delta \vdash \rsubst' : \Delta_\mathrm{o} \Rightarrow_\mono \emptydelta \\
    A = \rsubst'(\rsubst_\mathrm{m}(a)) \\\\
    (\Delta, \many{b}); (\Xi, a); \Gamma; \rsubst[a \mapsto A] \vdash C_1 \\
    \Delta; \Xi; (\Gamma,x : \forall \omany{b}. A); \rsubst \vdash C_2
  }
  {\Delta; \Xi; \Gamma; \rsubst \vdash \Let_\poly\; x = \letexists{a} C_1 \;\In\; C_2}

\inferrule[\lab{Sem-LetMono}]
  {
    \meta{mostgen}(\Delta, (\Xi, a), \Gamma, C_1, \Delta_\mathrm{m}, \rsubst_\mathrm{m}) \\\\
    \Delta \vdash \rsubst' : \Delta_\mathrm{m} \Rightarrow_\mono \emptydelta \\
    A = \rsubst'(\rsubst_\mathrm{m}(a)) \\\\
    \Delta; (\Xi, a); \Gamma; \rsubst[a \mapsto A] \vdash C_1 \\
    \Delta; \Xi; (\Gamma,x : A); \rsubst \vdash C_2
  }
  {\Delta; \Xi; \Gamma; \rsubst \vdash \Let_\mono\; x = \letexists{a} C_1 \;\In\; C_2}
\end{mathpar}

\caption{Satisfiability judgement for constraints.}
\label{paper-fig:constraints-semantics}
\end{figure}

\subsection{Constraint generation}
\label{paragraph:term-to-constraint-translation}
We now introduce the function $\congen{M : A}$, which translates a term $M$ and type
$A$ to a constraint.
The only free type variables in the resulting constraint are those appearing in
$A$ and type annotations in $M$.
Assuming that $M$ is well-formed under $\Delta$ and $\Gamma$
($\Delta; \Gamma \vdash \wf{M}$) and that $A$ is well-formed under
$\Delta, \Xi$, the constraint $\congen{M : A}$ is well-formed under
$\Delta, \Xi$ and $\Gamma$ ($\Delta; \Xi; \Gamma \vdash \wf{\congen{M : A}}$).
The latter judgement is given in \cref{fig:constraint-wellformedness}.
Note that this judgement ignores the types in $\Gamma$ and uses it to track
bound term variables, just like the well-formedness judgement on terms
introduced in \cref{section:freezeml}.

\begin{figure}
\begin{mathpar}

\inferrule
  { }
  {\Delta; \Xi; \Gamma \vdash \wf{\dec{true}}}

\inferrule
  { a \in (\Delta, \Xi)}
  {\Delta; \Xi; \Gamma \vdash \wf{\monoc(a)}}

\inferrule
  {\Delta; \Xi; \Gamma \vdash \wf{C_1}\\
   \Delta; \Xi; \Gamma \vdash \wf{C_2}}
  {\Delta; \Xi; \Gamma \vdash \wf{C_1 \wedge C_2}}

\inferrule
  {\Delta; (\Xi, a); \Gamma \vdash \wf{C}}
  {\Delta; \Xi; \Gamma \vdash \wf{\exists a. C}}

\inferrule
  {(\Delta, a); \Xi; \Gamma \vdash \wf{C}}
  {\Delta; \Xi; \Gamma \vdash \wf{\forall a. C}}

\inferrule
  {(\Delta, \Xi) \vdash \wf{A}\\
   (\Delta, \Xi) \vdash \wf{B}}
  {\Delta; \Xi; \Gamma \vdash \wf{A \ceq B}}

\inferrule
  {x \in \Gamma\\
   (\Delta, \Xi) \vdash \wf{A}}
  {\Delta; \Xi; \Gamma \vdash \wf{x \preceq A}}

\inferrule
  {x \in \Gamma\\
   (\Delta, \Xi) \vdash \wf{A}}
  {\Delta; \Xi; \Gamma \vdash \wf{\freeze{x : A}}}

\inferrule
  {(\Delta, \Xi) \vdash \wf{A} \\
   \Delta; \Xi; (\Gamma, x : A) \vdash \wf{C}}
  {\Delta; \Xi; \Gamma \vdash \wf{\Def\; (x : A) \;\In\; C}}

\inferrule
  {\Delta;  (\Xi, a); \Gamma \vdash \wf{C_1}\\\\
   \Delta; \Xi; (\Gamma, x : A) \vdash \wf{C_2}}
  {\Delta; \Xi; \Gamma \vdash \wf{\Let_R \; x = \letexists{a} C_1 \;\In\; C_2}}

\end{mathpar}

\caption{Well-formedness of constraints.}
\label{fig:constraint-wellformedness}
\end{figure}

If $\Xi$ is empty (i.e., $A$ contains no flexible variables) then this
constraint is satisfiable in context $\Delta; \Gamma$ if and only if $M$ has type $A$
in context $\Delta; \Gamma$.
However, if $A$ does contain flexible variables, then the models of
$\congen{M : A}$ are exactly those that instantiate $A$ to valid types of $M$.
We formalise these properties in
\cref{section:constraint-generation-metatheory}.
Concretely, we perform type inference for $M$ by choosing $A$ to be a single
flexible variable.

%
%
%
\begin{figure}[htp]
\small
\renewcommand{\parlab}[1]{}


\[
\ba{@{}r@{~\;}c@{~\;}l@{}l@{}r@{}}
\congen{\freeze{x} : A}                     &=& \freeze{x : A} && \parlab{C-Freeze} \\
\congen{x : A}                              &=& x \preceq A && \parlab{C-Inst} \\
\congen{M\,N : A}                           &=&
 \exists a_1. (\congen{M : a_1 \to A} \wedge \congen{N : a_1} )&& \parlab{C-App} \\

\congen{\lambda x.M : A}                    &=&
 \exists a_1, a_2 . (a_1 \to a_2 \ceq A \wedge \Def\; (x : a_1)\; \In \;\congen{M : a_2}) & \parlab{C-Abs} \\

\congen{\lambda (x : B).M : A}              &=&
 \exists a_1. B \to a_1 \ceq A \wedge \Def\; (x: B) \;\In\; \congen{N : a_1} & \parlab{C-AbsAscribe} \\

\congen{\Let \; (x : \forall \omany{a}. H) = U\; \In \; N : A} &=&
 (\forall \omany{a} . \congen{U : H}) \;\wedge\; \Def\; (x: \forall \omany{a}. H) \;\In\; \congen{N : A}
 & \parlab{C-LetValAscribe} \\

\congen{\Let \; (x : B) = M\; \In \; N : A} &=&
  \congen{M : B} \;\wedge\; \Def\; (x: B) \;\In\; \congen{N : A}
 & \;\text{(if $M \not\in \dec{GVal}$)}
 & \parlab{C-LetAscribe} \\

\congen{\Let \; x = U\; \In \; N : A}       &=&
  \Let_\poly \; x = \letexists{a} \congen{U : a} \; \In \;\congen{N : A} && \parlab{C-LetVal} \\

\congen{\Let \; x = M\; \In \; N : A}       &=&
 \Let_\mono \; x = \letexists{a}\congen{M : a} \; \In \;\congen{N : A}
   \qquad\qquad\qquad\qquad&\; \text{(if $M \not\in \dec{GVal}$)}
   & \parlab{C-Let} \\

\ea
\]
\caption{Translation from terms to constraints.}
\label{fig:translation}
\end{figure}

The function $\congen{-}$ is defined in \cref{fig:translation}.

%



Frozen variables and plain variables generate the corresponding atomic constraints.
An application generates an existential constraint that binds a fresh flexible
type variable for the argument type.
A plain lambda abstraction generates a constraint that binds fresh flexible type
variables for argument and return types and uses a definition constraint to bind
the argument in the constraints generated for the body of the lambda
abstraction.
An annotated lambda abstraction generates a similar constraint to a plain lambda
abstraction, but the argument type is fixed by the type annotation.
The remaining four cases of $\congen{-}$ account for the four different
combinations arising from the two choices between plain or annotated and between
guarded value or not.
An annotated let binding $\Let\; (x : B) = M \;\In\; N$ generates a conjunction of
constraints: one for $M$ and the other for $N$.
Following the definition of $\meta{split}$ in the \lab{LetAnn} rule in
\cref{fig:freezeml-typing}, if $M$ is a guarded value $U$ then its type can be
generalised to obtain $B$ as witnessed by the universal constraints.
Notice in particular that the quantified type variables introduced in the
annotation are in scope in $U$ in the sub-constraint
$\forall \omany{a}.\congen{U:H}$.
Otherwise the types must match on the nose without any generalisation, and in
this case the quantified variables are not in scope in $M$.

\subsection{Def constraints}\label{sec:def-constraints}

The side condition in the \lab{Sem-Def} rule ensures that the argument type can
only be instantiated with a monomorphic type.
In general, the side condition preserves the invariant that no undetermined (or
``guessed'') polymorphism exists in the term context $\Gamma$.
This is crucial to ensure the existence of most general solutions for our
constraint language.
Consider the constraint $\Def\; (x : a) \;\In\; x \preceq b \wedge c \ceq (a \to
b)$ with free flexible variables $a,b,c$.
Without the extra condition on def constraints, different solutions could for
instance include $c \mapsto (\Int \to \Int)$ or $c \mapsto ((\forall
a. a) \to \Int)$ for $c$.
However, there is no more general solution subsuming both.
Note that the monomorphism condition on def constraints does not impose the
type annotation to be monomorphic itself, it only imposes conditions on free
flexible variables appearing within it.
Consequently, the constraint $\Def\; (x : (\forall a.a) \to b) \;\In\; \true$ is
satisfiable as long as the flexible variable $b$ is instantiated
monomorphically.
We cannot avoid the monomorphism condition imposed on def constraints simply
by using $\monoc$ constraints.
The constraint $\monoc(b){} \wedge \Def\;f (x : (\forall a.a) \to
b) \;\In\; \true$ would be equivalent to the previous one in terms of its
solutions, even if we dropped the monomorphism condition built into def
constraints.
However, this system would exhibit the same lack of most general solutions for
def constraints discussed earlier, showing that the monomorphism condition needs
to be imposed on def constraints directly.


%



\subsection{Let constraints}
Plain let bindings are translated to let constraints.
A plain let binding of a guarded value $\Let\; x = U \;\In\; N$ generates a
polymorphic let constraint.
In general, such a polymorphic let constraint
$\Let_{\poly}\; x = \letexists{a} C_1 \;\In\; C_2$ binds the flexible variable
$a$ in $C_1$, much like an existential constraint.
The type assigned to $x$ in $C_2$ is then obtained by generalising type
variables appearing in the solution for $a$.

We make several observations motivating the overall semantics of let
constraints.

\paragraph{Need to generalise principal solution.}
We first observe that let \emph{constraints} require a principality condition
similar to the one imposed on plain let \emph{terms}.
Consider the constraint $C$ defined as
$\Let\; x = \letexists{a} \exists b. a \ceq (b \to b) \;\In\; \freeze{x : c}$,
appearing in a rigid context $\Delta$. It has a single free type variable $c$
and we refer to its first subconstraint (i.e., $\exists b. a \ceq (b \to b)$) as
$C_1$ in the following.
Allowing arbitrary solutions for $a$ in $C_1$ to be generalised to yield the
type for $x$ would lead to the following pathological situation.

For any well-formed type $A$, $[a \mapsto (A \to A)]$ is a model of $C_1$.
As usual, we must not generalise any type variables already bound in the
surrounding scope, namely those variables in $\Delta$.
However, we may generalise fresh variables appearing in $A$.
This means that if we choose $[a \mapsto (\Int \to \Int)]$ there is nothing to
generalise and we have $x : (\Int \to \Int)$ in $C_2$ , whereas
$[a \mapsto (b' \to b')]$ for some fresh $b'$ does allow us to generalise, meaning
that we have $x : (\forall b'. b' \to b')$ in $C_2$.
Any solution of the overall constraint must use the type of $x$ for $c$.
This leads to a problem very similar to the one discussed for let terms in
\cref{part:plain-let-terms}: the two solutions $[c \mapsto (\Int \to \Int)]$ and
$[c \mapsto (\forall b'. b' \to b')]$ of $C$ would have no shared most general
solution in our system.
%

We avoid this problem by demanding that only the most general solution for $a$
in $C_1$ must be generalised to yield the type for $x$ in $C_2$.
In our example, this means choosing $[a \mapsto (b' \to b')]$, where $b'$ is
fresh, which means that in our example only
$[c \mapsto (\forall b'. b' \to b')]$ is a valid solution of the overall let
constraint.

The rule $\lab{Sem-LetPoly}$ in \cref{paper-fig:constraints-semantics} enforces
this using the premise
$\meta{mostgen}(\Delta, (\Xi, a), \Gamma, C_1, \Delta_\mathrm{m}, \rsubst_\mathrm{m})$,
which asserts that $\rsubst_\mathrm{m}$ is the most general model of $C_1$ in
the context $\Delta; \Xi; \Gamma$.
Here, $\Delta_\mathrm{m}$ contains fresh variables that are used in place of
flexible type variables for which no further substitution/solution is currently
known.
Note that this premise (and subsequently, $\delta_\mathrm{m}$) is independent
from the ambient instantiation $\rsubst$; the latter does not appear as an
argument.
The predicate $\meta{mostgen}$ is defined as follows, stating that
$\rsubst_\mathrm{m}$ is a model of $C_1$ and every other one can be obtained by
refining $\rsubst_\mathrm{m}$ by composition.
\[
\bl
\meta{mostgen}(\Delta, \Xi, \Gamma, C, \Delta_\mathrm{m}, \rsubst_\mathrm{m}) = \\
\qquad
(\Delta, \Delta_\mathrm{m}); \Xi; \Gamma; \rsubst_\mathrm{m} \vdash C \;\text{ and }\;\\
\qquad\bl
(\text{for all }\Delta'', \rsubst'' \mid
       \bl
       \text{if }
       (\Delta, \Delta''); \Xi; \Gamma; \rsubst'' \vdash C \\
       \text{then there exists }
       \rsubst'
       \text{ such that } \\
       \;\Delta  \vdash \rsubst' : \Delta_\mathrm{m} \Rightarrow_\poly \Delta''
       \text{ and }
       \rsubst' \comp \rsubst_\mathrm{m} = \rsubst'' )\\
       \el \\
\el
\el
\]

The rule $\lab{Sem-LetPoly}$ then defines two subsets\footnote{%
In general, $\Delta_\mathrm{m}$ may contain useless variables not appearing in
the codomain of $\rsubst_\mathrm{m}$. Otherwise, if all variables in
$\Delta_\mathrm{m}$ appear in the range of $\rsubst_\mathrm{m}$, then
$\Delta_\mathrm{m}$ and $\omany{b}$ denote a partitioning of
$\Delta_\mathrm{m}$.
} of $\Delta_\mathrm{m}$:
The variables in $\Delta_\mathrm{o}$ are those appearing in the range of
$\rsubst_\mathrm{m}$ restricted to $\Xi$ (i.e., not considering the mapping for
$a$ in $\rsubst_\mathrm{m}$).
This means that the variables in $\Delta_\mathrm{o}$ are related to the
\emph{outer} context, namely by being part of the instantiations of the
variables $\Xi$ in the surrounding scope.

The rule then determines the variables $\omany{b}$ to be generalised as the
flexible ones appearing freely in $\rsubst_\mathrm{m}(a)$ (i.e., the most
general solution for $a$) and disregarding the variables from
$\Delta_\mathrm{o}$, as the latter variables are related to the outer scope
$\Xi$.

\paragraph{Safe interaction with outer scope}

We have discussed that the rule $\lab{Sem-LetPoly}$ forces solutions for
constraints of the form $\Let_{\poly}\; x = \letexists{a} C_1 \;\In\; C_2$ to
use the most general solution for $a$ in $C_1$ -- using fresh rigid variables
$\Delta_\mathrm{m}$ -- and quantifying over
variables $\omany{b} \subseteq \Delta_\mathrm{m}$ to obtain the type for $x$ in
$C_2$.

We now show how flexible variables from the outer scope that appear in $C_1$ may
influence the type of $x$ and how we prevent this from introducing undetermined
polymorphism in the term context.
Consider the constraint $\exists a. \Let_{\poly}\; x = \letexists{b} a \ceq
b \;\In\; C_2$ appearing in rigid context $\Delta$.
The semantics of $\exists$ constraints (cf.\ \lab{Sem-Exists} in
\cref{paper-fig:constraints-semantics}) necessitates choosing a type $B$ for $a$ such
that $\Delta \vdash \wf{B}$.
The first subconstraint of the let constraint then equates $a$ and $b$, making
any kind of generalisation impossible when determining the type of $x$ (i.e.,
the type of $x$ is just $B$ without further quantification).
However, this means that the choice of $B$ influences the polymorphism of $x$,
meaning that the constraint above may introduce undetermined polymorphism in the
term context if arbitrarily polymorphic types were permitted for $a$.
Thus we must restrict the possible choices for $B$.
The rule $\lab{Sem-LetPoly}$ does so by imposing a relationship between the most
general solution $\rsubst_\mathrm{m}(a)$ for $a$ and the type $A$ actually
chosen as the instantiation of $a$.
In our example above, each most general solution $\rsubst_\mathrm{m}$ of $C_1$
has the form $[a \mapsto c, b \mapsto c]$, where $c \in \Delta_\mathrm{m}$.
Therefore, we have $\Delta_\mathrm{o} = c$ and $\omany{b}$ is empty.
The rule $\lab{Sem-LetPoly}$ then imposes that the actual type $A$ for $a$
results from monomorphically instantiating all non-generalisable variables in
$\rsubst_\mathrm{m}(a)$ (namely, $\Delta_\mathrm{o}$).
In the example above, this means that $a$ (and therefore also $b$) must be
instantiated with a monotype.
Observe that in general, $\delta'(\rsubst_\mathrm{m}(a))$ may not be a feasible
choice for $A$ for any well-formed monomorphic instantiation $\delta'$.
Consider the constraint:
\[
\exists a. a \ceq (\Int \to \Int) \wedge \left(\Let_{\poly}\; x = \letexists{b} \exists c. a \ceq b \wedge a \ceq (c \to c) \;\In\; C_2\right)
\]
Here, $\delta'(\rsubst_\mathrm{m}(a))$ may yield any type of the form $S \to S$
(recall that $S$ denotes monotypes).
However, the premise
$(\Delta, \many{b}); (\Xi, a); \Gamma; \rsubst[a \mapsto A] \vdash C_1$
of
$\lab{Sem-LetPoly}$
forces $A$ to be compatible with any prior choices made by the ambient instantiation $\rsubst$.
In our examples, this ensures that $A = (\Int \to \Int)$.

\paragraph{Monomorphic let constraints}

To accommodate the value restriction, the function $\congen{-}$ translates a
plain let binding of a term $M$ that is not a guarded value,
$\Let\; x = M \;\In\; N$, to a monomorphic let constraint of the form
$\Let_\mono\; x = \letexists{a} C_1 \;\In\; C_2$.
The only difference between a polymorphic let constraint and a monomorphic one
is that all variables that would be generalised by the former are instantiated
monomorphically by the latter.

The rule $\lab{Sem-LetMono}$ in \cref{paper-fig:constraints-semantics} achieves
this by instantiating all of $\Delta_\mathrm{m}$ monomorphically to obtain $A$
from $\rsubst_\mathrm{m}(a)$.
An equivalent, yet slightly more verbose version of $\lab{Sem-LetMono}$
highlighting the symmetry between $\lab{Sem-LetPoly}$ and $\lab{Sem-LetMono}$
could define $\Delta_\mathrm{o}$ and $\omany{b}$ just like the former rule, and then
impose
$\Delta \vdash \rsubst' : (\Delta_\mathrm{o}, \omany{b}) \Rightarrow_\mono \emptydelta$.
Observe that the variables in
$\Delta_\mathrm{m} - (\Delta_\mathrm{o}, \omany{b})$ are irrelevant in
$\lab{Sem-LetPoly}$.

\subsection{Metatheory}
\label{section:constraint-generation-metatheory}
We can now formalise the relationship between terms and the constraints obtained
from them.
Firstly, if $M$ has type $A$, then $\congen{M : a}$ is satisfiable by a
substitution that maps $a$ to $A$.
\begin{restatable}[Constraint generation is sound with respect to the typing judgement]{thm}{constraintgenerationcompleteness}
\label{theorem:constraint-generation-soundness}
Let $\Delta; \Gamma \vdash M : A$ and $a \disjoint \Delta$.
Then
$\Delta; a; \Gamma; [a \mapsto A] \vdash \congen{M : a}$
holds.
\end{restatable}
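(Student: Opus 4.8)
The plan is to prove the statement by induction on the structure of the term $M$; as the remark on abuse of notation already indicates, rule induction on the typing derivation is unavailable because typing occurs negatively inside $\meta{principal}$, so we induct on the untyped term instead. The statement as given is too weak to serve as an induction hypothesis, since the recursive calls of $\congen{-}$ (e.g.\ $\congen{M : a_1 \to A}$ in the application case, or $\congen{U : H}$ in an annotated value-let) target compound types rather than a single fresh variable. I would therefore strengthen it to: whenever $\Delta; \Gamma \vdash M : B$ and $\Xi, A, \rsubst$ satisfy $\Delta \vdash \rsubst : \Xi \Rightarrow_\poly \emptydelta$, $(\Delta, \Xi) \vdash_\poly \wf{A}$ and $\rsubst(A) = B$, then $\Delta; \Xi; \Gamma; \rsubst \vdash \congen{M : A}$. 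The theorem is the instance $\Xi := a$, $A := a$, $\rsubst := [a \mapsto A]$. Absorbing an arbitrary closing $\rsubst$ into the hypothesis also means I never need a separate weakening lemma: every recursive application simply extends $\Xi$ and $\rsubst$ with the freshly existentially-bound variables.

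The variable, application and abstraction cases are then discharged directly against the matching semantic rule. For \lab{VarFrozen} and \lab{VarPlain} I read off \lab{Sem-Freeze} and \lab{Sem-Instance}, reusing the very instantiation $\rsubst$ from \lab{VarPlain} as the witness in \lab{Sem-Instance}. For \lab{App} I apply \lab{Sem-Exists} with the argument type as witness, then \lab{Sem-And} and the two induction hypotheses, checking that $\rsubst(A) = B$ makes the target $a_1 \to A$ evaluate to the required function type. The two lambda cases use \lab{Sem-Exists}, \lab{Sem-Equiv} and \lab{Sem-Def}; here the monomorphism side-condition of \lab{Sem-Def} is exactly accounted for, in the plain case by the monotype domain $S$ demanded by \lab{LamPlain} (so $\monoc(a_1)$ holds), and in the annotated case vacuously, since an annotation is well-formed over $\Delta$ and hence contributes no variables to $\ftv(\cdot) - \Delta$. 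The two annotated-let cases are similar: \lab{Sem-And} splits off a sub-constraint for the bound expression, for which I invoke the induction hypothesis with the annotation as target (using $\rsubst(\forall\omany{a}.H) = \forall\omany{a}.H$ because annotations carry no flexible variables), and a \lab{Sem-Def} whose side-condition is again vacuous; in the guarded-value case the quantifiers stripped by $\msplit$ are reintroduced by iterated \lab{Sem-Forall}, matching the $\Delta, \many{a}$ context in which \lab{LetAnn} types $U$.

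The real work is in the two \emph{plain} let cases, which must discharge \lab{Sem-LetPoly} (for a guarded value $U$) and \lab{Sem-LetMono} (for a non-value $M$). Both rules carry a $\meta{mostgen}$ premise, so I must actually \emph{exhibit} a most general model $\rsubst_\mathrm{m}$ of the sub-constraint $\congen{U : a}$ and then verify the generalisation/instantiation bookkeeping. The guiding observation is that $\congen{U : a}$ mentions no variable of the outer scope $\Xi$ (its only free type variables are $a$ and the rigid annotation variables), so any most general model must send each variable of $\Xi$ to a distinct fresh rigid variable; these are precisely the variables collected in $\Delta_\mathrm{o}$, and they are disjoint from the generalisable variables $\omany{b}$. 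Writing $A'$ and $\omany{c} = \ftv(A') - \Delta$ for the principal type and its fresh variables supplied by the \lab{LetPlain} derivation (the $\omany{a}$ of the figure, renamed to avoid the constraint variable $a$), I would argue that $\rsubst_\mathrm{m}(a)$ is a renaming of $A'$ with $\omany{b}$ matching $\omany{c}$. Because the monomorphising $\rsubst'$ touches only $\Delta_\mathrm{o}$, which is disjoint from $\ftv(A')$, the reconstructed type $\rsubst'(\rsubst_\mathrm{m}(a))$ equals $A'$, so $\forall\omany{b}.\,\rsubst'(\rsubst_\mathrm{m}(a))$ is exactly the generalised type $\forall\omany{c}.A'$ that \lab{LetPlain} ascribes to $x$ via $\Updownarrow$; the induction hypothesis for $N$ then applies unchanged. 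The monomorphic case is identical except that $\rsubst'$ monomorphises all of $\Delta_\mathrm{m}$, which I align with the monotype instantiation chosen by the non-generalising branch of $\Updownarrow$.

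The main obstacle is thus the single step I glossed over: establishing that a most general model of $\congen{U : a}$ exists and that its solution for $a$ coincides with the FreezeML principal type $A'$ from \lab{LetPlain}. This is a correspondence between \emph{principal types of terms} and \emph{most general models of generated constraints}, and it is exactly where the $\meta{principal}$ side-condition of \lab{LetPlain} is consumed. I expect this correspondence to require its own lemma — plausibly proved together with the completeness (converse) direction, since matching principality on both sides is what ties most general models to principal types — and to be the most delicate part of the development, with the subtleties around $\Delta_\mathrm{o}$, the freshness of the $\Xi$-renamings, and their disjointness from $\omany{b}$ all needing care.
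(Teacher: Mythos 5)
Your proposal is correct and takes essentially the same route as the paper: structural induction on $M$, with the plain-let cases reduced to exactly the principal-type/most-general-model correspondence you identify as the crux (the paper's Lemma~\ref{lemma:aux:principal-iff-mostgen}), which is indeed proved by mutual induction with the completeness theorem, and with the outer scope handled by mapping its variables to fresh rigid variables forming $\Delta_\mathrm{o}$ that are then monomorphised harmlessly, just as you describe. The only divergence is a minor technical one: where you strengthen the induction hypothesis to arbitrary target types $A$ with $\rsubst(A)=B$, the paper keeps the single-variable statement and handles compound targets via a fresh variable plus a substitution lemma --- both work.
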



Secondly, if a constraint $\congen{M : a}$ is satisfied using an instantiation
$\rsubst$, then $\rsubst(a)$ is a valid type for $M$.
\begin{restatable}[Constraint generation is complete with respect to the typing judgement]{thm}{constraintgenerationsoundness}
\label{theorem:constraint-generation-completeness}
If $\Delta;\Gamma \vdash \wf{M}$ and $\Delta; a; \Gamma; \rsubst \vdash \congen{M : a}$, then $\Delta; \Gamma
\vdash M : \rsubst(a)$.
\end{restatable}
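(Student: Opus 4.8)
The plan is to prove a slightly more general statement by induction on the structure of $M$ (equivalently, on the definition of $\congen{-}$), since by the Remark the typing relation is not an inductive definition and $\congen{-}$ recurses on terms. The needed generalisation replaces the single flexible target variable by an arbitrary well-formed type and an arbitrary flexible context: I would show that whenever $\Delta; \Gamma \vdash \wf{M}$, $(\Delta, \Xi) \vdash \wf{A}$, $\Delta \vdash \rsubst : \Xi \Rightarrow_\poly \cdot$, and $\Delta; \Xi; \Gamma; \rsubst \vdash \congen{M : A}$, then $\Delta; \Gamma \vdash M : \rsubst(A)$. The theorem is the instance $\Xi = a$, $A = a$. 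Recall that $\Gamma$ carries no flexible variables, so $\rsubst(A)$ and the types added to $\Gamma$ by \lab{Sem-Def}/\lab{Sem-LetPoly} are always well-formed over $\Delta$, keeping the conclusion meaningful.

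The easy cases (\lab{C-Freeze}, \lab{C-Inst}, \lab{C-App}, \lab{C-Abs}, \lab{C-AbsAscribe}, and the two annotated-let cases) follow by inverting the relevant satisfiability rules, applying the induction hypothesis to the immediate subterms, and assembling the matching typing rule. For instance, in \lab{C-App} I invert \lab{Sem-Exists} and \lab{Sem-And} to obtain a witness type for the fresh variable $a_1$, apply the hypothesis to $M$ at $a_1 \to A$ and to $N$ at $a_1$, and conclude with \freezemlLab{App}; freshness of $a_1$ gives $\rsubst[a_1 \mapsto B](A) = \rsubst(A)$. In \lab{C-Abs} the \lab{Sem-Def} side condition forces the witness for $a_1$ to be a monotype, exactly as \freezemlLab{LamPlain} demands; the annotated cases use that annotation types contain only rigid variables, so $\rsubst$ fixes them and the \lab{Sem-Def} side condition is vacuous. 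Each step needs routine well-formedness bookkeeping and a weakening lemma stating that satisfiability is preserved when $\Xi$ (and $\rsubst$) is extended by variables not free in the constraint.

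The genuinely hard cases are the plain-let cases, where the constraint-level premise $\meta{mostgen}(\Delta, (\Xi, a), \Gamma, \congen{U : a}, \Delta_\mathrm{m}, \rsubst_\mathrm{m})$ must be turned into the term-level side condition $\meta{principal}(\Delta, \Gamma, U, \omany{b}, A)$ of \freezemlLab{LetPlain}. My plan here is threefold. First, because $\Gamma$ is flexible-free and annotations are rigid, the only free flexible variable of $\congen{U : a}$ is $a$, so $\congen{U : a}$ does not constrain $\Xi$; hence a most general model sends $\Xi$ to fresh rigid variables $\Delta_\mathrm{o}$ that do not occur in $\rsubst_\mathrm{m}(a)$. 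Consequently $\omany{b} = \ftv(\rsubst_\mathrm{m}(a)) - \Delta$ and the monomorphising instantiation $\rsubst'$ in \lab{Sem-LetPoly} acts trivially on $\rsubst_\mathrm{m}(a)$, so $A = \rsubst_\mathrm{m}(a)$ and the type given to $x$ is exactly $\forall \omany{b}.A$. Second, applying the hypothesis to the premise $(\Delta, \many{b}); (\Xi, a); \Gamma; \rsubst[a \mapsto A] \vdash \congen{U : a}$ yields $(\Delta, \omany{b}); \Gamma \vdash U : A$, which is both the positive conjunct of $\meta{principal}$ and, since $U \in \dec{GVal}$, the generalising branch of $\Updownarrow$. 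Third — the crux — for the minimality conjunct I take an arbitrary typing $(\Delta, \Delta''); \Gamma \vdash U : A''$, feed it through the soundness theorem (\Cref{theorem:constraint-generation-soundness}) to obtain a model of $\congen{U : a}$ sending $a$ to $A''$ (extending harmlessly over the unused $\Xi$), and invoke the most-generality clause of $\meta{mostgen}$ to get $\rsubst_g$ with $\Delta \vdash \rsubst_g : \Delta_\mathrm{m} \Rightarrow_\poly \Delta''$ and $\rsubst_g \comp \rsubst_\mathrm{m}$ extending $[a \mapsto A'']$; restricting $\rsubst_g$ to $\omany{b}$ witnesses $\rsubst_g(A) = A''$. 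The body $N$ is discharged by a direct appeal to the hypothesis.

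The monomorphic-let case \lab{C-Let} is analogous via \lab{Sem-LetMono}: here $\rsubst'$ monomorphises all of $\Delta_\mathrm{m}$, so $A$ is the monomorphic instantiation of the principal type matching the non-generalising branch of $\Updownarrow$. Premise three of \freezemlLab{LetPlain}, namely $(\Delta, \omany{b}); \Gamma \vdash M : \rsubst_\mathrm{m}(a)$, is now obtained not from the chosen-instantiation premise but by applying the hypothesis to the model premise of $\meta{mostgen}$, i.e.\ $(\Delta, \Delta_\mathrm{m}); (\Xi, a); \Gamma; \rsubst_\mathrm{m} \vdash \congen{M : a}$, followed by a strengthening lemma discarding the variables of $\Delta_\mathrm{m}$ not occurring in $\rsubst_\mathrm{m}(a)$. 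I expect the effort to concentrate on these let cases: precisely characterising the fresh variables produced by $\meta{mostgen}$, verifying that the outer-scope variables $\Delta_\mathrm{o}$ are genuinely absent from $\rsubst_\mathrm{m}(a)$ for translation-generated constraints, and marrying the most-generality clause to $\meta{principal}$ through the soundness theorem. The remaining cases and the supporting weakening, strengthening, and instantiation-composition lemmas should be routine.
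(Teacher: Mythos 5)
Your proposal is correct and follows essentially the same route as the paper: the same generalisation to an arbitrary well-formed target type $A$ and flexible context $\Xi$, induction on $M$, the same observation that $\congen{U:a}$ leaves $\Xi$ unconstrained so that $\Delta_\mathrm{o}$ is disjoint from $\ftv(\rsubst_\mathrm{m}(a))$ and the monomorphising instantiation acts trivially, and the same bridge from $\meta{mostgen}$ to $\meta{principal}$ via the soundness theorem applied to the arbitrary competing typing. The only cosmetic difference is that you inline this last step where the paper factors it into \Cref{lemma:aux:principal-iff-mostgen}; note that your appeal to \Cref{theorem:constraint-generation-soundness} on the subterm $U$ requires the two theorems to be proved by mutual induction on $M$, exactly as the paper sets them up.
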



Both properties are proved by structural induction on $M$; proof details are provided in the appendix (see supplementary material).

\section{Constraint solving}
\label{section:solving}

We present a stack machine for solving constraints in our language, similar to
the \HMX solver by \citet{PottierR05}.
Our machine is defined in terms of a transition relation on states of the form
$(F, \Theta, \theta, C)$, consisting of a \emph{stack} $F$, a \emph{restriction
context} $\Theta$, a \emph{type substitution} $\theta$, and an \emph{in-progress
constraint} $C$, each of which we elaborate on below.

\paragraph{Stacks}
In a state $(F, \Theta, \theta, C)$, $C$ denotes the constraint to be solved next.

The stack $F$ denotes the context in which $C$ appears, containing bindings for
type variables (rigid and flexible) and term variables that may appear in $C$.
Further, the stack indicates how to continue after $C$ has been solved.
Our stack machine operates on closed states, meaning that the stack contains
bindings for all free variables of $C$.

Formally, stacks are built from stack frames as follows.
\[
\bl
\ba[t]{@{}l@{\quad}r@{~}c@{~}l@{}}
\slab{Frames}                 & f         &::= &
                                        \Box \wedge C
                                   \mid \forall\; a
                                   \mid \exists\; a
                                   \mid \Let_R \; x = \letexists{a} \Box \;\dec{in}\; C
                                   \mid \Def\; (x : A) \\
\slab{Stacks}                  &F         &::= &
                                        \emptystack
                                   \mid F :: f
\ea \\
\el
\]

The different forms of stack frames directly correspond to those constraints
with at least one sub-constraint.
The overall stack can then be seen as a constraint with a hole in which $C$ is
plugged.
We use holes $\Box$ in frames for constraints with two sub-constraints and store
the second sub-constraint to which we must return after solving the first one.

\paragraph{Restriction Contexts and Type Substitutions}

The components $\Theta$ and $\theta$ of a state $(F, \Theta, \theta, C)$ encode
the \emph{unification context}.
Their syntax is defined as follows.
%
\[
\bl
\ba[t]{@{}l@{\quad}r@{~}c@{~}l@{}}
\slab{Restriction Contexts}  &\Theta &::= & \cdot \mid \Theta, a : R \\
\slab{Type Substitutions}  &\theta &::= & \emptyset \mid \theta[a \mapsto A] \\
\slab{States}                 &s          &::= &
                                        (F, \kenv, \subst, C)
\ea \\
\el
\]

The restriction context $\Theta$ contains exactly the flexible variables bound
by the stack $F$ and stores the restriction imposed on each such variable.
Again, restrictions $R$ determine which types a flexible variable may be unified
or instantiated with: monomorphic only ($\mono$) or arbitrary polymorphic types
($\poly$).

Type substitutions $\theta$ are similar to type instantiations $\delta$.
However, they apply only to flexible variables, their codomain may contain
flexible variables, and must respect the restriction imposed on each individual
variable in the domain.
Note that this is in contrast to instantiations, where
$\Delta \vdash \delta : \Delta' \Rightarrow_R \Delta''$ fixes a single restriction $R$
for all variables in the domain of $\delta$.

To this end, we formalise what it means for a type $A$ to obey a restriction $R$
using the judgement $\Delta;\Theta \vdash_R \wf{A}$, shown in
\cref{fig:types-wf-and-substitution-wf}.
Rigid variables are monomorphic.
Flexible variables have their restriction determined by the restriction context.
The restriction of a data type is determined inductively.
A universally quantified type is polymorphic.
Every monomorphic type is also a polymorphic type.
Observe that the well-formedness judgement $\Delta \vdash_R \wf{A}$ used in
\cref{section:language} can now be considered as a shorthand for
$\Delta; \emptykenv \vdash_R \wf{A}$.

We can now formally state what it means for a substitution $\fsubst$ to be
well-formed, mapping flexible variables in $\Theta'$ to well-formed types over
variables from $\Delta, \Theta$ via the judgement $\Delta \vdash \fsubst
: \Theta' \Rightarrow \Theta$, which is also shown in
\cref{fig:types-wf-and-substitution-wf}.
As for substitutions, we additionally require that
$\Theta \disjoint \Delta$ and $ \Theta' \disjoint \Delta$ (however, $\Theta$ and $\Theta'$ need not be disjoint).
In summary, this means that in any solver state, the substitution $\theta$
contains the current knowledge about unification variables, respecting the
restrictions imposed by $\Theta$.

\begin{figure}[htp]
\raggedright
$\boxed{\Delta;\Theta \vdash_R \wf{A}}$
\begin{mathpar}
  \inferrule*
    {a \in \Delta}
    {\Delta;\Theta \vdash_\mono a }

\inferrule*
    {a : R \in \Theta}
    {\Delta;\Theta \vdash_R a }

  \inferrule*
    {\meta{arity}(D) = n \\\\
     \Delta;\Theta \vdash_R \wf{A_1} \\\\ \cdots \\\\ \Delta;\Theta \vdash_R \wf{A_n}}
    {\Delta;\Theta \vdash_R \wf{D \, \omany{A}}}

  \inferrule*
    {(\Delta, a); \Theta \vdash_R \wf{A}}
    {\Delta; \Theta \vdash_\poly \wf{\forall a.A} }

  \inferrule*
    {\Delta; \Theta \vdash_\mono \wf{A}}
    {\Delta; \Theta \vdash_\poly \wf{A}}
\end{mathpar}

\raggedright
$\boxed{\Delta \vdash \theta : \Theta' \Rightarrow \Theta}$
\begin{mathpar}
\inferrule
  { }
  {\Delta \vdash \emptyset : \cdot \Rightarrow \Theta}

\inferrule
  {\Delta \vdash \fsubst : \Theta' \Rightarrow \Theta \\
   \Delta;\Theta \vdash_R \wf{A}
  }
  {\Delta \vdash \fsubst[a \mapsto A] : (\Theta', a : R) \Rightarrow \Theta}
\end{mathpar}

\caption{Well-formedness of types and substitutions.}
\label{fig:types-wf-and-substitution-wf}
\end{figure}

We write $\bv(F)$ and $\btv(F)$ for the term variables and type variables
(flexible or rigid) bound by $F$, respectively.
Moreover, we write $\Deltaof{F}$, $\Xiof{F}$, and $\Gammaof{F}$ for the rigid context,
flexible context, and term context synthesised from a stack $F$, respectively.
The latter operators consider $\forall$ frames ($\Deltaof{F}$), let and
$\exists$ frames ($\Xiof{F}$), and def frames ($\Gammaof{F}$), as shown in
\cref{fig:extraction-operators}.

\begin{figure}[htb]
\[
\ba{cc}

\Deltaof{F} =
\begin{cases}
\emptydelta &\text{if } F = \emptystack \\[2pt]
\Deltaof{F'}, a &\text{if } F = F' :: \forall a \\
\Deltaof{F'}    &\text{otherwise } (F = F' :: \_)
\end{cases}

&\Xiof{F} =
\begin{cases}
\emptydelta &\text{if } F = \emptystack \\
\Xiof{F'}, a &\text{if } \begin{aligned}[t] &F = F' :: \exists a \;\text{ or } \\ &F = F' :: \Let_R\; x = \letexists{a} \Box \;\In\; C_2 \end{aligned}\\
\Xiof{F'}    &\text{otherwise } (F = F' :: \_)   \\
\end{cases} \\[3em]

\multicolumn{2}{c}{
\Gammaof{F} =
\begin{cases}
\emptygamma &\text{if } F = \emptystack \\
\Gammaof{F'}, (x : A)  &\text{if } F = F' :: \Def\; (x : A) \\
\Gammaof{F'}           &\text{otherwise } (F = F' :: \_)   \\
\end{cases}
}
\el
\]
\caption{Extracting components from stacks.}
\label{fig:extraction-operators}
\end{figure}


In order for a state $(F, \Theta, \theta, C)$ to be well-formed
($\vdash \wf{(F, \Theta, \theta, C)}$), we require that
$\Deltaof{F} \vdash \theta : \Theta \Rightarrow \Theta$, that $\theta$ is
idempotent, that $C$ is well-formed ($\Deltaof{F}; \Xiof{F}; \Gammaof{F} \vdash \wf{C}$), and that $F$ is well-formed with respect to
$\Theta$ ($\Theta \vdash \wf{F}$).
The latter judgement is defined in \cref{fig:stack-wellformedness}.
%
%
%
In addition to basic well-formedness conditions on the involved types and
constraints, the judgement $\Theta \vdash \wf{F}$ imposes the following invariants: all type and term
variables bound by $F$ are pairwise disjoint and all free type variables
appearing in annotations on def constraints are monomorphic.
Moreover, $\Theta$ must contain exactly the flexible variables bound by $F$.

\begin{remark}[Idempotent substitutions]
  Our requirement that $\theta$ be idempotent (i.e. $\theta \circ \theta = \theta$) concretely means that each binding $a \mapsto A$ in $\theta$ either maps $a$ to itself (an ``undetermined variable'') or to a type $A$ whose flexible variables are all undetermined.  This has some helpful consequences, for example in the definition of $\meta{partition}$, discussed later (Remark~\ref{rmk:partition}).
\end{remark}

To check the well-formedness of constraints embedded in stack frames, the
corresponding rules of $\Theta \vdash \wf{F}$ in \cref{fig:stack-wellformedness}
synthesise term contexts from the stack under consideration.
As with earlier well-formedness judgements, the judgement $\Theta \vdash \wf{F}$
checks that all term variables are in scope, but ignores the associated types.

\begin{figure}
\input{figures/stack-wellformedness}
\caption{Stack well-formedness.}
\label{fig:stack-wellformedness}
\end{figure}

\subsection{Stack Machine Rules}
{

\def\distarrow{0.35cm}
\def\distlines{12pt}
\def\distwhere{3pt}
\def\distlabel{7.15cm}
\def\wherefont{\footnotesize}
\def\letoutxshift{1.7cm}
\definecolor{wheregray}{RGB}{90,90,90}

\newcounter{rulecounter}
\newcommand{\makestep}[4][0pt]{%
  \stepcounter{rulecounter}

  \node[anchor=east,xshift=#1] (left-\therulecounter) at (left-reference) {\ensuremath{#2}};
  \node[anchor=west,xshift=#1] (right-\therulecounter) at (right-reference) {\ensuremath{#3}};

  \node[right=\distarrow of left-\therulecounter.east,anchor=center] (arrow-\therulecounter)  {$\rightarrow$};

  \node[right=\distlabel of right-\therulecounter.north west, anchor=north west,xshift=-#1] (label-\therulecounter) {$(\lab{#4})$};

  \coordinate[below=\distlines of left-\therulecounter.south east,xshift=-#1] (left-reference);
  \coordinate[xshift=-#1] (right-reference) at (left-reference -| right-\therulecounter.west);

}
\newcommand{\makewhere}[1]{
  \node[below=\distwhere of left-\therulecounter.south west,anchor=north west,xshift=.25cm, font=\wherefont] (where-\therulecounter) {\color{wheregray}#1};

  \coordinate (intersect-left) at (left-reference |- where-\therulecounter.south);
  \coordinate (intersect-right) at (right-reference |- where-\therulecounter.south);

  \coordinate[below=\distlines of intersect-left] (left-reference);
  \coordinate[below=\distlines of intersect-right] (right-reference) ;

}

\newcommand{\inlinewhere}[1]{
{\;\wherefont \color{wheregray} #1}
}

\begin{figure}[t]
\begin{tikzpicture}

\tikzset{every node/.style={inner sep=0pt,font=\small}}

\coordinate (left-reference);
\coordinate[right=2*\distarrow of left-reference] (right-reference);

\makestep{(F, \kenv, \subst, A \ceq B)}{(F, \kenv', \subst' \comp \subst, \true)
\inlinewhere{\text{where} (\kenv', \subst') = \substunifier(\Deltaof{F}, \kenv, \subst A, \subst B)}
}{S-Eq}

\makestep{(F, \kenv, \subst, \freeze{x : A})}{(F, \kenv, \subst, \Gammaof{F}(x) \ceq A)}{S-Freeze}

\makestep{(F, \kenv, \subst, x \preceq A)}
{(F, \kenv, \subst, \exists \omany{a}.H \ceq A)
\inlinewhere{\text{where } \forall \omany{a}.H = \Gammaof{F}(x) \quad
\many{a} \disjoint \btv(F)}
}{S-Inst}

\makestep{(F, \kenv, \subst, \monoc(a))}
{(F, \kenv', \subst, \true)
}{S-Mono}
\makewhere{where
$
\bl
\ba[t]{@{}l}
\omany{b} = \ftv(\theta(a)) - \Deltaof{F} \quad
\Theta' = (\Theta - \many{b}) \cup \wmany{b : \mono} \quad
\Theta' \vdash_{\mono} \wf{\theta(a)}
\ea
\el
$
}

\makestep
{(F, \kenv, \subst, C_1 \wedge C_2)}
{(F :: \Box \wedge C_2, \kenv, \subst, C_1)}{S-ConjPush}

\makestep
{(F :: \Box \wedge C_2, \kenv, \subst, \true)}
{(F, \kenv, \subst, C_2)}{S-ConjPop}

\makestep
{(F, \kenv, \subst, \exists a.\, C)}
{(F :: \exists a, (\kenv, a : \poly ), \subst[ a \mapsto a ], C)}{S-ExistsPush}

\makestep
{(F :: f :: \exists \omany{a}, \kenv, \subst, \true)}
{(F :: \exists \many{c} :: f, \kenv', \restriction{\subst}{\Theta'}, \true)}{S-ExistsLower}
\makewhere{
where
$
\bl
\text{$f$ is neither a $\Let$ or $\exists$ frame} \quad
\many{b};\: \many{c} = \dec{partition}(\many{a}, \theta, \Theta) \quad
\Theta' = \Theta - \many{b} \quad
|\many{a}| > 0
\el
$
}

\makestep
{(F, \kenv, \subst, \forall a. C)}
{(F :: \forall\;a, \kenv, \subst, C)}{S-ForallPush}

\makestep{(F :: \forall\;a, \kenv, \subst, \true)}
{(F, \kenv, \subst, \true)   \inlinewhere{\text{where }   a \not\in \ftv(\theta(\Theta))}}{S-ForallPop}

\makestep{(F, \kenv, \subst, \Def\; (x : A) \;\In\; C)}
{(F :: \Def\; (x : A), \kenv', \subst, C)
}{S-DefPush}
\makewhere{where
$
\bl
\ba[t]{@{}l}
\omany{b} = \ftv(\theta(A)) - \Deltaof{F} \quad
\Theta' = (\Theta - \many{b}) \cup \wmany{b : \mono} \quad
\text{for all } a \in \ftv(A) \mid
\Theta' \vdash_{\mono} \wf{\theta(a)}
\ea
\el
$
}

\makestep
{(F :: \Def\; (x : A), \kenv, \subst, \true)}
{(F, \kenv, \subst, \true)}{S-DefPop}

\makestep
{(F, \kenv, \subst, \Let_R\; x = \letexists{b} C_1 \;\In\; C_2)}
{(F :: \Let_R\; x = \letexists{b} \Box \;\In\; C_2, (\kenv, b : \poly), \subst[ b \mapsto b ], C_1)}{S-LetPush}

\makestep[\letoutxshift]
{(F :: \Let_\poly \; x = \letexists{b} \Box \;\In\; C :: \exists \omany{a}, \kenv, \subst, \true)}
{(F :: \exists \wmany{a''}, \kenv', \restriction{\subst}{\Theta'}, \Def\; (x : B) \;\In\; C)}{S-LetPolyPop}
\makewhere{
where
$
\bl
\wmany{a'};\: \wmany{a''} = \dec{partition}((\many{a}, b), \theta, \Theta) \quad
A = \subst(b) \quad
\omany{c}\: = \ftv(A) \cap \wmany{a'} \quad
\Theta' = \Theta - \wmany{a'} \quad
B = \forall \omany{c}.A
\el
$
}

\makestep[\letoutxshift]
{(F :: \Let_\mono \; x = \letexists{b} \Box \;\In\; C :: \exists \omany{a}, \kenv, \subst, \true)}
{(F :: \exists (\many{c}, \wmany{a''}), \kenv', \restriction{\subst}{\Theta'}, \Def\; (x : A) \;\In\; C)}{S-LetMonoPop}
\makewhere{
where
$
\bl
\wmany{a'};\: \wmany{a''} =\dec{partition}((\many{a}, b), \theta, \Theta) \quad
A = \subst(b) \quad
\omany{c} = \ftv(A) \cap \wmany{a'} \quad
\Theta' = \Theta - (\wmany{a'} - \many{c})
\el
$
}

\end{tikzpicture}

\caption{Constraint solving rules.}
\label{paper-fig:constraints-stack-machine}
\end{figure}
}


We now introduce the rules of the constraint solver itself
(\cref{paper-fig:constraints-stack-machine}).
These rules are deterministic in the sense that at most one
rule applies at any point.
Moreover, after each step the resulting state is unique up to the names of
binders added to the stack
%
%
and the order of adjacent existential frames in the frame (e.g., a step may
yield either $F :: \exists a :: \exists b :: \dots$ or $F :: \exists b
:: \exists a :: \dots$).
A constraint is satisfiable in a given context if the machine reaches a state of the
form $(\forall \Delta :: \exists \Xi, \Theta, \theta, \true)$ from an
initial configuration built from $C$ and the context.
From such a final configuration we can also read off a most general solution for
the constraint.
If the constraint is unsatisfiable, the machine gets stuck before reaching such
a final state.
We formalise the properties of the solver in
\cref{section:constraint-solving-metatheory}.
%
%

\paragraph{Unification}

The rule $\lab{S-Eq}$ in \cref{paper-fig:constraints-stack-machine} handles
equality constraints of the form $A \ceq B$.
We apply $\theta$ to both types as this may refine the types prior to invoking the
unification procedure $\substunifier$.
It remains unchanged as compared to the original type inference system for
FreezeML based on Algorithm W~\cite{EmrichLSCC20}.
\begin{figure}
\newcommand{\mlet}{\text{let}}
\newcommand{\mreturn}{\text{return}}
\newcommand{\massert}{\text{assert}}
\newcommand{\mfresh}{\text{assume fresh}}
\newcommand{\idsubst}{\theta_\mathrm{id}}
\renewcommand{\unify}{\substunifier}
\newenvironment{equations}{\begin{displaymath}\ba{@{}r@{~}c@{~}l@{}}}{\ea\end{displaymath}\ignorespacesafterend}
\raggedright

\[
\ba{c  c}

\bl
\unify(\Delta, \Theta, a, a ) = \\
\quad\mreturn\; (\Theta, \idsubst) \bigskip \\

\unify(\Delta, \Theta, D\,\omany{A}, D\,\omany{B}) = \\
\quad\bl
     \mlet\; (\Theta_1, \fsubst_1) = (\Theta, \idsubst) \\
     \mlet\; n = \dec{arity}(D) \\
     \text{for } i \in 1\ldots n \\
     \quad \mlet\; (\Theta_{i+1}, \fsubst_{i+1}) = \\
     \qquad\bl
           \mlet\; (\Theta', \fsubst') = \unify(\Delta, \Theta_i, \fsubst_i(A_i), \fsubst_i(B_i)) \\
           \mreturn\; (\Theta', \fsubst' \circ \fsubst_i) \\
           \el \\
     \mreturn\; (\Theta_{n+1}, \fsubst_{n+1}) \\
     \el 

\el

&\qquad\bl
\unify(\Delta, (\Theta, a : R), a, A) = \\
%
\unify(\Delta, (\Theta, a : R), A, a) = \\
\quad\bl
     \mlet\;\Theta_1 = \dec{demote}(R, \Theta, \ftv(A) - \Delta) \\
     \massert\;\Delta, \Theta_1 \vdash_R \wf{A} \\
     \mreturn\; (\Theta_1, \idsubst[a \mapsto A]) \\
     \el \bigskip\\

\unify(\Delta, \Theta, \forall a. A, \forall b. B)= \\
\quad\bl
     \mfresh\; c \\
     \mlet\; (\Theta_1, \fsubst') = \unify((\Delta, c), \Theta, A[c/a], B[c/b]) \\
     \massert\;c \notin \ftv(\fsubst') \\
     \mreturn\; (\Theta_1, \fsubst') \\
     \el
\el
\ea
\]

\begin{equations}
\dec{demote}(\poly, \Theta, \Delta)      &=& \Theta \\
\dec{demote}(\mono, \cdot, \Delta)      &=& \cdot \\
\dec{demote}(\mono, (\Theta,a:R), \Delta)       &=&
\dec{demote}(\mono,\Theta,\Delta),a:\mono  \quad (a \in \Delta)\\
\dec{demote}(\mono, (\Theta,a:R), \Delta) &=&
\dec{demote}(\mono, \Theta,\Delta),a:R \quad (a \not\in \Delta)  \\
\end{equations}
\caption{Unification algorithm.}
\label{fig:unifcation-algorithm}
\end{figure}
The unification algorithm is largely standard, supporting unification of
polymorphic types without reordering of quantifiers or the removal/addition of
unneeded quantifiers, as per FreezeML's notion of type equality.
It returns updated versions of the restriction context and substitution, named $\Theta'$ and
$\theta'$.
The algorithm is sound, complete, and yields most general unifiers~\cite[Theorem 4 and 5]{EmrichLSCC20}.

One notable feature is the unification algorithm's treatment of restrictions.
The restriction context $\Theta'$ returned by the algorithm contains the same
flexible variables as the original $\Theta$, but some variables therein may have
been demoted from a polymorphic restriction to a monomorphic one.
Unifying a flexible, monomorphic variable $a$ with a type $A$ only succeeds if
making all free flexible variables in $A$ monomorphic makes $A$ itself
monomorphic.
Therefore, assuming $a : \mono, b : \poly \in \Theta$, unifying $a$ with
$b \to b$ yields $(b : \mono) \in \Theta'$, whereas unification of $a$ with
$\forall c. c \to c)$ fails.

The unification algorithm is shown in \cref{fig:unifcation-algorithm}. On each
invocation, the first applicable clause is used; $\theta_\mathrm{id}$ denotes
the identity substitution on $\Theta$.

\paragraph{Basic constraints.}
The rules for constraints $\freeze{x : A}$ and $x \preceq A$ yield corresponding
equality constraints.
For instantiation constraints, the solver instantiates all top-level quantifiers $\omany{a}$
of $x$'s type by existentially quantifying them.
Note that the rule imposes picking variables $\omany{a}$ that are fresh with
respect to the bound type variables (rigid and flexible) of $F$.
In both rules, $\Gammaof{F}$ denotes the term context synthesised from all
$\Def$ constraints in $F$.

A monomorphism constraint $\monoc{}(a)$ is handled by demoting all flexible variables
in $\theta(a)$.
The step fails if doing so does not make $\theta(a)$ a monomorphic type.
Note that demoting the involved restrictions means that later unification steps
cannot make $a$ polymorphic --- even if, say, $\theta(a) = a$ holds at the time
of applying $\lab{S-Mono}$, recording $(a : \mono)$ in $\Theta'$ ensures that it
stays monomorphic.

The rules $\lab{S-ConjPush}$ and $\lab{S-ConjPop}$ handle conjunctions.
When encountering $C_1 \wedge C_2$, the first rule pushes a corresponding frame
on the stack.
Once $C_1$ is solved and the state's in-progress constraint becomes $\true$, the latter
rule pops this frame from the stack and continues solving $C_2$.

\paragraph{Binding of type variables.}

When encountering $\exists a. C$ or $\forall a. C$, a corresponding frame is
added by the rules $\lab{S-ExistsPush}$ and $\lab{S-ForallPush}$, respectively.

In general, once the in-progress constraint in a state becomes $\true$ and the
topmost stack frame binds a flexible variable $a$, the binding frame is either
moved downwards in the stack, generalised when handling $\Let$ constraints, or
dropped if no variable further down in the stack depends on $a$.
Note that lowering binders of flexible variables in the stack this way
effectively increases the syntactic scope of the bound variable as it moves
outwards in the constraint representing the stack.
The rule $\lab{S-ExistsLower}$ implements part of this lowering mechanism; it acts on stacks of the
form $F :: f :: \exists \omany{a}$, a shorthand for
$F :: f :: \exists a_0 :: \dots :: \exists a_n$.
The rule requires that $f$ is neither a let frame (whose rules treat adjacent
existentials directly) or another existential frame (to make the rule
deterministic by making $\omany{a}$ exhaustive).
It uses the helper function $\meta{partition}$, which returns a tuple and is defined
as follows.
\label{inline-def:partition}
\[
\dec{partition}(\Xi, \theta, \Theta) = \Xi' ; \Xi'' \text{ where }
\Xi', \Xi'' = \Xi \text{ and }
\text{for all } a \in \Xi:
  a \in \Xi'' \text{ iff } a \in \ftv(\restriction{\theta}{(\ftv(\Theta) - \Xi)})
\]
It partitions $\Xi$ into two sets $\Xi'$ and $\Xi''$ such that the latter
contains exactly those variables appearing in the range of $\theta$ restricted
to the flexible variables bound further down in the stack (i.e., $\theta$
restricted to $\ftv(\Theta) - \Xi$).
This formalises the notion that no variable further down in the stack depends on
a variable in $\Xi'$.
Therefore, the bindings for $\Xi'$ can simply be removed altogether; the
bindings for $\Xi''$ must be kept and are lowered within the stack.

\begin{remark}[Idempotence and $\meta{partition}$]\label{rmk:partition}
  If we did not require substitutions to be idempotent, the existing definitions
  would break in subtle ways (or need to be made more complicated).  For
  example,. let $a, b, c \in \Theta$ and $\Xi = \{a, b\}$ and
  $\theta = [a \mapsto a, b \mapsto a, c \mapsto b]$.  Note that this $\theta$
  is not idempotent since $\theta(\theta(c))) = a \neq b = \theta(c)$.  The
  definition of $\meta{partition}(\Xi, \theta, \Theta)$ would then yield
  $\Xi' = \{a\}$ and $\Xi'' = \{b\}$, even though $c$ depends on both variables.
\end{remark}

When popping a frame $\forall a$ from the stack, the rule $\lab{S-ForallPop}$
checks that $a$ does not escape its scope by still being present in the range of
$\theta$.
Note that together with the lowering of existentials mentioned before, removing
the unneeded variables $\many{b}$ in $\lab{S-ExistsLower}$ is not simply an
optimisation, but necessary for completeness.
Consider the state $(F :: \forall a :: \exists b, \Theta, \theta, \true)$ where
$\theta(b) = a$.
If this variable was lowered by $\lab{S-ExistsLower}$ -- yielding a new state
$(F :: \exists b :: \forall a, \Theta, \theta, \true)$ -- instead of being
removed, this would cause $\lab{S-ForallPop}$ to erroneously detect an escaping
quantifier.

\paragraph{Binding of term variables}

Similarly to the other $\lab{*Push}$ rules,
$\lab{S-DefPush}$ moves a constraint $\Def\; (x : A) \;\In\; C$ to the stack and
makes $C$ the next in-progress constraint.
However, it also forces all flexible variables found in $\theta(A)$ to be
monomorphic and checks that doing so does not make the substitution ill-formed.
This ill-formedness would arise if the substitution maps one of the type
variables to be monomorphised to a polymorphic type.
The monomorphisation is crucial in order to maintain the invariant that the term
context does not contain unknown polymorphism in the form of unrestricted (i.e.,
polymorphic) unification variables.
Note that the checks performed by \lab{S-DefPush} are equivalent to adding
$\bigwedge_{a \in \ftv(A) - \Deltaof{F} } \; \monoc(a)$ as a conjunct to $C$,
but doing so may create an ill-formed intermediate state before failing when
solving one of the $\monoc{}$ constraints.
The rule $\lab{S-DefPop}$ is the counterpart of $\lab{S-DefPush}$ and simply
pops the $\Def$ frame.

$\lab{S-LetPush}$ handles constraints
$\Let_{R}\; x = \letexists{b} C_{1} \;\In\; C_{2}$ by adding a stack frame and
bringing $b$ into scope while solving $C_{1}$.
Once $C_{1}$ has been solved, the rules $\lab{S-LetPolyPop}$ and
$\lab{S-LetMonoPop}$ handle the different semantics of $\Let_{\mono}$ and
$\Let_{\poly}$ regarding how they determine the type of $x$.
We first consider the former rule.
Note that the rule is applicable with zero or more existential frames on top of
the let frame, binding $\many{a}$, followed by the actual let frame.
These existential frames are either the result of existential constraints at the
top-level of the original constraint $C_1$ (the first subconstraint of the $\Let$
constraint under consideration), or were lowered while solving $C_{1}$.

Similarly to $\lab{S-ExistsLower}$, the variables $\many{a}$ and $b$ are
partitioned by the $\meta{partition}$ function into $\wmany{a'}$ and
$\wmany{a''}$.
Note that we include $b$ here because
$\Let_{R}\; x = \letexists{b} C_{1} \;\In\; C_{2}$ binds $b$ existentially in
$C_{1}$.
By definition of $\meta{partition}$, we again have that no unification variable
bound below the $\Let$ frame depends on any of the variables in $\wmany{a'}$ (as
indicated by $a$ not appearing in the image of $\theta$ restricted to the
variables bound in the lower frames).
Similarly to $\lab{S-ExistsLower}$, the variables $\wmany{a''}$ must be preserved
and are lowered in the stack.
Note that $\wmany{a''}$ may or may not contain $b$.

The type $A$ for $x$ is then determined by generalising $\theta(b)$.
The variables $\omany{c}$ to be generalised are obtained from taking those
free type variables of $\theta(b)$ that also appear in $\wmany{a'}$.
Recall that $\ftv$ applied to a type yields an ordered sequence, and we assume
that the ordering is preserved under intersection.

By definition, $\wmany{a'}$ contains those variables from $\many{a}, b$ that do
not appear in the codomain of $\theta$ restricted to the variables from lower
stack frames (i.e. no such variable from a lower stack frame directly depends
on a variable in $\wmany{a'}$).
Nevertheless, there may be variables $a \in \wmany{a'}$ such that
$\theta(a) = B$ where $B$ contains (or is) a variable from a lower frame (i.e.
from $\Theta - \many{a}, b$), meaning that $a$ must not be generalised.
However, due to the idempotency of $\theta$, we have that $\theta(a) = a$ for
all $a \in \theta(b)$.
In other words, intersecting $\wmany{a'}$ with the free type variables of
$\theta(b)$ evokes that $\omany{c}$ only contains ``undetermined'' variables
mapped to themselves that are not referenced by lower stack frames, either.

Note that rewriting the let frame to a def constraint also evokes that
solving the latter monomorphises any flexible variables in $A$ that were not
generalised.
This reflects the monomorphic instantiation imposed in the semantics of
$\Let_\poly$ constraints (cf.\ $\delta'$ in \lab{Sem-PolyPop} in
\cref{paper-fig:constraints-semantics}).

The only difference between $\lab{S-LetPolyPop}$ and $\lab{S-LetMonoPop}$ is
that while the latter rule also determines the variables $\many{c}$, it does not
generalise them.
This means that the resulting type $A$ assigned to $x$ contains the unification
variables $\many{c}$ freely.
Therefore, these variables must kept in scope and are existentially quantified
further down in stack after the rule is applied.

\subsection{Metatheory}
\label{section:constraint-solving-metatheory}

Our goal is to state a preservation property along the lines that stepping from state $s_0$
to $s_1$ implies that some representation of $s_0$ as a constraint is
equivalent to $s_1$'s constraint representation.
To this end, we first define how to represent the unification context, comprising
$\Theta$ and $\theta$, as a constraint.
Given $\Theta$ and $\theta$, we define:
\[
\ba{r@{\;\;} c@{\;\;} l}
\mathfrak{U}(\Theta) &= &\bigwedge_{(a\, : \,\mono) \in \Theta}\; \monoc(a)  \\[6pt]
\mathfrak{U}(\theta) &= &\bigwedge_{a \in \ftv(\Theta)}\; a \ceq \theta(a) \\[6pt]
\mathfrak{U}(\Theta, \theta) &= &\mathfrak{U}(\Theta) \wedge \mathfrak{U}(\theta)
\ea
\]

Using $\mathfrak{U}$, we may now represent a state $(F, \Theta, \theta, C)$ as
$F[C \wedge \mathfrak{U}(\Theta, \theta)]$, where the $F[-]$ operator plugs a
constraint into the stack's innermost hole:
{
\[
\bl
\ba[t]{@{}r@{\quad}c@{\quad}l@{\quad}r@{}}

\emptystack[C] &=& C \\

(F :: \Box \wedge C_2)[C_1] &=& F[C_1 \wedge C_2] \\

(F :: \forall\; a)[C] &=& F[\forall a. C] \\

(F :: \exists\; a)[C] &=& F[\exists a. C] \\

(F :: \Let_R\, x = \letexists{a} \Box \;\dec{in}\; C_2 )[C_1] &=&
F[\Let_R\, x = \letexists{a} C_1 \;\In\; C_2] \\

(F :: \Def\; (x : A) )[C] &=& F[\Def\; (x : A) \;\In\; C]

\ea
\el
\]
}

Note that if the state is closed (i.e., $F$ binds all variables free in $C$) the
resulting constraint $F[C]$ is closed, too.
In order to reason about constraints that are satisfied by non-empty
instantiations, we assume that there are some rigid and flexible contexts
$\Delta$ and $\Xi$ quantified by the bottom-most stack frames that remain
unchanged by the step.
Therefore, we consider the satisfiability of constraints before and after the
step by an instantiation $\rsubst$ with
$\Delta \vdash \rsubst : \Xi \Rightarrow_\poly \emptydelta$.

\begin{restatable}[Preservation]{thm}{subjectreduction}
\label{theorem:preservation}
If $\wf{(\forall \Delta :: \exists \Xi :: F_0, \kenv_0, \subst_0, C_0)}$ and
\[
   \premissLabel{premiss:initial-state-wf}
  (\forall \Delta :: \exists \Xi :: F_0, \kenv_0, \subst_0, C_0) \to (\forall \Delta :: \exists \Xi :: F_1, \kenv_1,
   \subst_1, C_1)
\]
then
\[
 \Delta; \Xi; \emptygamma; \rsubst \vdash F_0[C_0 \wedge \mathfrak{U}(\Theta_0, \theta_0)] \;\text{ iff }\;
 \Delta; \Xi; \emptygamma; \rsubst  \vdash F_1[C_1 \wedge \mathfrak{U}(\Theta_1, \theta_1)]
\]
\end{restatable}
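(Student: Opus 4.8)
The plan is to prove the biconditional by case analysis on the transition rule that justifies the step in the hypothesis. Every rule rewrites only the topmost frames and leaves a common lower prefix of the stack untouched; moreover the outer $\forall \Delta :: \exists \Xi$ is part of that prefix. The first ingredient is a \emph{congruence lemma} for the plugging operator $F[-]$: unfolding the semantic rules $\lab{Sem-And}$, $\lab{Sem-Exists}$, $\lab{Sem-Forall}$, $\lab{Sem-Def}$ and $\lab{Sem-LetPoly}$/$\lab{Sem-LetMono}$ shows that $\Delta;\Xi;\emptygamma;\rsubst \vdash F[C]$ decomposes into satisfiability of $C$ in the synthesised context $\Deltaof{F};\Xiof{F};\Gammaof{F}$ under an appropriate extension of $\rsubst$ (universal for $\forall$/$\Def$ frames, existential for $\exists$/$\Let$ frames). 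Hence it suffices to prove, for each rule, a \emph{local} equivalence of the inner constraints relative to the shared prefix. I would also record two facts about $\mathfrak{U}$ to be used throughout: a model $\rsubst$ satisfies $\mathfrak{U}(\theta)$ exactly when $\rsubst \comp \theta = \rsubst$ on the flexible variables, so that models of $\mathfrak{U}(\theta)$ are precisely the instantiations refining $\theta$; and $\mathfrak{U}(\Theta)$ contributes a $\monoc(a)$ conjunct for each $a : \mono$ in $\Theta$ and nothing for $a : \poly$.

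For the structural rules ($\lab{S-ConjPush}$/$\lab{S-ConjPop}$, $\lab{S-ExistsPush}$, $\lab{S-ForallPush}$, $\lab{S-LetPush}$) the local equivalence reduces to scope extrusion together with the observation that each freshly bound flexible variable is initialised to $a \mapsto a$ at restriction $\poly$, contributing only the trivially true conjunct $a \ceq a$ and no monomorphism constraint. The genuinely computational case is $\lab{S-Eq}$, where the whole argument is carried by correctness of the unifier $\substunifier$ (soundness, completeness, and most-generality, \cite[Theorems 4 and 5]{EmrichLSCC20}): using the first $\mathfrak{U}$-fact, a model of $\mathfrak{U}(\Theta,\theta)$ satisfies $A \ceq B$ iff it unifies $\theta A$ and $\theta B$, and the unifier theorem says the set of such models respecting $\Theta$ coincides with the instantiations refining $\theta' \comp \theta$ and respecting $\Theta'$, where $(\Theta',\theta') = \substunifier(\Deltaof{F},\Theta,\theta A,\theta B)$; the demotion of restrictions inside $\substunifier$ is mirrored by the change from $\mathfrak{U}(\Theta)$ to $\mathfrak{U}(\Theta')$. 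The atomic rules $\lab{S-Freeze}$ and $\lab{S-Inst}$ rewrite to equality constraints, which I would validate by comparing $\lab{Sem-Freeze}$/$\lab{Sem-Instance}$ against $\lab{Sem-Equiv}$, noting that the synthesised term context stores annotations to which the ambient instantiation is applied in both the original and rewritten judgements. Finally $\lab{S-Mono}$ and $\lab{S-DefPush}$ follow from the second $\mathfrak{U}$-fact: demoting the free variables of $\theta(a)$ to $\mono$ inserts precisely the $\monoc$ conjuncts equivalent to discharging $\monoc(a)$, respectively to satisfying the $\lab{Sem-Def}$ side condition.

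I expect the main obstacle to be $\lab{S-LetPolyPop}$ and $\lab{S-LetMonoPop}$, because the semantics of $\Let_\poly$/$\Let_\mono$ is stated declaratively through the $\meta{mostgen}$ predicate and an explicit generalisation, whereas the machine performs generalisation operationally via $\meta{partition}$. The heart of the argument is to show that the inner constraint $\exists\omany{a}.(\true \wedge \mathfrak{U}(\Theta,\theta))$ has, as its most general model, exactly the substitution obtained from $\theta$ by renaming to fresh rigid variables $\Delta_\mathrm{m}$ those flexible variables on which no lower frame depends --- which is precisely the set $\wmany{a'}$ returned by $\meta{partition}((\many{a},b),\theta,\Theta)$ (the role of idempotence here is the one flagged in Remark~\ref{rmk:partition}). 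Granting this, I would match the pieces of $\lab{Sem-LetPoly}$ to the machine data: the generalisable variables $\omany{b}$ of the rule become $\omany{c} = \ftv(\theta(b)) \cap \wmany{a'}$, and idempotence of $\theta$ (so $\theta(a)=a$ for every $a \in \ftv(\theta(b))$) guarantees these are genuinely undetermined and safe to quantify, giving $B = \forall\omany{c}.A$ with $A = \theta(b)$; the outer variables $\Delta_\mathrm{o}$ and the monomorphic instantiation $\rsubst'$ of the rule are realised later, when the rewritten $\Def\;(x : B)$ frame is solved and $\lab{S-DefPush}$ monomorphises the ungeneralised flexible variables of $A$. The premise $(\Delta,\many{b});(\Xi,a);\Gamma;\rsubst[a\mapsto A] \vdash C_1$ of $\lab{Sem-LetPoly}$, which forces $A$ to be compatible with the ambient $\rsubst$, is exactly what the retained lowered existentials $\wmany{a''}$ together with $\restriction{\theta}{\Theta'}$ provide after the step. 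For $\lab{S-LetMonoPop}$ the same correspondence applies, except the would-be generalised variables $\many{c}$ are kept and re-bound existentially, matching $\lab{Sem-LetMono}$'s monomorphic instantiation of all of $\Delta_\mathrm{m}$.

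The hardest single point is therefore establishing the ``most general model equals $\theta$ up to renaming of $\wmany{a'}$'' correspondence and tracking the split of $\many{a},b$ into generalised ($\omany{c}$), lowered ($\wmany{a''}$), and discarded variables consistently on both sides of the biconditional. Both directions of the iff must be discharged for each rule, but the unifier correctness (for $\lab{S-Eq}$) and the $\meta{mostgen}$ characterisation (for the let rules) are inherently bidirectional and so supply both directions simultaneously, while the remaining rules are symmetric rewrites whose converse is immediate. Throughout, the well-formedness hypothesis on the initial state --- in particular idempotence of $\theta_0$ and the monomorphism invariant on $\Def$-frame annotations --- is what justifies that the computed restrictions, the partition, and the generalisation are well defined.
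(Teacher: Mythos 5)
Your proposal follows essentially the same route as the paper's proof: case analysis on the transition rules, a congruence lemma for the plugging operator $F[-]$ (the paper's Lemma~\ref{lemma:aux:plugged-constraint-same-stack-irrelevant-sat}), the characterisation of models of $\mathfrak{U}(\Theta,\theta)$ as restriction-respecting refinements of $\theta$ (Lemma~\ref{lemma:satisfying-U}), and — for the let-pop cases, which the paper also treats as the crux — the identification of the most general model of $\exists\omany{a}.\mathfrak{U}(\Theta,\theta)$ with $\theta$ post-composed with a renaming of the undetermined variables to fresh rigid ones (Lemma~\ref{lemma:mostgen-iff-mapping-free-to-fresh}), matched against $\meta{partition}$ and the idempotence invariant exactly as the paper does. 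The only slight imprecision is your remark that $\Delta_\mathrm{o}$'s monomorphic instantiation is ``realised later'' by $\lab{S-DefPush}$: in the preservation argument itself the correspondence must be discharged immediately via the $\lab{Sem-Def}$ side condition on the rewritten constraint, which is how the paper (and, elsewhere in your own sketch, you) handle it.
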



This preservation property is inspired by a similar one holding for
\HMX~\cite[Lemma~10.6.9]{PottierR05}.

The following progress property states that given a well-formed, non-final state
whose representation as a formula is satisfiable, the stack machine can take a
step.
\begin{restatable}[Progress]{thm}{progress}
\label{theorem:progress}
Let $\wf{(F, \kenv, \subst, C)}$ and $F[C] \neq \forall \Delta. \exists \Xi. \true$ for
all $\Delta, \Xi$.
Further, let
$\emptydelta; \emptykenv;  \emptygamma ;\emptysubst \vdash F[C \wedge \mathfrak{U}(\Theta, \theta)]$.
Then there exists a state $s_1$ such that $(F, \kenv, \Gamma, \subst,
C) \to s_1$.
\end{restatable}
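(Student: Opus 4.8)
The plan is to prove Progress by a single case analysis that first splits on whether the in-progress constraint $C$ equals $\true$, and then, in each branch, exhibits the unique applicable transition and discharges its side conditions using either state well-formedness or the satisfiability hypothesis $\emptydelta; \emptykenv; \emptygamma; \emptysubst \vdash F[C \wedge \mathfrak{U}(\Theta,\theta)]$. Throughout I will use the fact (from well-formedness of the state) that $\Theta$ lists exactly the flexible variables bound by $F$, that $\theta$ is idempotent with $\Deltaof{F} \vdash \theta : \Theta \Rightarrow \Theta$, and that $C$ is well-formed in $\Deltaof{F}; \Xiof{F}; \Gammaof{F}$.

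First, suppose $C \neq \true$, so $C$ has a definite top-level form, each handled by exactly one rule. For the purely structural cases $C_1 \wedge C_2$ ($\lab{S-ConjPush}$), $\exists a.C'$ ($\lab{S-ExistsPush}$), $\forall a.C'$ ($\lab{S-ForallPush}$), and $\Let_R\; x = \letexists{a} C_1 \;\In\; C_2$ ($\lab{S-LetPush}$), the corresponding rule has no failing side condition and a step always exists. For $\freeze{x:A}$ ($\lab{S-Freeze}$) and $x \preceq A$ ($\lab{S-Inst}$) the only requirement is that $\Gammaof{F}(x)$ be defined (and, for $\lab{S-Inst}$, that fresh $\many{a}$ be chosen); the former follows from well-formedness of $C$, which forces $x \in \Gammaof{F}$, and fresh variables are always available. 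The three remaining subcases need the satisfiability hypothesis: for $A \ceq B$ ($\lab{S-Eq}$) I must show $\substunifier(\Deltaof{F}, \Theta, \theta A, \theta B)$ does not fail, and for $\monoc(a)$ ($\lab{S-Mono}$) and $\Def\; (x:A) \;\In\; C'$ ($\lab{S-DefPush}$) I must show the demotion of the relevant flexible variables yields a monomorphic type.

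Second, suppose $C = \true$. If $F = \emptystack$ then $F[C] = \true = \forall\cdot.\exists\cdot.\true$, contradicting non-finality, so $F$ has a top frame $f$. If $f = \Box \wedge C_2$, $f = \Def\; (x:A)$, or $f = \forall a$, apply $\lab{S-ConjPop}$, $\lab{S-DefPop}$, or $\lab{S-ForallPop}$ respectively. If $f$ is an existential frame, group the maximal block $\exists \many{a}$ of adjacent existentials on top; the frame $g$ immediately below must exist, since otherwise the stack is the final form $\forall\cdot :: \exists\many{a}$, again contradicting non-finality. If $g$ is a let frame we apply $\lab{S-LetPolyPop}$ or $\lab{S-LetMonoPop}$ (whose partition, intersection, and generalisation steps are all total, hence never fail), and otherwise $\lab{S-ExistsLower}$ applies (with $|\many{a}| > 0$). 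If instead $f$ is itself a let frame, we apply the corresponding let-pop rule with an empty existential block. The only nontrivial side condition in this branch is the escape check $a \not\in \ftv(\theta(\Theta))$ of $\lab{S-ForallPop}$, which I will again derive from satisfiability.

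The satisfiability-driven arguments share a common engine: the conjunct $\mathfrak{U}(\Theta,\theta)$ forces any satisfying instantiation $\rsubst$ to agree with $\theta$ (each equation $a \ceq \theta(a)$ gives $\rsubst(a) = \rsubst(\theta(a))$) and to respect the monomorphism markings in $\Theta$. For $\lab{S-Mono}$ and $\lab{S-DefPush}$, rule $\lab{Sem-Mono}$ then yields that $\rsubst(\theta(a))$ is monomorphic, from which the success of demotion ($\Theta' \vdash_\mono \wf{\theta(a)}$) follows. For $\lab{S-ForallPop}$, if the $\forall a$-bound variable escaped, so that $\theta(b) = a$ for some flexible $b \in \Theta$ whose binder lies outside the $\forall a$ frame, then the conjunct $b \ceq a$ would appear under the $\forall a$ binder and force $\rsubst'(b) = a$ for a value $\rsubst'(b)$ that was chosen before $a$ entered scope and hence cannot mention $a$ --- a contradiction; therefore no escape occurs. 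I expect the main obstacle to be the $\lab{S-Eq}$ case, where the semantic model must be turned into a syntactic unifier: using the agreement of $\rsubst$ with $\theta$, rule $\lab{Sem-Equiv}$ gives $\rsubst(\theta A) = \rsubst(\theta B)$, so $\rsubst$ (restricted to flexible variables) is a restriction-respecting unifier of $\theta A$ and $\theta B$ over $\Deltaof{F}; \Theta$, and completeness of the unification algorithm~\cite{EmrichLSCC20} then guarantees $\substunifier$ returns a result. Matching the rigid/flexible bookkeeping and the restriction contexts between the satisfiability judgement and the hypotheses of the unification completeness theorem is the delicate part of the whole proof.
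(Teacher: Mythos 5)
Your proposal is correct and follows essentially the same route as the paper's proof: the same split on $C = \true$ versus not, the same identification of which rules need only structural/well-formedness facts versus the satisfiability hypothesis, the same use of $\mathfrak{U}(\Theta,\theta)$ (via the lemma characterising its models as refinements of $\theta$) to discharge the monomorphism checks in $\lab{S-Mono}$/$\lab{S-DefPush}$ and the escape check in $\lab{S-ForallPop}$, the same appeal to completeness of unification for $\lab{S-Eq}$, and the same grouping of adjacent existential frames to choose between the let-pop rules and $\lab{S-ExistsLower}$. The only cosmetic difference is that your escape-check argument phrases the contradiction in terms of the equation $b \ceq a$ directly rather than via the well-formedness of the instantiation chosen at $b$'s binder, but the underlying reasoning is the same.
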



Termination is another crucial property.
\begin{restatable}[Termination]{thm}{termination}
\label{theorem:termination}
The constraint solver terminates on all inputs.
\end{restatable}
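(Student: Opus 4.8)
The plan is to exhibit a single well-founded measure on states that strictly decreases under every transition of \cref{paper-fig:constraints-stack-machine}; a strictly decreasing measure into a well-founded order rules out infinite runs irrespective of determinism, so I do not need to appeal to the determinism remark. Before attacking the machine itself I would dispatch the only subroutine that is not a machine step, the unifier $\substunifier$: it recurses structurally on its type arguments (on the components in the data-type clause, and on strictly smaller bodies after the renaming $A[c/a]$ in the $\forall$ clause), so it terminates, and I would simply cite this alongside its stated correctness. The remaining work is to design the measure.

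The difficulty is that no naive quantity works. The obvious candidate, the syntactic size of the constraint obtained by plugging $C$ into $F$, is \emph{increased} by \lab{S-Inst} and rewritten by \lab{S-Freeze} (each replaces an atomic constraint by an equality, in the case of \lab{S-Inst} wrapped in fresh existentials drawn from the fixed type $\Gammaof{F}(x)$), and it is left \emph{unchanged} by \lab{S-ExistsLower}, which merely relocates existential frames. I would therefore use a lexicographic measure $\Phi(s) = (\ell, w, z, \phi, \tau)$, where $\ell$ is the number of unsolved $\Let$ constraints anywhere in the state (in-progress, in a let frame, or nested inside parked constraints); $w$ is the number of unsolved instance ($x \preceq A$) and frozen ($\freeze{x:A}$) constraints, counted the same way; $z$ is the total syntactic size of all constraint material still live, counting each constraint former once while it sits in the in-progress constraint or in a parked sub-constraint and not counting a former once it has become a stack frame; $\phi$ is the number of stack frames; and $\tau$ is the sum over all $\exists$-frames of their height measured from the bottom of the stack.

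The observations that make this work are as follows. No rule ever creates a $\Let$ constraint or an instance/frozen constraint, so $\ell$ and $w$ are non-increasing throughout. The component $\ell$ strictly decreases exactly at \lab{S-LetPolyPop} and \lab{S-LetMonoPop}, which consume a let frame and install a $\Def$ frame plus existentials while introducing no new let, so the possibly large generalised annotation $\forall \omany{c}.\subst(b)$ they produce is harmless because $\ell$ dominates. The component $w$ strictly decreases exactly at \lab{S-Inst} and \lab{S-Freeze}, whose right-hand sides contain only equalities and existentials; since both rules leave $\ell$ fixed, the size blow-up they cause is dominated by the strict drop in $w$. For the administrative rules that leave $\ell$ and $w$ fixed, $z$ strictly decreases on every \textsc{Push} rule (a former leaves the in-progress constraint) and on \lab{S-Eq}, \lab{S-Mono} and \lab{S-ConjPop}, where the only care needed is to give $\true$ positive weight so that replacing a constraint by $\true$ counts as progress; $z$ stays fixed on \lab{S-ForallPop}, \lab{S-DefPop} and \lab{S-ExistsLower}. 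For the first two of these, $\phi$ strictly decreases because a frame is popped. For \lab{S-ExistsLower}, $\phi$ strictly decreases whenever the discarded set $\many{b}$ is non-empty, and when $\many{b}$ is empty the retained existentials move strictly \emph{downward} past the intervening frame $f$, so $\tau$ strictly decreases while $\phi$ is unchanged; the side conditions that $\many{a}$ be exhaustive and $f$ be neither a let nor an existential frame are exactly what make this movement, and hence the drop in $\tau$, well defined.

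I expect the main obstacle to be \lab{S-ExistsLower}, the only rule that makes genuine progress by rearranging the stack without shrinking either the constraint or, in general, the number of frames; getting the positional component $\tau$ right---and checking that it is invariant under the reordering of adjacent existential frames noted in the text, so that $\Phi$ descends to states considered up to that reordering---is the delicate point. A secondary check, routine but easy to get wrong, is confirming the weighting conventions for $z$ (a former discounted to zero once parked as a frame, and a positive weight for $\true$) so that each matching \textsc{Push}/\textsc{Pop} pair and each of \lab{S-Eq}, \lab{S-Mono}, \lab{S-ConjPop} behaves monotonically. Once these two points are settled, strict decrease of $\Phi$ under each of the fifteen rules is a finite case analysis, and well-foundedness of the lexicographic order on tuples of naturals yields termination.
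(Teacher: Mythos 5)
Your proposal is correct and follows the same overall strategy as the paper's proof: exhibit a lexicographic well-founded measure on states that strictly decreases at every step, having identified exactly the same three trouble spots (\lab{S-Inst}/\lab{S-Freeze} enlarge the constraint, the $\lab{S-Let*Pop}$ rules manufacture a possibly large annotation, and \lab{S-ExistsLower} merely reorders frames). The decompositions differ in bookkeeping. The paper uses the four-tuple $(\meta{insts}(C),\,|F[C]|,\,|C|,\,\text{index of the rightmost }\exists\text{ frame})$ and makes the single weighted size $|F[C]|$ do double duty: $\freeze{x:A}$ is assigned weight $2$ so that \lab{S-Freeze} shrinks it, and $\Let$ is assigned weight $3$ so that the let-pop rules shrink it after a small arithmetic computation ($1+|\wmany{a''}| < 3+|\many{a}|$). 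You instead promote these to explicit leading counters $\ell$ and $w$ (with $w$ also absorbing \lab{S-Inst}, which the paper handles via its dedicated $\meta{insts}$ component), trading two extra tuple positions for the elimination of that arithmetic; your global counting of instance/frozen/let subconstraints throughout $F[C]$ is also the robust reading of the paper's first component, which as literally stated (counting only within the in-progress $C$) would fluctuate across push/pop pairs. For \lab{S-ExistsLower} the paper uses the maximal index of an existential frame where you use the sum of their heights; both are invariant under the permitted reordering of adjacent existential frames and both strictly drop in the pure-relocation case. You also correctly observe that a strictly decreasing map into a well-founded order already suffices, so the paper's step of extending $<_{\mathit{lex}} \circ\, |\cdot|$ to a total well-order via an arbitrary tie-breaker is dispensable. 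One small caveat: your justification that $\substunifier$ terminates because it ``recurses structurally'' is not literally right---the recursive calls in the data-type clause are on $\fsubst_i(A_i)$ and $\fsubst_i(B_i)$, which the accumulated substitution may have enlarged---so you would need the standard unification measure (number of unsolved variables, then size) there; the paper sidesteps this by citing the unifier's properties from prior work and not treating its termination within this proof at all.
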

The proof relies on the existence of a well-ordering $<$ on states such that
$s \to s'$ implies $s' < s$.
We observe that the well-ordering cannot simply be defined based on the
syntactic size of the in-progress constraint of each state before and after the
step, even when plugging the constraint into each state's stack.
For example, the rule \lab{S-Inst} in \cref{paper-fig:constraints-stack-machine}
may introduce an arbitrary number of nested existential constraints. Other rules
such as $\lab{S-ExistsLower}$ may simply reorder stack frames.
Therefore, given a state $(F, \Theta, \theta, C)$, the well-ordering not only
takes the size of $F[C]$ and $C$ into account but also the number of
instantiation constraints in $C$ and the position of the right-most existential
frame in $F$.

We use the syntax $\Def\; \Gamma \;\In\; C$ to denote a series of nested def
constraints with $C$ as the innermost constraint, where each def constraints
performs a binding from $\Gamma$.
We now state the overall correctness of the solver as follows:
A constraint $C$ is satisfiable in context $\Delta; \Xi; \Gamma$ using
instantiation $\rsubst$ if and only if the solver reaches a final state from the
input constraint $\forall \Delta. \exists \Xi. \Def\; \Gamma \;\In\; C$ and
$\rsubst$ is a refinement of the substitution $\theta$ returned by the solver.
Here, a ``refinement'' of $\theta$ is simply a composition with $\theta$.

\begin{restatable}[Correctness of constraint solver]{thm}{constraintssolvableunifybysubstnorank}
\label{theorem:solver-correct}
Let $\Delta \vdash \wf{\Gamma}$ and $\Delta;\Xi;\Gamma  \vdash \wf{C}$.
Then we have
\[
\bl
\Delta; \Xi; \Gamma; \rsubst
\vdash C  \\
\text{iff} \\
\text{there exist } \kenv, \theta', \theta, \Xi' \text{ s.t.\ } \\
\ba{cl}
 \;&(\emptystack, \emptykenv, \emptysubst, \forall \Delta.\, \exists \Xi.\, \Def\; \Gamma \;\In\; C) \to^{*}
    (\forall \Delta :: \exists\; (\Xi,\Xi'), \kenv, \subst, \true) \text{ and }  \\
  &\Delta \vdash \theta' : \Theta \Rightarrow \emptydelta \text{ and }  \\
  &\restriction{(\theta' \comp \theta)}{\Xi} = \rsubst.
\ea
\el
\]
\end{restatable}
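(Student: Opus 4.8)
The plan is to assemble this biconditional from the three preceding metatheoretic results---Preservation (\Cref{theorem:preservation}), Progress (\Cref{theorem:progress}), and Termination (\Cref{theorem:termination})---glued together by two ``bridging'' observations about the satisfiability judgement. The first unfolds the \emph{initial} constraint: by repeated use of \lab{Sem-Forall}, \lab{Sem-Exists}, and \lab{Sem-Def} one shows that $\emptydelta; \emptykenv; \emptygamma; \emptysubst \vdash \forall \Delta.\, \exists \Xi.\, \Def\; \Gamma \;\In\; C$ holds exactly when there is some $\rsubst$ with $\Delta \vdash \rsubst : \Xi \Rightarrow_\poly \emptydelta$ and $\Delta; \Xi; \Gamma; \rsubst \vdash C$. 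Here the hypothesis $\Delta \vdash \wf{\Gamma}$ is essential: since every annotation in $\Gamma$ mentions only variables of $\Delta$, we have $\rsubst(\Gamma) = \Gamma$ and the monomorphism side-conditions of \lab{Sem-Def} become vacuous (each set $\ftv(A) - \Delta$ is empty). The second observation unfolds the \emph{unification context}: I will prove a lemma stating that an instantiation $\hat{\rsubst}$ over $\ftv(\Theta)$ satisfies the constraint $\mathfrak{U}(\Theta, \theta)$ if and only if $\hat{\rsubst}$ is a refinement of $\theta$, i.e.\ $\hat{\rsubst}$ factors as $\theta' \comp \theta$ on $\ftv(\Theta)$ for some well-formed $\theta'$ with $\Delta \vdash \theta' : \Theta \Rightarrow \emptydelta$. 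The conjunct $\mathfrak{U}(\theta) = \bigwedge_{a} a \ceq \theta(a)$ forces $\hat{\rsubst}$ to factor through $\theta$ (crucially using idempotence of $\theta$), while $\mathfrak{U}(\Theta)$ contributes exactly the monomorphism restrictions recorded in $\Theta$.

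With these in hand, I would first dispatch a short \emph{preamble} argument. The only rules applicable to the initial state push the leading $\forall \Delta$, $\exists \Xi$, and $\Def\; \Gamma$ frames (\lab{S-ForallPush}, \lab{S-ExistsPush}, \lab{S-DefPush}) without altering the plugged constraint $F[\,\cdot\,]$, so after the preamble we reach a state whose constraint representation is still (equivalent to) the initial constraint, now with stack prefix $\forall \Delta :: \exists \Xi$ fixed at the bottom, $\Theta$ all-polymorphic, and $\theta$ the identity (so $\mathfrak{U}(\Theta,\theta) \equiv \true$). From this state onward the bottom prefix $\forall \Delta :: \exists \Xi$ is never disturbed, so \Cref{theorem:preservation} applies to each subsequent step. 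Iterating it along the whole reduction $s_0 \to^{*} (\forall \Delta :: \exists\,(\Xi,\Xi'), \Theta, \theta, \true)$ yields the key equivalence: for every $\rsubst$ with $\Delta \vdash \rsubst : \Xi \Rightarrow_\poly \emptydelta$, $\rsubst$ satisfies the initial constraint representation iff its appropriate extension satisfies the final one, $\forall \Delta.\, \exists\,(\Xi,\Xi').\, (\true \wedge \mathfrak{U}(\Theta,\theta))$.

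The two directions then fall out. For the forward direction, assume $\Delta; \Xi; \Gamma; \rsubst \vdash C$; by the first bridging observation the initial constraint is satisfiable, and combining \Cref{theorem:progress} (a satisfiable non-final state can always step), \Cref{theorem:preservation} (satisfiability is preserved), \Cref{theorem:termination} (the machine halts), together with preservation of well-formedness of states, the machine cannot get stuck and must reach a final state of the stated shape. Transporting the satisfying instantiation forward and unfolding the final constraint via \lab{Sem-Forall}/\lab{Sem-Exists} exposes an extension $\hat{\rsubst}$ of $\rsubst$ (over the auxiliary variables $\Xi'$) satisfying $\mathfrak{U}(\Theta,\theta)$; the second bridging lemma then delivers the witness $\theta'$ with $\restriction{(\theta' \comp \theta)}{\Xi} = \rsubst$. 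For the backward direction, a refinement $\restriction{(\theta' \comp \theta)}{\Xi} = \rsubst$ is, by the same lemma, precisely a satisfying instantiation of the final constraint representation; iterating \Cref{theorem:preservation} backward along the reduction transports it to the initial constraint, and the first bridging observation recovers $\Delta; \Xi; \Gamma; \rsubst \vdash C$.

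I expect the main obstacle to be the second bridging lemma---the precise correspondence between the solver's substitution $\theta$ and semantic instantiations---together with the bookkeeping of the auxiliary existentials $\Xi'$. The delicate points are (i) using idempotence of $\theta$ to show that satisfying $\mathfrak{U}(\theta)$ really is equivalent to factoring through $\theta$, rather than merely agreeing on the immediate images, and (ii) verifying that restricting $\theta' \comp \theta$ to $\Xi$ correctly projects away the $\Xi'$ introduced and lowered during solving, so that the most general solution read off the final state restricts to exactly the original $\rsubst$. A secondary, more mechanical obstacle is the preamble/prefix mismatch: \Cref{theorem:preservation} assumes the fixed bottom prefix $\forall \Delta :: \exists \Xi$ is already on the stack, whereas the initial state has an empty stack, so this gap must be closed by the explicit preamble reduction before the iteration can begin.
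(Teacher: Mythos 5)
Your proposal is correct and follows essentially the same route as the paper: the paper's proof also performs a preamble reduction pushing the $\forall \Delta$ and $\exists \Xi$ frames, uses the hypothesis $\Delta \vdash \wf{\Gamma}$ to discharge the \lab{Sem-Def} monomorphism side-conditions and reduce $\Delta;\Xi;\Gamma;\rsubst \vdash C$ to satisfiability of the plugged constraint, and your ``second bridging lemma'' is exactly the paper's Lemma~\ref{lemma:satisfying-U} characterising models of $\mathfrak{U}(\Theta,\theta)$ as refinements $\theta' \comp \theta$. The only organisational difference is that the paper packages the iteration of Preservation/Progress/Termination into a standalone helper lemma (Lemma~\ref{lemma:helper-constraints-vs-solver}), proved by transfinite induction on the termination well-ordering in the forward direction and by induction on the length of the reduction in the backward direction, which is precisely the ``gluing'' you describe.
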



%
Even though $\theta'$ acts like an instantiation (its codomain only contains
rigid variables), it is crucial for it to be a substitution, meaning that it
respects the individual restrictions in $\Theta$.
An instantiation $\rsubst'$ in place of $\theta'$ may violate the restrictions
in $\Theta$ and introduce polymorphism in places where the type system prohibits
it, which would make the right-to-left direction of the theorem invalid.
Also note the domain of $\theta$ is $(\Xi, \Xi')$, whereas that of $\rsubst$ is
$\Xi$.
Thus, we restrict $\theta$ to $\Xi$ when relating it to $\rsubst$.

Observe that \cref{theorem:solver-correct} also states that
our solver finds most general solutions:
The instantiation $\theta$ returned by the solver is independent from $\rsubst$.
Together with the deterministic nature of our solver, this means that any such
$\rsubst$ can be obtained from $\theta$.

For the purposes of type-checking, we may now relate the correctness of the
solver to constraints resulting from the translation function $\congen{-}$
introduced in \cref{paragraph:term-to-constraint-translation}.
If the solver succeeds on the translation of some term $M$ in some context
$\Delta; \Gamma$, then the term is well-typed in context $\Delta; \Gamma$ for
any well-formed refinement of $\theta(a)$, where $a$ is the placeholder variable
used for the type of $M$.

\begin{restatable}[Constraint-based typechecking is sound]{thm}{constraintbasedtypeinferenceright}
\label{theorem:constraint-based-type-inference-sound}
Let $\Delta \vdash \Gamma$ and $\Delta;\Gamma \vdash \wf{M}$ and $a \disjoint \Delta$.
If $(\emptystack, \emptykenv, \emptysubst, \allowbreak \forall \Delta.\, \exists a.\, \Def\;
\Gamma \;\In\;\allowbreak \congen{M : a})$ $\to^{*}$ $( \forall \Delta :: \exists\; (a, \many{b}), \Theta, \subst,\allowbreak \true)$ and $\Delta \vdash
\subst' : \Theta \Rightarrow \cdot$ then
$\Delta; \Gamma \vdash M : (\subst' \comp \subst)(a)$.
\end{restatable}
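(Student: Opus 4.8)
The plan is to derive this statement as a direct corollary of the solver correctness theorem (\Cref{theorem:solver-correct}) and the completeness of constraint generation (\Cref{theorem:constraint-generation-completeness}), chaining them through the intermediate constraint $\congen{M : a}$. First I would instantiate \Cref{theorem:solver-correct} taking $\Xi := a$ (the single placeholder variable), $C := \congen{M : a}$, and $\Gamma$ as given; the supplied final state $(\forall \Delta :: \exists\,(a, \many{b}), \Theta, \subst, \true)$ then matches the shape $(\forall \Delta :: \exists\,(\Xi, \Xi'), \kenv, \subst, \true)$ with $\Xi' = \many{b}$. To discharge the two preconditions of that theorem I would use the hypothesis $\Delta \vdash \wf{\Gamma}$ directly, and establish $\Delta; a; \Gamma \vdash \wf{\congen{M : a}}$ from the well-formedness property of the translation $\congen{-}$ stated in \cref{paragraph:term-to-constraint-translation}, with the trivially well-formed variable $a$ (well-formed under $\Delta, a$, using $a \disjoint \Delta$) playing the role of the ascribed type.

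Next I would set $\rsubst := \restriction{(\subst' \comp \subst)}{a}$, i.e.\ the instantiation sending $a$ to $(\subst' \comp \subst)(a)$. The side hypothesis $\Delta \vdash \subst' : \Theta \Rightarrow \cdot$ plays exactly the role of $\theta'$ in \Cref{theorem:solver-correct}, so the three clauses on the right-hand side of that theorem hold verbatim with $\Theta$, $\subst$, $\subst'$, and $\many{b}$; in particular $\restriction{(\subst' \comp \subst)}{a} = \rsubst$. Before applying the theorem I would verify that $\rsubst$ is a genuine instantiation $\Delta \vdash \rsubst : a \Rightarrow_\poly \cdot$: since state well-formedness is preserved along $\to^{*}$ from the well-formed initial state, $\subst$ is a substitution $\Delta \vdash \subst : \Theta \Rightarrow \Theta$ with $a \in \ftv(\Theta)$, and composing with $\subst'$ (whose codomain ranges only over rigid variables of $\Delta$) yields a type over $\Delta$ for $a$. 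Applying the right-to-left direction of \Cref{theorem:solver-correct} then gives $\Delta; a; \Gamma; \rsubst \vdash \congen{M : a}$.

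Finally I would feed this satisfiability fact, together with the hypothesis $\Delta; \Gamma \vdash \wf{M}$, into \Cref{theorem:constraint-generation-completeness}, concluding $\Delta; \Gamma \vdash M : \rsubst(a)$, which is precisely $\Delta; \Gamma \vdash M : (\subst' \comp \subst)(a)$ as required. The argument requires no fresh induction; all the mathematical content lives in the two cited theorems. The only real obstacle is careful bookkeeping: aligning the metavariables of the solver correctness statement (notably $\Xi'$, $\theta'$, and the composed substitution restricted to $\Xi$) with those of the present statement, and confirming that $(\subst' \comp \subst)(a)$ is well-formed over $\Delta$ so that it constitutes a legitimate polymorphic instantiation of the single flexible variable $a$.
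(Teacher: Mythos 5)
Your proposal is correct and follows essentially the same route as the paper's proof: both instantiate the right-to-left direction of \Cref{theorem:solver-correct} with $\Xi = a$ and $\Xi' = \many{b}$, define $\rsubst$ as the composition $\subst' \comp \subst$ cut down to $a$ (your $\restriction{(\subst' \comp \subst)}{a}$ and the paper's $\theta' \comp \restriction{\theta}{\{a\}}$ agree on the singleton domain), discharge well-formedness via \Cref{lemma:translation-yields-wf-constraint} and preservation of state well-formedness, and then conclude with \Cref{theorem:constraint-generation-completeness}. No gaps.
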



Conversely, if $M$ has type $A$, then $A$ can be obtained from instantiating
$\theta(a)$.
\begin{restatable}[Constraint-based typechecking is complete and most general]{thm}{constraintbasedtypeinferenceleft}
\label{theorem:constraint-based-type-inference-complete}
Let $a \disjoint \Delta$. If
 $\Delta; \Gamma \vdash M : A$ then there exist
 $\Xi, \Theta, \subst$, $\rsubst$ such that $(\emptystack, \emptykenv, \emptysubst,
 \forall \Delta\,. \exists a\,.\Def\; \Gamma\; \In \; \congen{M : a}) \to^{*}
(\forall \Delta :: \exists\; \Xi, \Theta, \subst, \true)$ and $A =
\rsubst(\subst(a))$.
\end{restatable}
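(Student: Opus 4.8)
The plan is to derive this completeness statement as a direct corollary of two earlier results: constraint generation soundness (\Cref{theorem:constraint-generation-soundness}) and the forward direction of solver correctness (\Cref{theorem:solver-correct}). Intuitively, a typing derivation for $M$ witnesses that the generated constraint $\congen{M : a}$ is satisfiable, with $a$ mapped to $A$; the solver-correctness theorem then guarantees that the machine, started on the translated constraint, reaches a final state whose substitution $\theta$ is refined to the witnessing instantiation, and the type $A$ is recovered as the image of $\theta(a)$ under that refinement. Because the solver is deterministic, the same final state (hence the same $\theta$ up to renaming) serves every type $A$ of $M$, which is what makes the result ``most general'', although the formal statement only asserts existence.

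Concretely, first I would unpack the hypothesis $\Delta; \Gamma \vdash M : A$. By the implicit well-formedness conventions attached to the typing rules, this derivation supplies $\Delta \vdash \wf{\Gamma}$ and $\Delta; \Gamma \vdash \wf{M}$. Applying \Cref{theorem:constraint-generation-soundness} (using the given $a \disjoint \Delta$) yields
\[
\Delta; a; \Gamma; [a \mapsto A] \vdash \congen{M : a}.
\]
Next I would discharge the hypotheses of \Cref{theorem:solver-correct} for the instance with flexible context $a$, constraint $C := \congen{M : a}$, and instantiation $\rsubst := [a \mapsto A]$. The premise $\Delta \vdash \wf{\Gamma}$ is already in hand, and the constraint well-formedness $\Delta; a; \Gamma \vdash \wf{\congen{M : a}}$ follows from the well-formedness property of the translation $\congen{-}$ stated in \Cref{section:language}, using that $M$ is well-formed and that the single flexible variable $a$ is trivially well-formed under $\Delta, a$.

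With the hypotheses in place, the left-to-right direction of \Cref{theorem:solver-correct}, applied to the satisfaction fact above, produces $\Theta$, $\theta'$, $\theta$, and some $\Xi'$ such that the run
\[
(\emptystack, \emptykenv, \emptysubst, \forall \Delta.\, \exists a.\, \Def\; \Gamma \;\In\; \congen{M : a}) \to^{*} (\forall \Delta :: \exists\; (a, \Xi'), \Theta, \theta, \true)
\]
holds, together with $\Delta \vdash \theta' : \Theta \Rightarrow \cdot$ and $\restriction{(\theta' \comp \theta)}{a} = [a \mapsto A]$. Taking $\Xi := (a, \Xi')$, $\subst := \theta$, and $\rsubst := \theta'$ discharges the existentials, and the required equation follows since $\rsubst(\subst(a)) = (\theta' \comp \theta)(a) = A$, the last step by the restriction identity. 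The only point needing care is reading the solver-produced substitution $\theta'$ as the instantiation $\rsubst$ demanded by the statement: since $\Delta \vdash \theta' : \Theta \Rightarrow \cdot$ forces the codomain of $\theta'$ to range only over rigid variables of $\Delta$, it acts as a type instantiation $\Delta \vdash \theta' : \Theta \Rightarrow_\poly \cdot$ and agrees with it as a function on types, so the syntactic equality $\rsubst(\subst(a)) = A$ transfers unchanged. All genuine difficulty is absorbed into the cited theorems, so there is no substantive obstacle beyond this bookkeeping; the real work lives in \Cref{theorem:solver-correct} and its preservation, progress, and termination underpinnings.
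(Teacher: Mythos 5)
Your proposal is correct and follows essentially the same route as the paper's own proof: invoke constraint-generation soundness (\Cref{theorem:constraint-generation-soundness}) to obtain $\Delta; a; \Gamma; [a \mapsto A] \vdash \congen{M : a}$, then apply the left-to-right direction of \Cref{theorem:solver-correct} with $\rsubst := [a \mapsto A]$ and read off $\Xi$, $\theta$, and $\theta'$ from its conclusion. Your extra remark justifying that the solver-produced substitution $\theta'$ can play the role of the instantiation $\rsubst$ is a reasonable piece of bookkeeping the paper leaves implicit.
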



\begin{remark}[$\monoc{}(-)$ constraints]
  The constraint $\monoc{}(a)$ constrains a type variable to be instantiated only with a monotype.  Such constraints are not produced during constraint generation and appear in the constraint solving rules only in the rule $\lab{S-Mono}$ which solves them immediately by checking the current instantiation of $a$ and constraining the free variables of $a$ to be monomorphic.  Moreover, because $\Def\; (x : A) \;\In\; C$ constraints also require $A$ to be monomorphic, we considerd leaving out the $\monoc{}(a)$ constraint form altogether and considering it syntactic sugar for $\Def\; (x : a) \;\In\; \true$.  The main reason we have not done this is to simplify the presentation, particularly in the statement and proof of the above results, in order to ensure a simple translation of monomorphism information latent in $\Theta$ back into explicit constraints.

  An alternative design we considered would be to attach monomorphism restrictions to existential quantifiers, and drop the monomorphism requirement on  $\Def$ constraints.  However, we were not able to find a way to make this work that retains most general solutions for $\Def$ constraints, as illustrated in Section~\ref{sec:def-constraints}.
\end{remark}

\section{Discussion}
\label{sec:discussion}

In this section we discuss two extensions: using ranks for efficiency and
unordered quantification. We also compare our approach more directly to Pottier
and R\'emy's presentation of \HMX.

\subsection{Using Ranks}
\label{section:ranks}

In our solver, the lowering of existential frames in the stack as well as
generalisation are controlled by the free type variables in the image of the
substitution $\theta$ in the state under consideration.
Both mechanisms depend on the $\meta{partition}$ function.
A more efficient implementation associates a \emph{rank} with each unification
variable~\cite{RemyD92,KuanM07}, which can then be used instead of
checking what free type variables appear in certain types in the context.

Implementing ranks for our solver requires a similar mechanism to the one
described for the \HMX solver by \citet{PottierR05}; ranks are orthogonal to the
support for first-class polymorphism in our system.

\ifnewranksection
\newcommand{\atv}[1]{\ensuremath{\meta{atv}(#1)}}
\newcommand{\varrank}[1]{\ensuremath{\meta{rank}(#1)}}
\newcommand{\varindex}[1]{\ensuremath{\meta{index}(#1)}}

We briefly outline how ranks are related to our existing definitions and how the
solver can be adapted to use them.

To this end, we define the function $\meta{atv}$ that returns all type variables
(flexible and rigid) bound by a well-formed stack in their order of definition.
Recall that the well-formedness of stacks guarantees the uniqueness of all bound
variables.

\[
\meta{atv}(F) =
\begin{cases}
\emptydelta &\text{if } F = \emptystack \\
\Xiof{F'}, a &\text{if }
  \begin{alignedat}[t]{3}
    F &= F' :: \exists a &\text{or} \\
    F &= F' :: \Let_R\; x = \letexists{a} \Box \;\In\; C_2\;\; &\text{or} \\
    F &= F' :: \forall a
  \end{alignedat}\\
\Xiof{F'}    &\text{otherwise } (F = F' :: \_)   \\
\end{cases} \\[3em]
\]
Therefore, if $\meta{atv}(F) = (a_1, \dots, a_n)$, then $a_1$ is the outermost
type variable bound in $F$ and $\meta{atv}(F)$ is an interleaving of
$\Deltaof{F}$ and $\Xiof{F}$, as defined in \cref{fig:extraction-operators}.

Let a stack $F$ with $\atv{F} = a_1, \dots, a_n$, a variable
$b \in \meta{atv}(F)$, and a substitution $\theta$ with
$\Deltaof{F} \vdash \theta : \Theta \Rightarrow \Theta$ be given, where
$\ftv(\Theta) = \Xiof{F}$.

Adapting the work by \citet{KuanM07}, we may then define the \emph{index} of $b$
based on its location in $\atv{F}$ and its rank as the smallest index of
any variable $b$ such that $a$ appears in $\ftv(\theta(b))$, or $\infty$ if
$b \not\in \ftv(\theta)$.
\[
\ba{l@{} l@{} l@{}}
\varindex{b, F}\: &= \:& i \in \{1, n \} \text{ such that } a_i = b \\[1ex]
\varrank{b, \theta, F} &= &
  \begin{cases}
    \min_{i \in \{ 1, \dots, n \:\mid\: b \,\in\, \ftv(\theta(a_i))\}} i &\text{ if such an $i$ exists}\\
    \infty & \text{ otherwise}
  \end{cases}

\ea
\]
First, we observe that those variables $a$ with rank $\infty$ are those that are
\emph{eliminated} by $\theta$, meaning that $\theta(a) = A$ where $a \neq A$.
Note that our idempotency condition on substitutions then imposes
$a \not\in \ftv(A)$.

We can now relate ranks to the behavior of the partition function introduced in
\cref{inline-def:partition}.
We assume that there exists a stack $F = F_1 :: F_2$ such that
$\Theta \vdash \wf{F}$ and $\Xi = \Xiof{F_2}$, meaing that $\Theta$ contains
exactly the flexible variables bound by $F$ and $F_2$ is a suffix of $F$ binding
exactly the flexible variables in $\Xi$ (and possibly other rigid ones).
We then observe that the following, alternative definition of
$\meta{partition}$ yields the same results.
\[
\bl
\dec{partition'}(\Xi, \theta, F_1, F_2) = \Xi_\mathrm{g} ; \Xi_\mathrm{l} \text{ where } \\
\ba{l@{}l@{}l@{}l@{}}
\quad &\Xi_\mathrm{g} \: &= \: &\{ a \in \Xi \mid \varindex{F_1} + 1 \le \varrank{a, \theta, F}  \} \text{ and} \\
\quad &\Xi_\mathrm{l} &= &\{ a \in \Xi \mid \varrank{a, \theta, F} < \varindex{F_1} + 1 \}
\ea
\el
\]
We can use this alternative version directly in
\cref{paper-fig:constraints-stack-machine}, using $F_1 := F :: f$ and
$F_2 := \exists \omany{a}$ in $\lab{S-ExistsLower}$ as well as $F_1 := F$ and
$F_2 := :: \Let_R\; x = \letexists{b} \Box \:\In\: C :: \exists \omany{a}$ in
the $\lab{S-Let*Pop}$ rules.
Note that we are adding 1 to $\varindex{F_1}$ in the definition of
$\Xi_\mathrm{g}$ and $\Xi_\mathrm{l}$ to obtain the index in the combined stack
$F_1 :: F_2$ of first type variable bound in $F_2$.
Therefore, when used by $\lab{S-ExistsLower}$, we have that is the index of the
first variable $\varindex{F_1} + 1$ from the sequence $\exists{a}$ in the
overall stack.
When used by the $\lab{S-Let*Pop}$ rules, $\varindex{F_1} + 1$ is the index of
the variable $b$ bound by the let frame under consideration.
Using ranks, we can then perform the escape check for the universally quantified
variable $a$ in $\lab{S-ExistsPop}$ by verifying that
$\varrank{a, \theta, F :: \forall a} = \varindex{a, F}$ holds.

The first set returned by $\dec{partition}$ $\dec{partition'}$ and are those
variables that we \emph{may} generalise in $\lab{S-Let*Pop}$.
However, we must only generalise those variables actually appearing in the type
$A$, which is defined as $\theta(b)$ in the rule.
Further, the variables must be generalised in the order in which they appear in
$A$.

Using ranks, we can perform the first step directly within the partition
function, returning only variables appearing in $A = \theta(a)$, by changing the
partition function as follows:
\[
\bl
\dec{partition''}(\Xi, \theta, F_1, F_2) = \Xi_\mathrm{g} ; \Xi_\mathrm{l} \text{ where } \\
\ba{l@{}l@{}l@{}l@{}}
\quad &\Xi_\mathrm{g} \: &= \: &\{ a \in \Xi \mid \varindex{F_1} + 1 = \varrank{a, \theta, F} \} \text{ and} \\
\quad &\Xi_\mathrm{l} &= &\{ a \in \Xi \mid \varrank{a, \theta, F} < \varindex{F_1} + 1 \}
\ea
\el
\]
The only difference is that we require equality in the rank check for
$\Xi_\mathrm{g}$.
Note that in a solver for Unordered FreezeML, discussed in \cref{sec:unordered},
we may generalise the first set returned by this function directly.

\paragraph{Practical Concerns}

In this section, we have discussed how they can be used to define the
$\meta{partition}$ function we introduced in \cref{inline-def:partition} without
the need to inspect the codomain of $\theta$.
Of course, this requires storing the ranks of all type variables during
constraint solving and updating them during unification.
This would be achieved in the standard way in our system. If the unifier detects
that $A$ should be substituted for $a$, then the ranks of all variables in $A$
are lowered to the minimum of their existing rank and that of $a$.

Note that an alternative definition of rank-based generalisation may group
existentials together with the next enclosing $\forall$ or $\Let$ frame, if such
a frame exists.
The choice for defining ranks and indices in a more-fine grained matter was for
illustrative purposes to yield a drop-in replacement for $\meta{partition}$ that
may be used in the rules in \cref{paper-fig:constraints-stack-machine}.

Using the more traditional approach enumerating only $\Let$ and $\forall$ frames
would also be possible with little changes to the rules.
However, we may eschew $\exists$ frames altogether, as discussed in the next
section.

\else
We briefly outline how to adapt their mechanism to implement the escape check
our solver performs for $\forall$ quantifiers.
To this end we associate a rank with each flexible \emph{and} rigid variable
$b$ in a given state $s$, denoted $\meta{rank}(b)$.
We define $\meta{rank}(s)$ as the number of let frames plus the number of
$\forall$ frames appearing in $F$.
When the solver encounters a binder for type variable $b$ in state $s$, the variable's
rank is then initialised to be $\meta{rank}(s)$.
When the unifier detects that some flexible variable $a$ should be substituted
with some type $A$, then the ranks of all variables in $A$ are updated to be no
larger than $\meta{rank}(a)$. The escape check in the rule
\lab{S-ForallPop}, applied to state
$s = (F :: \forall b, \kenv, \subst, \true)$, can then be performed by checking that
$\meta{rank}(b) = \meta{rank}(s)$ holds.  To partition the variables in $\Xi$
using ranks in the rules \lab{S-LetPolyPop} and \lab{S-LetPolyPop} in
\cref{paper-fig:constraints-stack-machine} we use the following modified version
of the function $\meta{partition}$:
\[
\dec{partition'}(\Xi, s) = \Xi' ; \Xi'' \text{ where }
\Xi', \Xi'' = \Xi \text{ and }
(\text{for all } a \in \Xi \mid
  a \in \Xi'' \text{ iff } a \in \meta{rank}(a) < \meta{rank}(s))
\]

\fi

\paragraph{Eschewing existential frames.}
To avoid the need for (inefficiently) lowering existential frames in the stack
by swapping with one non-existential frame at a time (as for example in
\lab{S-ExistsLower}), we may optimise the solver further by not carrying
individual existential frames in the stack at all.
Instead, each state $s$ contains $\meta{rank}(s)$ sets of type variables of
that rank.
We may then remove the rule $\lab{S-ExistsLower}$ altogether; in
$\lab{S-Let*Pop}$ the variables $\wmany{a'}$ are determined by taking exactly
those of rank $\meta{rank}(s)$ (the set $\wmany{a''}$ isn't needed anymore in
this setting).

\subsection{Unordered FreezeML}\label{sec:unordered}

So far we have considered a syntactic equational theory on types that equates quantified types up to alpha-equivalence only and does not
allow for any reordering of quantifiers or the removal/addition of unused ones.
This is in line with the original presentation of FreezeML~\cite{EmrichLSCC20}.

However, this is not a fundamental requirement of the system.
We may define \emph{Unordered FreezeML}, a variant of FreezeML where
quantifiers are unordered, by redefining equality of types to allow
$\forall a b. a \to b = \forall b a. a \to b = \forall a b c . a \to b$ and
consider $\ftv$ to return sets of variables rather than sequences.
The typing rules of Unordered FreezeML can than be obtained from
\cref{fig:freezeml-typing} by replacing every occurrence of $\omany{a}$ by
$\many{a}$.
Likewise, type inference for Unordered FreezeML can be performed using a stack
machine using the same rules as shown in
\cref{paper-fig:constraints-stack-machine}. The only change is to replace the
unification algorithm $\substunifier$ with an alternative one ignoring the order
of quantifiers as well as unnecessary ones.  This is of course not trivial
because unification can no longer assume that when it encounters a $\forall$ on
one side of an equation, the other side must be a $\forall$ binding the same
(modulo alpha-equivalence) variable.  Unification must be modified to handle the
case where one side $\forall$-binds an unused variable (e.g.
$\forall a. int \ceq int$) or where bindings must be reordered (e.g.
$\forall a,b,c. a \to b \to c \ceq \forall a,b,c. c \to b \to a$).  The point is
that this complexity seems to be confined to unification and the rest of the
system is unchanged.

\subsection{Comparison with \HMX solver by \citeauthor{PottierR05}}
\label{section:hmx-comparison}

The solver presented in this section is inspired by the one presented by
\citet{PottierR05} for \HMX, adding support for first-class polymorphism in the
style of FreezeML.

Additional notable differences include the following:
\begin{itemize}
\item
In the \HMX solver, configurations carry a collection $U$ of unification
constraints. It can be interpreted as a conjunction built from a subset of the
constraint language with additional well-formedness restrictions.
This means that when extending the constraint language, the definition of
constraint permitted in $U$ can be adapted accordingly.

Our solver represents the unifier context with two separate
components $\Theta$ and $\theta$.
This is mostly for the purpose of making the system more similar to the original
type inference algorithm of FreezeML.
Our system already provides a mechanism for representing the unification context
as a constraint, in the form of $\mathfrak{U}(\Theta, \theta)$, defined in
\cref{section:constraint-solving-metatheory} for the purposes of our
meta-theory.
It would be straightforward to define unification contexts in our solver in
terms of $\mathfrak{U}(\Theta, \theta)$ (or a more structured representation
thereof using multi-equations) instead of $\Theta$ and $\theta$.

\item
The solver presented by \citeauthor{PottierR05} supports recursive types by
allowing the solver state to contain constraints of the form $a \ceq A$, where
$A \neq a$ and $a \in \ftv(A)$.
In our system, a corresponding state with $\theta(a) = A$ for the same $A$ would
be ill-formed, as we require $\theta$ to be idempotent.

We consider support for recursive types as orthogonal to the issue of supporting
first-class polymorphism, but our reliance on the idempotency of $\theta$ would
require adding explicit $\mu$ types for handling recursive types.
\item
The \HMX solver implements several optimisations, for example mechanisms to
reduce the number of type variables present in states.
Similar optimisations could be performed by our solver, but we eschew them for
the sake of brevity, including the usage of ranks described in
\cref{section:ranks}.

\item
In the \HMX solver, def constraints and term variables efficiently handle sharing the work of type inference for let-generalized values, but they are not strictly necessary: def constraints can be eliminated by a form of constraint inlining.  Doing so would not change the results of type inference but would be disastrous for performance.
In constrast, our system uses term variables as a means of keeping first-class
polymorphism tractable.  Term variables are used as placeholders for polymorphic types that we may instantiate, which is why we require that the polymorphism in these types is always fully determined.  Flexible type variables can also become bound to arbitrarily polymorphic types (e.g. during type inference for $id~\freeze{id}$) but the quantifiers occurring in these types are never instantiated
during constraint solving, they can only be unified with other
quantifiers.
%
By ensuring that all polymorphism in the term context is fully
known, we guarantee that we do not instantiate unknown polymorphism.  On the other hand, it does not appear possible to eliminate def constraints via substitution or to define let constraints in terms of def, as in Pottier and Rémy work.

\end{itemize}

\begin{remark}[Itches we haven't been able to scratch yet]
  An alternative system without
term variables in the constraint language should be possible, in which case instantiation of quantifiers occuring in types bound to type variables would need to take place. This would require an additional
mechanism to guarantee that the polymorphism of those type variables
that could be instantiated in this version is fixed. We haven't
adopted this alternative design as we consider the current design that
uses term variables to be closer to Pottier and Rémy's work and the
original version of FreezeML.

It also may be possible to recover the property that let and def constraints
canbe ``expanded away''.  One possibility (suggested by a reviewer) is to seek a
suitable adjustment of the notion of constraint inlining that accounts for
frozen constraints, perhaps by generalizing them to permit a local type
constraint ($\freeze{x:[C]A}$). As discussed in \ref{section:related}, allowing
for full constrained types in our setting poses challenges, and may also
encounter similar issues to those encountered in equational reasoning about
FreezeML terms, which was considered briefly by \citet{EmrichLSCC20}.  This
general issue deserves further investigation.

Further discussion of these two issues, which are interrelated, is in an appendix.
  \end{remark}

\section{Related Work}
\label{section:related}
Constraint-based type inference for Hindley-Milner and related systems has a
long history \cite{wand87fi}.  Some of the most relevant systems include qualified types~\cite{Jones94},
\HMX~\cite{OderskySW99}, OutsideIn($X$)~\cite{VytiniotisJSS11}, and
GI~\cite{SerranoHVJ18} which present increasingly sophisticated techniques for
solving (generalisations of) constraints generated from ML or Haskell-like
programs.  Our work differs in building on \HMX as presented by 
\citet{PottierR05}, while adapting it to support first-class polymorphism
based on the FreezeML approach.
On the other hand, constraint-based FreezeML does not
so far support constraint solving parameterised by an arbitrary constraint
domain $X$, and extending it to support this is a natural but nontrivial next step.
In particular, FreezeML uses exactly System F types, rather than the type
schemes with constraint components of the form $\forall \omany{a}. C \Rightarrow A$
found in \HMX.
We have compared our constraint solver to the \HMX solver
by \citeauthor{PottierR05} in \cref{section:hmx-comparison}.

FreezeML is also related to PolyML as explained by \citet{EmrichLSCC20}. Unlike
FreezeML, PolyML uses two different sorts of polymorphic types: ML-like type schemes
and first-class polymorphic types. The latter may only be introduced with
explicit type annotations. As a result, the conditions to pick most general
solutions in the semantics of certain constraints in our language are not
necessary in PolyML.

The type system of GI~\cite{SerranoHVJ18} uses carefully crafted rules for
$n$-ary function applications, determining when arguments' types may be generalised or
instantiated.
It does not perform let generalisation.
Its type inference system is built on constraint solving, using a different
approach towards restricting polymorphism. It syntactically distinguishes three
sorts of unification variables, which may only be instantiated with monomorphic,
guarded, or fully polymorphic types.
\fe{One could argue that our idea is better, because it's closer to expressing
these things as residual constraints (because $(a : \mono) \in \Theta$ is just a
$\monoc{a}$ residual constraint).
However, stating that here would lead to the question why we are using
$\Theta$ instead of a unification constraint $U$ in our solver.
On the other hand, their approach allows them to avoid the deeply nested/highly
structured constraints we need to use.
}
While our solver determines the order of constraint solving using a stack, their
system allows individual rules to be blocked until progress has been made
elsewhere, for example waiting until a fully polymorphic variable has been
substituted with a more concrete type.

QuickLook~\cite{SerranoHJV20} combines Hindley-Milner style type inference with
bidirectional type inference in a subtle way, and when typechecking applications
of polymorphically typed variables, performs a ``quick look'' at all of the
arguments; this amounts to a sound but shallow analysis whether there is a
unique type instantiation (possibly involving polymorphism). If there is a
unique type instantiation then that instantiation is chosen, otherwise
quantified variables are instantiated with monomorphic flexible variables.
Type inference for QuickLook follows a two-stage approach: all
first-class polymorphism is resolved at constraint generation time,
and the actual constraint solver does not have to find solutions for
polymorphic type variables. Consequently, their constraint language
and solver are completely standard and oblivious to first-class
polymorphism. Thus, QuickLook requires only small modifications to existing Haskell-style type
inference, including extensions such as qualified types and GADTs, but (like
other recent proposals such as GI) does not support let-bound polymorphism nor come with a
formal completeness result.  In
an appendix the authors discuss approaches to
supporting let generalisation; one is to use let constraints in the
style of Pottier and Rémy, and that is what we do.


Some aspects of our solver are reminiscent of the approach taken in Type Inference in Context by \citet{GundryMM10}, though their approach performs type inference as a traversal of source language terms rather than introducing an intermediate constraint language.  We are interested in adapting their approach to FreezeML type inference, particularly leveraging the insight that type inference monotonically increases knowledge about possible solutions (reflected in the structure of their contexts).

Returning to the motivation for this work mentioned in the introduction, it is a natural to ask what obstacles remain to generalizing our system to handle an arbitrary constraint domain (the ``$X$'' in \HMX).  The immediate obstacle is how to handle constrained or qualified types $C \Rightarrow A$ which are considered equivalent up to reordering constraints $C$.  Such types need not induce, and depending on the theory $X$ such types may have equivalent forms with different numbers of quantified types.  Adopting the unordered quantification approach in Section~\ref{sec:unordered} could help with this, but we leave this and the investigation of generalizing to a ``FreezeML$(X)$'' to future work.

\section{Conclusions}
\label{section:concl}

\citet{EmrichLSCC20} recently introduced FreezeML, a new approach to ML-style type
inference that supports the full power of System F polymorphism using type and
term annotations to control instantiation and generalisation of polymorphic
types. Their initial type inference algorithm was a straightforward extension of
Algorithm W.
%
%
We have introduced \sysname, an alternative constraint-based presentation of
FreezeML type inference, opening up many possibilities for extending FreezeML in
the future.
We extended the constraint language of \HMX with suitable constraints, equipped
with a semantics and translation from FreezeML programs to constraints that
encode type inference problems, and presented a deterministic, terminating state
machine for solving the constraints.
Several potential next steps are opened by this work, including generalising to
support arbitrary constraint domains (the ``$X$'' in \HMX), implementing the
solver efficiently using ranks,
and considering recursive types and higher kinds.

\begin{acks}
  This work was supported by ERC Consolidator Grant Skye (grant number 682315)
  and by an ISCF Metrology Fellowship grant provided by the UK government’s
  Department for Business, Energy and Industrial Strategy (BEIS).  Lindley is
  supported by UKRI Future Leaders Fellowship ``Effect Handler Oriented
  Programming'' (MR/T043830/1).
\end{acks}

\bibliography{bibliography}

\newpage
\appendix

\section{Proofs for Section~\ref{section:constraint-generation-metatheory}}
\label{appendix:constraint-lang-proofs}

The proofs of
\cref{theorem:constraint-generation-soundness,theorem:constraint-generation-completeness}
proceed via mutual induction on the structure of the term $M$.
Both proofs use the following lemma, but only on subterms of
the term $M$ in question.

\begin{restatable}{lem}{principaliffmostgen}
\label{lemma:aux:principal-iff-mostgen}
Let
$\Delta' = \ftv(A) - \Delta$
and $\Delta;\Gamma \vdash \wf{M}$.
%
Then
$\meta{principal}(\Delta, \Gamma, M, \Delta', A)$ iff
$\meta{mostgen}(\Delta,\allowbreak (a),\allowbreak \Gamma,\allowbreak \congen{M : a}, \Delta', [a \mapsto A])$.
\end{restatable}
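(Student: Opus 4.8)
The plan is to derive this equivalence directly from the two constraint-generation metatheorems, \cref{theorem:constraint-generation-soundness,theorem:constraint-generation-completeness}, applied to the term $M$ itself, since these already tie typing derivations for $M$ to models of $\congen{M:a}$. The key observation is that the flexible context here is the singleton $(a)$, so the implicit side-condition on the satisfiability judgement (that an instantiation closes over the flexible variables, $\Delta,\Delta'' \vdash \rsubst'' : (a) \Rightarrow_\poly \cdot$) forces every model $\rsubst''$ of $\congen{M:a}$ to have the form $[a \mapsto A'']$ with $\ftv(A'') \subseteq \Delta, \Delta''$. Hence the models of $\congen{M:a}$ in context $(\Delta,\Delta''); (a); \Gamma$ are in bijection with the possible result types $A''$, which is exactly the object quantified over in $\meta{principal}$.

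Concretely, I would first establish the correspondence
\[
\Delta, \Delta''; \Gamma \vdash M : A'' \iff (\Delta, \Delta''); (a); \Gamma; [a \mapsto A''] \vdash \congen{M : a}
\]
for every $\Delta''$ disjoint from $\Delta$ (and, by alpha-renaming the freshly quantified $\Delta''$, from $a$) and every $A''$ well-formed in $\Delta,\Delta''$. The left-to-right direction is \cref{theorem:constraint-generation-soundness} in context $\Delta,\Delta''$, using $a \disjoint (\Delta, \Delta'')$. The right-to-left direction is \cref{theorem:constraint-generation-completeness}, whose hypothesis $(\Delta,\Delta'');\Gamma \vdash \wf{M}$ follows from the assumption $\Delta;\Gamma \vdash \wf{M}$ by weakening the rigid context (adding type variables preserves well-formedness of annotations and of bindings).

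With the correspondence in hand, both directions of the iff amount to unfolding the definitions of $\meta{principal}$ and $\meta{mostgen}$ and substituting. For the first conjunct of each, $\Delta, \Delta'; \Gamma \vdash M : A$ and $(\Delta, \Delta'); (a); \Gamma; [a\mapsto A] \vdash \congen{M:a}$ are interchangeable by the correspondence with $\Delta'' := \Delta'$ and $A'' := A$; the hypothesis $\Delta' = \ftv(A) - \Delta$ discharges the required well-formedness of $A$ (and of $[a \mapsto A]$). For the minimality conjunct, I would fix an arbitrary $\Delta''$ together with an arbitrary witness --- a typing $\Delta, \Delta''; \Gamma \vdash M : A''$ on the $\meta{principal}$ side, or equivalently a model $\rsubst'' = [a \mapsto A'']$ on the $\meta{mostgen}$ side --- transport it across the correspondence, and feed it to the minimality clause of the assumed statement. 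This yields an instantiation with $\Delta \vdash \rsubst : \Delta' \Rightarrow_\poly \Delta''$, and the two conclusions then coincide: read as instantiations of the flexible context $\{a\}$, the equation $\rsubst \comp [a \mapsto A] = \rsubst''$ says precisely $\rsubst(A) = A''$, since $(\rsubst \comp [a \mapsto A])(a) = \rsubst(A)$ and $\rsubst''(a) = A''$. Thus the same $\rsubst$ witnesses both minimality conditions.

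The main difficulty is bookkeeping rather than depth: one must verify that the ``for all $A''$'' quantification in $\meta{principal}$ and the ``for all $\rsubst''$'' quantification in $\meta{mostgen}$ range over the same objects (which is exactly what the singleton flexible context guarantees, by pinning $\rsubst'' = [a \mapsto \rsubst''(a)]$), and one must read the composition $\rsubst \comp [a \mapsto A]$ as an instantiation of the flexible variable $a$, so that agreement on $a$ is all that is required and the action of $\rsubst$ on the fresh rigid variables $\Delta'$ does not spuriously enter. Finally, although this lemma is invoked inside the mutual induction proving the two metatheorems, appealing to those theorems for $M$ here is not circular: as noted in the text, their proofs use this lemma only on strict subterms of $M$, so in the interleaved well-founded induction the $\meta{principal}$/$\meta{mostgen}$ equivalence for $M$ legitimately rests on the metatheorems for $M$, which in turn rest only on the equivalence for smaller terms.
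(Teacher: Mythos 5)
Your proposal is correct and follows essentially the same route as the paper's proof: both directions are obtained by applying the two constraint-generation metatheorems to $M$ itself, using the fact that a model of $\congen{M:a}$ in the singleton flexible context $(a)$ must have the form $[a\mapsto A'']$, so the quantifications in $\meta{principal}$ and $\meta{mostgen}$ range over the same witnesses and the minimality conditions transfer via $\rsubst\comp[a\mapsto A]=\rsubst''$ iff $\rsubst(A)=A''$. Your explicit remarks on weakening well-formedness for the completeness direction and on the non-circularity of the mutual induction are details the paper leaves implicit, but the argument is the same.
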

The proof of \cref{lemma:aux:principal-iff-mostgen} in
turn uses both theorems directly.

We proceed by collecting auxiliary lemmas (including \cref{lemma:aux:principal-iff-mostgen})
in \cref{section:constraint-lang-aux-lemmas}.
The two subsequent subsections contain the proofs of
\cref{theorem:constraint-generation-soundness},
\cref{theorem:constraint-generation-completeness}, respectively.


\subsection{Auxiliary Lemmas}
\label{section:constraint-lang-aux-lemmas}

\begin{lem}
\label{lemma:principal-type-of-gval-is-guarded}
Let
  $M \in \dec{GVal}$ and
  $\meta{principal}(\Delta, \Gamma, M, \Delta', A)$.
Then $A$ is a guarded type.
\end{lem}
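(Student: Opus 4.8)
The plan is to proceed by structural induction on the guarded value $M \in \dec{GVal}$, exploiting the fact (stated in the text) that principal types are unique up to renaming of the variables in $\Delta'$ and the presence of superfluous variables; it therefore suffices, in each case, to exhibit one guarded principal type, since guardedness is preserved under such renamings. The two abstraction cases are then immediate: for $M = \lambda x. M'$ every type produced by \textsc{LamPlain} has the form $S \to B$, and for $M = \lambda (x : A').M'$ every type produced by \textsc{LamAnn} has the form $A' \to B$. In both cases the top-level constructor is $\to$, so the type is a data type $D\,\omany{A}$, which is guarded; hence so is the principal type $A$.

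The variable case $M = x$ requires a short case analysis on $\Gamma(x) = \forall \omany{a}. H$, where $H$ is itself guarded. By \textsc{VarPlain} the possible types of $x$ are exactly the $\rsubst(H)$. If $H = D\,\omany{B}$, the instantiation only rewrites the quantified variables, so taking fresh $\omany{c}$ for $\omany{a}$ yields a principal type whose head is still $D$, hence guarded. If $H$ is one of the quantified variables $a_i$, then the possible types of $x$ range over all well-formed types, and a single fresh variable $b$ is a principal type, guarded because it is a type variable. Finally, if $H$ is a free variable $a \in \Delta$, the only possible type is $a$ itself, which is guarded. In each sub-case I exhibit a guarded principal type, and uniqueness up to renaming forces the given $A$ to be guarded as well.

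The let cases $M = \Let\; x = V \;\In\; U$ and $M = \Let\; (x : A') = V \;\In\; U$ are where the real work lies, and I would reduce them to the body $U$, which is itself a guarded value and a strict subterm, so that the induction hypothesis applies. Inverting \textsc{LetPlain} (resp. \textsc{LetAnn}) on the positive part $\Delta, \Delta'; \Gamma \vdash M : A$ yields a fixed type $A_x$ assigned to $x$ (the generalised principal type of $V$, resp. the ascribed type after $\meta{split}$) together with $\Delta, \Delta'; (\Gamma, x : A_x) \vdash U : A$. The goal is then to promote this to $\meta{principal}(\Delta, (\Gamma, x : A_x), U, \Delta', A)$ and conclude by the induction hypothesis. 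The positive clause of $\meta{principal}$ is exactly the inverted judgement; for the minimality clause I would take any type $A''$ of $U$ in an extended context $\Delta, \Delta''$, re-assemble a typing of the whole let in that context, and appeal to the minimality of $A$ as a principal type of $M$ to obtain the required instantiation $\rsubst$.

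The main obstacle is precisely this re-assembly step: it requires that the type $A_x$ assigned to $x$ is unchanged when the rigid context is weakened from $\Delta$ to $\Delta, \Delta''$. Since $A_x$ is obtained by generalising the principal type of $V$ over the variables $\ftv(A') - \Delta$, chosen fresh, I would first establish a stability property stating that the principal type of $V$, and hence its generalisation under $\Updownarrow$ or $\meta{split}$, is unaffected by adding fresh, unused type variables to the context. Given this, \textsc{LetPlain}/\textsc{LetAnn} can be re-applied in context $\Delta, \Delta''$ to turn an arbitrary type of $U$ into a type of $M$, which closes the minimality argument and completes the induction.
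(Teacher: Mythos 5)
Your proposal is correct in substance, but it takes a noticeably different and much heavier route than the paper. The paper's proof is a two\-/sentence direct argument: by the syntactic shape of guarded values, the only way a guarded value can have a top\-/level polymorphic type at all is if it is (or bottoms out, through lets, at) a plain variable $x$ with $\Gamma(x) = \forall a_1,\dotsc,a_n.\,a_i$; and in that one problematic case the \emph{principal} type is a fresh type variable $a \in \Delta'$, which is guarded. You instead run a full structural induction on $M \in \dec{GVal}$, which forces you to treat the let cases explicitly by relating $\meta{principal}(\Delta,\Gamma,\Let\;x=V\;\In\;U,\Delta',A)$ to $\meta{principal}(\Delta,(\Gamma,x:A_x),U,\Delta',A)$, and that in turn requires the auxiliary stability/weakening property you identify (that the generalised type $A_x$ of $V$ is unchanged when fresh rigid variables $\Delta''$ are added to the context, so the let typing can be re\-/assembled in the extended context for the minimality clause). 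That lemma is plausible and your sketch of it is sound --- note that when $V \notin \dec{GVal}$ (e.g.\ $V = \freeze{y}$) the $\Updownarrow$ judgement picks a monomorphic instantiation whose codomain is well\-/formed over $\Delta$ and hence over $\Delta,\Delta''$, so re\-/assembly still goes through --- but it is machinery the paper avoids entirely. What your approach buys is rigour: the paper's claim that ``the only way for $A$ to be top\-/level polymorphic is \dots'' itself implicitly hides an induction through the let cases, which you make explicit. What the paper's approach buys is brevity, by arguing about the set of \emph{possible} types of a guarded value rather than decomposing the principality predicate structurally. Both arguments hinge on exactly the same key case, the plain variable of type $\forall \vec{a}.\,a_i$, which you handle the same way the paper does.
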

\begin{proof}
The only way for $A$ to be a top-level polymorphic type of a guarded value $M$ is
if $M$ is a plain (i.e, not frozen) variable $x$ of type
$\forall a_0, \dotsc, a_n. a_{i}$.
However, the \emph{principal} type of $x$ is $a$ for some fresh polymorphic
variable $a \in \Delta'$, which is a guarded type.
\end{proof}

\begin{lem}[Well-formedness of constraint translation]
\label{lemma:translation-yields-wf-constraint}
\fe{Wf-ness of constraints doesn't have the wf-ness of $\Gamma$ precondition anymore.
Therefore, not imposing wf-ness of $\Gamma$ here, beyond what variables it contains.}
Let $\Delta;\Gamma \vdash \wf{M}$ and
$(\Delta, \Xi) \vdash \wf{A}$.
Then
$\Delta; \Xi; \Gamma \vdash \wf{ \congen{M : A}}$ holds.
\end{lem}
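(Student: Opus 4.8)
The plan is to prove the statement by structural induction on the term $M$, following the case analysis of the constraint-generation function $\congen{-}$ in \cref{fig:translation}. In each case I would first invert the hypothesis $\Delta; \Gamma \vdash \wf{M}$ using the matching rule of \cref{fig:term-wellformedness} to extract well-formedness facts about the immediate subterms, and then assemble a derivation of $\Delta; \Xi; \Gamma \vdash \wf{\congen{M : A}}$ by applying the corresponding rules of \cref{fig:constraint-wellformedness}, discharging their premises with the induction hypotheses on subterms together with the ambient hypothesis $(\Delta, \Xi) \vdash \wf{A}$. Throughout I will rely on a routine \emph{weakening lemma} for the type well-formedness judgement: if $(\Delta, \Xi) \vdash \wf{A}$ then well-formedness is preserved on extending either the rigid or the flexible context. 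This is what lets me transport hypotheses about $A$ and about annotation types into the larger contexts introduced by the existential, universal, and $\Def$ binders generated by $\congen{-}$.

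The base cases $\congen{\freeze{x} : A} = \freeze{x : A}$ and $\congen{x : A} = x \preceq A$ are immediate: inversion gives $x \in \Gamma$, and together with $(\Delta, \Xi) \vdash \wf{A}$ the relevant rule of \cref{fig:constraint-wellformedness} applies directly. The application and both lambda cases introduce fresh flexible variables ($a_1$, $a_2$); I would alpha-rename them disjoint from $\Delta, \Xi, \Gamma$, add them to the flexible context, and appeal to the induction hypotheses for the subterms under the extended context, using weakening to re-establish premises such as $(\Delta, \Xi, a_1) \vdash \wf{a_1 \to A}$ and to carry the annotation $B$ through in the annotated-lambda case. For the monomorphic and plain-value let cases, the $\Let_R$ rule of \cref{fig:constraint-wellformedness} ignores the type recorded for $x$ in $\Gamma$, so it may be chosen freely; its first premise follows from the induction hypothesis on the bound term under the flexible context extended by the fresh $a$, and its second from the induction hypothesis on the body under $\Gamma$ extended with a binding for $x$.

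I expect the annotated let-value case, $\congen{\Let\,(x : \forall \omany{a}. H) = U\,\In\, N : A}$ with $U \in \dec{GVal}$, to be the main obstacle, since it is the only case in which the rigid context grows and in which the scope of the annotation's quantifiers must be tracked. Inverting term well-formedness here uses $\msplit(\forall \omany{a}.H, U) = (\omany{a}, H)$, yielding $(\Delta, \omany{a}); \Gamma \vdash \wf{U}$, $\Delta \vdash \wf{\forall \omany{a}.H}$, and $\Delta; (\Gamma, x : \forall \omany{a}.H) \vdash \wf{N}$. To show the first conjunct $\forall \omany{a}. \congen{U : H}$ well-formed I would peel off the universal binders one at a time with the $\forall$ rule, reducing the goal to $(\Delta, \omany{a}); \Xi; \Gamma \vdash \wf{\congen{U : H}}$; this follows from the induction hypothesis on $U$ once I have $(\Delta, \omany{a}, \Xi) \vdash \wf{H}$, which I obtain by inverting $\Delta \vdash \wf{\forall \omany{a}.H}$ to get $(\Delta, \omany{a}) \vdash \wf{H}$ and then weakening by $\Xi$ (alpha-renaming $\omany{a}$ away from $\Xi$ if necessary). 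The second conjunct is a $\Def$ constraint whose premises $(\Delta, \Xi) \vdash \wf{\forall \omany{a}.H}$ (by weakening) and $\Delta; \Xi; (\Gamma, x : \forall \omany{a}.H) \vdash \wf{\congen{N : A}}$ (by the induction hypothesis on $N$) close the case. The remaining annotated-let case, for non-values, is a simpler variant that avoids the quantifier bookkeeping entirely.
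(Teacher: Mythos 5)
Your proposal is correct and follows essentially the same route as the paper: the paper's proof is also a structural induction on $M$, merely compressed to the single observation that the free type variables of $\congen{M : A}$ are those of $A$ plus those in annotations of $M$, the latter being rigid by $\Delta;\Gamma \vdash \wf{M}$. Your case-by-case elaboration, including the weakening of type well-formedness under context extension and the handling of the annotated guarded-value let via $\msplit$, fills in exactly the details the paper leaves implicit.
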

\begin{proof}
By induction on structure of $M$.
We observe that the only free type variables of $\congen{M : A}$ are those
appearing freely in $A$ and in type annotations appearing in $M$.
By $\Delta;\Gamma \vdash M : A$ we have that all such free type variables in the
annotations in $M$ are rigid variables from $\Delta$.
Hence, the only free unification variables of $\congen{M : A}$ are those in $A$.

\end{proof}

\begin{lem}[Satisfiability implies well-formedness]
\label{lemma:sat-implies-wf-ness}
If $\Delta;\Xi;\Gamma;\rsubst \vdash C$ then $\Delta;\Xi;\Gamma \vdash \wf{C}$.
\end{lem}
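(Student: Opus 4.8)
The plan is to proceed by induction on the derivation of $\Delta;\Xi;\Gamma;\rsubst \vdash C$; since the satisfiability judgement in \cref{paper-fig:constraints-semantics} is syntax-directed on $C$, this is the same as a case analysis on the shape of $C$. For each satisfiability rule I would read off the matching well-formedness rule of \cref{fig:constraint-wellformedness} and discharge its premises, appealing to the induction hypothesis on each satisfiability sub-derivation. Throughout I rely on the implicit assumptions attached to the judgement, namely $\Delta \vdash \wf{\Gamma}$ and $\Delta \vdash \rsubst : \Xi \Rightarrow_\poly \emptydelta$, and on the fact (already used for \lab{Sem-Def}) that constraint well-formedness ignores the types recorded in $\Gamma$, tracking only the bound term variables. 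The structural cases \lab{Sem-True} and \lab{Sem-And} and the binder cases \lab{Sem-Forall} and \lab{Sem-Exists} are then immediate from the induction hypothesis, once one observes that the implicit assumptions are preserved when passing to the sub-derivation (weakening $\Delta$ by a rigid variable, respectively extending $\rsubst$ by a flexible one).

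The first point requiring care is that several well-formedness rules demand $(\Delta,\Xi) \vdash \wf{A}$ for a type $A$ that does not itself appear in a well-formedness premise of the corresponding satisfiability rule; this happens for \lab{Sem-Freeze} and \lab{Sem-Instance}. Here I would recover the well-formedness of $A$ from the equation relating $\rsubst(A)$ to a type already known to be well-formed over $\Delta$: in \lab{Sem-Freeze} we have $\rsubst(A) = \Gamma(x)$, which is well-formed over $\Delta$ by $\Delta \vdash \wf{\Gamma}$, and in \lab{Sem-Instance} we have $\rsubst(A) = \rsubst'(H)$, again well-formed over $\Delta$. Since $\rsubst$ has domain $\Xi$ and maps into $\Delta$, any free variable of $A$ lying outside $\Xi$ is fixed by $\rsubst$ and hence occurs in $\rsubst(A)$; as the latter is well-formed over $\Delta$, such a variable must lie in $\Delta$. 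Therefore $\ftv(A) \subseteq (\Delta,\Xi)$, which is exactly $(\Delta,\Xi) \vdash \wf{A}$, and the side conditions $x \in \Gamma$ follow from $\Gamma(x)$ being defined. The case \lab{Sem-Equiv} is immediate, as its premises already supply $(\Delta,\Xi) \vdash_\poly \wf{A}$ and $(\Delta,\Xi) \vdash_\poly \wf{B}$. For \lab{Sem-Mono} and for the annotation in \lab{Sem-Def}, the goals $a \in (\Delta,\Xi)$ and $\ftv(A) \subseteq (\Delta,\Xi)$ follow from the same substitution-and-free-variable argument applied to the monomorphism premises $\Delta \vdash_\mono \wf{\rsubst(a)}$.

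The main obstacle is the let cases, and specifically \lab{Sem-LetPoly}. For \lab{Sem-LetMono} the sub-derivation for $C_1$ supplied by the rule is $\Delta;(\Xi,a);\Gamma;\rsubst[a\mapsto A]\vdash C_1$, so the induction hypothesis gives $\Delta;(\Xi,a);\Gamma \vdash \wf{C_1}$ directly, and the $C_2$ sub-derivation yields $C_2$'s well-formedness (modulo the ignored type of $x$); these assemble into the required derivation. For \lab{Sem-LetPoly}, however, every available satisfiability derivation of $C_1$ lives over an \emph{extended} rigid context: the relevant premise reads $(\Delta,\many{b});(\Xi,a);\Gamma;\rsubst[a\mapsto A]\vdash C_1$, where $\many{b}$ are the freshly generalised variables. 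The induction hypothesis therefore only yields $(\Delta,\many{b});(\Xi,a);\Gamma \vdash \wf{C_1}$, whereas the well-formedness rule for $\Let_\poly$ requires rigid context exactly $\Delta$. To bridge this gap I would prove and invoke a strengthening lemma for constraint well-formedness: because the judgement constrains free type variables only to lie in the combined context $(\Delta,\Xi)$ (the rigid/flexible split being immaterial for the type-level premises), if $(\Delta,\Delta');\Xi;\Gamma \vdash \wf{C}$ and $\Delta' \disjoint \ftv(C)$ then $\Delta;\Xi;\Gamma \vdash \wf{C}$. It then remains to observe that the variables $\Delta_\mathrm{m}$ introduced by $\meta{mostgen}$ — and hence $\many{b} \subseteq \Delta_\mathrm{m}$ — are fresh, so $\many{b} \disjoint \ftv(C_1)$, which licenses strengthening $\many{b}$ away. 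I expect this freshness bookkeeping, together with the statement and proof of the strengthening lemma, to be the only non-routine ingredient; all remaining premises of the $\Let$ well-formedness rule then come from the induction hypothesis as in the other cases.
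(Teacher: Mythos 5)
Your proposal is correct, and at bottom it follows the same route as the paper: a rule-by-rule check that each satisfiability rule already supplies the premises of the corresponding well-formedness rule. The difference is one of rigour rather than strategy. The paper's proof is a three-sentence observation that the rules ``explicitly or implicitly'' require the types to be well-formed over $\Delta;\Xi$ and that $\lab{Sem-Freeze}$/$\lab{Sem-Instance}$ force $x \in \Gamma$; it does not set up an induction, does not spell out how well-formedness of $A$ is recovered in the freeze and instance cases, and says nothing at all about the let cases. You fill in exactly the points the paper elides: the free-variable argument that $\rsubst(A) = \Gamma(x)$ (resp.\ $\rsubst'(H)$) being well-formed over $\Delta$ forces $\ftv(A) \subseteq (\Delta,\Xi)$ is the right way to make the ``implicitly'' precise, and the observation that in $\lab{Sem-LetPoly}$ the $C_1$ sub-derivation lives over the enlarged rigid context $(\Delta,\many{b})$ — so that a strengthening lemma, justified by freshness of $\Delta_\mathrm{m}$, is needed to land back in the context demanded by the well-formedness rule — identifies a genuine gap in the paper's argument that its one-line proof silently skips. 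Your version is the one a careful reader would want; the extra ingredient (the strengthening lemma plus its freshness side condition) is routine to prove but is not free, and the paper should arguably state it.
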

\begin{proof}
We observe that the rules $\lab{Sem-Equiv}$, $\lab{Sem-Freeze}$, $\lab{Sem-Inst}$
$\lab{Sem-Def}$ all (explicitly or implicitly) require the types found in the constraint under consideration to be well-formed in the context $\Delta;\Xi$.
The rule $\lab{Sem-Mono}$ imposes $a \in (\Delta, \Xi)$.
Finally, the rules $\lab{Sem-Freeze}$, $\lab{Sem-Inst}$ require $x \in \Gamma$.
\end{proof}

See the note at the beginning of \cref{appendix:constraint-lang-proofs},
for an explanation of the dependencies between
\cref{theorem:constraint-generation-soundness},
\cref{theorem:constraint-generation-completeness} and
\cref{lemma:aux:principal-iff-mostgen}.

\principaliffmostgen*

\begin{proof}

Recall the definitons of $\meta{principal}$ and $\meta{mostgen}$:

\begin{align}
&\meta{principal}(\Delta, \Gamma, M, \Delta', A) = \notag \\
&\qquad\Delta, \Delta'; \Gamma \vdash M : A \;\text{ and }
  \label{proof:principal-vs-mostgen:principal-part-one}\\
&\begin{aligned}
\qquad(\text{for all }\Delta'', A'' \mid
       \bl
       \text{if }
       \Delta, \Delta''; \Gamma \vdash M : A'' \\
       \text{then there exists }
       \rsubst
       \text{ such that } \\
       \;\Delta  \vdash \rsubst : \Delta' \Rightarrow_\poly \Delta''
       \text{ and }
       \rsubst(A) = A'' )
       \el
\end{aligned}
\label{proof:principal-vs-mostgen:principal-part-two}
\end{align}

\begin{align}
&\meta{mostgen}(\Delta, a, \Gamma, \rsubst, \congen{M : a}, \Delta', [a \mapsto A]) = \notag \\
&\qquad (\Delta, \Delta'); a; \Gamma; [a \mapsto A] \vdash \congen{M : a} \;\text{ and }\;
  \label{proof:principal-vs-mostgen:mostgen-part-one} \\
&\qquad
\begin{aligned}
  (\text{for all }\Delta'', \rsubst'' \mid
       \bl
       \text{if }
       (\Delta, \Delta'');  a; \Gamma; \rsubst'' \vdash \congen{M : a} \\
       \text{then there exists }
       \rsubst
       \text{ such that } \\
       \;\Delta  \vdash \rsubst : \Delta' \Rightarrow_\poly \Delta''
       \text{ and }
       \rsubst'' = \rsubst \comp [a \mapsto A] )\\
       \el \\
\end{aligned}
\label{proof:principal-vs-mostgen:mostgen-part-two}
\end{align}

\begin{itemize}
\item[$\mathbf{\Rightarrow}$]
We apply \cref{theorem:constraint-generation-soundness} to
$(\ref{proof:principal-vs-mostgen:principal-part-one})$, immediately yielding
the desired property
$(\ref{proof:principal-vs-mostgen:mostgen-part-one})$.

To show $(\ref{proof:principal-vs-mostgen:mostgen-part-two})$, we assume
$(\Delta, \Delta''); a; \Gamma; \rsubst'' \vdash \congen{M : a}$,
which implies that $\rsubst'' = [a \mapsto A'']$ for some $A''$.


By \cref{theorem:constraint-generation-completeness} this gives us
$(\Delta, \Delta''); \Gamma \vdash M : A''$.
According to $(\ref{proof:principal-vs-mostgen:principal-part-two})$, there
exists a $\rsubst$ with the desired properties.

\item[$\mathbf{\Leftarrow}$]

We apply \cref{theorem:constraint-generation-completeness} to
(\ref{proof:principal-vs-mostgen:mostgen-part-one}),
which gives us
satisfaction of property
$(\ref{proof:principal-vs-mostgen:principal-part-one})$.

To show (\ref{proof:principal-vs-mostgen:mostgen-part-two}),
we  assume $\Delta, \Delta''; \Gamma \vdash M : A''$.
\Cref{theorem:constraint-generation-soundness} then gives us
$(\Delta, \Delta'');  a ; \Gamma; [a \mapsto A''] \vdash \congen{M : a}$.
According to $(\ref{proof:principal-vs-mostgen:mostgen-part-two})$ there exists
an appropriate $\rsubst$.
\end{itemize}
\end{proof}


\subsection{Proof of \cref{theorem:constraint-generation-soundness}}


\constraintgenerationcompleteness*

\begin{proof}
By structural induction on $M$, focusing on the let cases.
\fe{Handle other cases, too?}

\begin{description}
\item[Case $\Let\; x = M' \;\In\; N'$, where $M' \in \dec{GVal}$]

The derivation of $\Delta; \Gamma \vdash M : A$ has the following form, for
some $B, B', \omany{a}$.
\[
\inferrule
    { \omany{a} = \ftv(B') - \Delta \\
     (\Delta, \omany{a}, M', B') \Updownarrow B \\
     (\Delta, \many{a}); \Gamma \vdash M' : B' \\
     \type{\Delta; \Gamma, x : B}{N : A} \\
     \meta{principal}(\Delta, \Gamma, M', \many{a}, B')
    }
    {\type{\Delta; \Gamma}{\Let \; x = M'\; \In \; N : A}}
\]
By $M' \in\dec{GVal}$ we have $B = \forall \omany{a}. B'$ and
$\congen{M : a} = \Let_\poly\; x = \letexists{b} \congen{M ' : b} \;\In\; \congen{N' : a}$.
We assume w.l.o.g.\ that $b \disjoint (\Delta, \many{a})$.
By induction we have
$(\Delta, \many{a}); b; \Gamma; [b \mapsto B'] \vdash \congen{M' : b}$ and
$\Delta; a; (\Gamma, x : B); [a \mapsto A] \vdash \congen{N' : a}$.
We can weaken the former to
$(\Delta, \many{a}); (a, b); \Gamma; [a \mapsto A, b \mapsto B'] \vdash \congen{M' : b}$.
By \cref{lemma:aux:principal-iff-mostgen},
$\meta{principal}(\Delta, \Gamma, M', \many{a}, B')$ implies
$\meta{mostgen}(\Delta, b, \Gamma, \congen{M' : b}, \many{a}, [b \mapsto B'])$.
According to \cref{lemma:weakening-mostgen} we can weaken this to
$\meta{mostgen}(\Delta, (a, b), \Gamma, \congen{M' : b}, (\many{a}, c), [a \mapsto c, b \mapsto B'])$
for some fresh $c$.

Let $\Delta_\mathrm{o} \coloneqq c$ and
$\rsubst' \coloneqq [c \mapsto \dec{unit}]$ and
$\rsubst_\mathrm{m} = [a \mapsto c, b \mapsto B']$.
%
Due to  $c \not\in \ftv(B')$ we have $\rsubst'(\rsubst(b)) = \rsubst(B') = B'$.

We can then derive the following
\[
\inferrule
  {
    \meta{mostgen}(\Delta, (a, b), \Gamma, \congen{M' : b}, (\many{a}, c), \rsubst_\mathrm{m}) \\\\
    \Delta_\mathrm{o} = \ftv(\rsubst_\mathrm{m}(a)) - \Delta \\
    \omany{a} = \ftv(\rsubst_\mathrm{m}(b)) - \Delta, \Delta_\mathrm{o} \\\\
    \Delta \vdash \rsubst' : \Delta_\mathrm{o} \Rightarrow_\mono \emptydelta \\
    B' = \rsubst'(\rsubst_\mathrm{m}(b)) \\\\
    (\Delta, \many{a}); (a, b); \Gamma; \rsubst[a \mapsto A, b \mapsto B'] \vdash \congen{M' : b} \\
    \Delta; a; (\Gamma, x : B); \rsubst \vdash \congen{N' : a}
  }
  {\Delta; a; \Gamma; [a \mapsto A] \vdash \Let_\poly\; x = \letexists{b} \congen{M' : b} \;\In\; \congen{N' : a}}
\]

\item[Case $\Let\; x = M' \;\In\; N'$, where $M' \not\in \dec{GVal}$]

We have a derivation of the same shape as in the previous case.
However, by $M' \not\in\dec{GVal}$ we have $B = \delta(B')$ and
$\congen{M : a} = \Let_\mono\; x = \letexists{b} \congen{M ' : b} \;\In\; \congen{N' : a}$,
for some $\delta$ such that
$\Delta \vdash \rsubst : \many{a} \Rightarrow_\mono \emptydelta$.

Let $\Delta_\mathrm{o} $ and
$\rsubst_\mathrm{m}$ as in the previous case
and extend $\rsubst$ to $\rsubst'$ by setting $\rsubst'(a) = c$.
This implies $B = \rsubst(B') = \rsubst'(B') = \rsubst(\rsubst_\mathrm{m}(b))$.

We also extend $\rsubst$ to $\rsubst''$ by setting $\rsubst''(a') = a'$ for all $a' \in \Delta$.

Using similar reasoning as in the previous case we get
$\meta{mostgen}(\Delta, (a, b), \Gamma, \congen{M' : b}, (\many{a}, c), \rsubst_\mathrm{m})$,
$\Delta; a; (\Gamma, x : B); [a \mapsto A] \vdash \congen{N' : a}$, and
$(\Delta, \many{a}); a; \Gamma; [a \mapsto A, b \mapsto B'] \vdash \congen{M' : b}$.
Applying \cref{lemma:refinement-on-constraint-sat} to the latter using
$\delta''$ yields
$\Delta; (a, b); \rsubst''(\Gamma); [a \mapsto \rsubst''(A), b \mapsto \rsubst''(B')] \vdash \congen{M' : b}$,
which is equivalent to
$\Delta; (a, b); \Gamma; [a \mapsto A, b \mapsto B] \vdash \congen{M' : b}$,

We can then derive the following
\[
\inferrule
  {
    \meta{mostgen}(\Delta, (a, b), \Gamma, \congen{M' : b}, (\many{a}, c), \rsubst_\mathrm{m}) \\\\
    \Delta \vdash \rsubst' : (\many{a}, c) \Rightarrow_\mono \emptydelta \\
    B = \rsubst'(\rsubst_\mathrm{m}(a)) \\\\
    \Delta; (a, b); \Gamma; \rsubst[b \mapsto B] \vdash \congen{M' : b} \\
    \Delta; a; (\Gamma,x : B); \rsubst \vdash \congen{N' : a}
  }
  {\Delta; a; \Gamma; [a \mapsto A] \vdash \Let_\mono\; x = \letexists{b} \congen{M' : b} \;\In\; \congen{N' : a}}
\]

\item[Case $\Let (x : B) = M' \;\In\; N'$, where $M' \in \dec{GVal}$:]
Let $\omany{a}$, $H$ such that $B = \forall \omany{a}.H$.

The derivation of $\Delta; \Gamma \vdash M' : A$ has the following form, for some $\Delta', A'$:

\[
\inferrule*[Lab=\freezemlLab{LetAnn}]
    {(\Delta', B') = \msplit(B, M) \\
     \type{(\Delta, \Delta'); \Gamma}{M' : B'} \\
     \type{\Delta; (\Gamma, x : B)}{N' : A}
    }
    {\type{\Delta; \Gamma}{\Let \; (x : B) = M'\; \In \; N' : A}}
\]

By $M' \in \dec{GVal}$, we have $\Delta' = \many{a}$, $B' = H$ and
$\congen{M : A} = (\forall \omany{a}. \congen{M' : H}) \wedge \Def\; (x : B) \;\In\; \congen{N' : A}$.
Let $b$ be fresh.
By induction, we have
$(\Delta, \Delta');  b; \Gamma; [b \mapsto B'] \vdash \congen{M' : b}$
and
$\Delta; a; (\Gamma, x : B); [a \mapsto A] \vdash \congen{N' : A}$.
By \cref{lemma:substitution-on-constraint} we can substitute the former to
$(\Delta, \Delta');  \emptydelta; \Gamma; \emptysubst \vdash \congen{M' : H}$.

Recall that an implicit precondition of $\type{\Delta; (\Gamma, x : B)}{N' : A}$
we have $\Delta \vdash \wf{\Gamma}$, which implies $\Delta \vdash \wf{B}$.
Therefore, we have that $\ftv(B) - \Delta$ is empty and $B[a / A] = B$.

We can now derive the desired judgement.
\[
\inferrule*
    {
    \inferrule*
    {\inferrule*{(\Delta, \Delta');a; \Gamma;[a \mapsto A] \vdash  \congen{M' : H}}{\vdots}}
    {\Delta;a; \Gamma;[a \mapsto A] \vdash (\forall \omany{a}. \congen{M' : H})} \\
    \inferrule*
    {
     \text{for all } a \in \ftv(B) - \Delta \;\mid\;  \Delta; \Xi; \Gamma; \rsubst \vdash \monoc(a) \\\\
    \Delta; \Xi; (\Gamma,x :   B[a/A]); \rsubst \vdash C
    }
    {\Delta;a; \Gamma;[a \mapsto A] \vdash  \Def\; (x : B) \;\In\; \congen{N' : A}}
    }
    {\Delta;a; \Gamma;[a \mapsto A] \vdash (\forall \omany{a}. \congen{M' : H}) \wedge \Def\; (x : B) \;\In\; \congen{N' : A}}
\]

\item[Case $\Let (x : A) = M' \;\In\; N'$, where $M' \not\in \dec{GVal}$:]
Analogous to previous case, with $\Delta' = \emptyset$, $B' = B$.

\end{description}

\end{proof}


\subsection{Proof of \cref{theorem:constraint-generation-completeness}}

\constraintgenerationsoundness*
\begin{proof}
\fe{We are definitelt relying on $\Delta \vdash \wf{M}$ now
with no obvious way to get rid of it. So
\cref{lemma:aux:principal-iff-mostgen} needs it to}
We prove the following, slightly more general property by induction on $M$:
If $\Delta;\Gamma \vdash \wf{M}$ and $\Delta; \Xi \vdash \wf{A}$ and $\Delta; \Xi
; \Gamma; \rsubst \vdash \congen{M : A}$, then $\Delta; \Gamma
\vdash M : \rsubst(A)$.

Note that by the implicit preconditions of
$\Delta; \Xi; \Gamma;\rsubst \vdash \congen{M : A}$ we have
$\Delta \disjoint \Xi$, $\Delta \vdash \wf{\Gamma}$, and
$\Delta \vdash \rsubst : \Xi \Rightarrow_\poly \emptydelta$.
The latter implies $\Delta \vdash \wf{\rsubst(A)}$.

We focus on the let cases.
\begin{description}



\item[Case $\Let\; x = M' \;\In\; N'$,  $M' \in \dec{GVal}$:]
We have
$\Delta; \Xi; \Gamma; \rsubst \vdash \Let_{\poly}\; x = \letexists{b} \congen{M' : b} \;\In\; \congen{N' : A}$.
The derivation of this has the following form for some
$B, \omany{a}, \Delta_\mathrm{m}, \Delta_\mathrm{o}, \rsubst_\mathrm{m}$ and
$\rsubst'$:
\[
\inferrule
  {
\meta{mostgen}(\Delta, (\Xi, b), \Gamma, \congen{M' : b}, \Delta_\mathrm{m}, \rsubst_\mathrm{m}) \\\\
    \Delta_\mathrm{o} = \ftv(\rsubst_\mathrm{m}(\Xi)) - \Delta \\
    \omany{a} = \ftv(\rsubst_\mathrm{m}(b)) - \Delta, \Delta_\mathrm{o} \\\\
    \Delta \vdash \rsubst' : \Delta_\mathrm{o} \Rightarrow_\mono \emptydelta \\
    B = \rsubst'(\rsubst_\mathrm{m}(b)) \\\\
    (\Delta, \many{a}); (\Xi, b); \Gamma; \rsubst[b \mapsto B] \vdash \congen{M' : b} \\
    \Delta; \Xi; (\Gamma,x : \forall \omany{a}. B); \rsubst \vdash \congen{N' : A}
  }
  {\Delta; \Xi; \Gamma; \rsubst \vdash \Let_{\poly}\; x = \letexists{b} \congen{M' : b} \;\In\; \congen{N' : A}}
\]

By induction, this gives us $(\Delta, \many{a}); \Gamma \vdash M' : B$ and
$\Delta; (\Gamma, x : \forall \omany{a}.B) \vdash N' : \rsubst(A)$.
By $M' \in \dec{GVal}$,
we have $\Delta \vdash M'$ and $\Delta \vdash \Gamma$.
According to \cref{lemma:translation-yields-wf-constraint} this implies
$\Delta; b; \Gamma \vdash \wf{\congen{M' : b}}$ (i.e., $b$ is the only free
flexible variable of $\congen{M' : b}$).
This means that $\congen{M' : b}$ leaves all variables of $\Xi$ entirely
unconstrained.
Therefore, we have that $\rsubst_\mathrm{m}$ maps all $a \in \Xi$ to pairwise
different variables $c$ from $\Delta_\mathrm{m}$ and
$c \not\in \ftv(\rsubst_{\mathrm{m}}(b))$ for all such $c$.
This implies $\omany{a} = \ftv(\rsubst_\mathrm{m}(b)) - \Delta,\Delta_\mathrm{o} = \ftv(\rsubst_\mathrm{m}(b)) - \Delta,\Delta_\mathrm{o}$ (i.e., removing
$\Delta_\mathrm{o}$ from the subtracted context has no effect) and $B = \rsubst_\mathrm{m}(b)$ (i.e.,
applying $\rsubst'$ to $\rsubst_\mathrm{m}(b)$ has no effect).

Consequently, we may strengthen
$\meta{mostgen}(\Delta, (\Xi, b), \Gamma, \congen{M' : b}, \Delta_\mathrm{m}, \rsubst_\mathrm{m})$
to
$\meta{mostgen}(\Delta, b, \Gamma, \congen{M' : b}, \many{a}, [b \mapsto \rsubst_\mathrm{m}(b)])$.
We may then apply \cref{lemma:aux:principal-iff-mostgen} to the latter, yielding
$\meta{principal}(\Delta, \Gamma, M', \many{a}, B)$.
Finally, \cref{lemma:principal-type-of-gval-is-guarded} shows us that $B$ is a
guarded type, and we may refer to it as $H$.

\fe{Check wf-ness for applying IH}
\fe{\Cref{lemma:aux:principal-iff-mostgen} is why we need $\Delta;\Gamma \vdash \wf{M}$}

We can therefore derive the desired property
$\Delta; \Gamma \vdash \Let\; x = M' \;\In\; N' : \rsubst(A)$ as follows:
\[
\inferrule*
    {\omany{a} = \ftv(H) - \Delta \\
     (\Delta, \omany{a}, M', H) \Updownarrow \forall \omany{a}.H  \\
     (\Delta, \many{a}); \Gamma \vdash M' : H \\
     \type{\Delta; (\Gamma, x : \forall \omany{a}.H)}{N' : \rsubst(A)} \\\\
     \meta{principal}(\Delta, \Gamma, M', \many{a}, H)
    }
    {\type{\Delta; \Gamma}{\Let \; x = M'\; \In \; N' : \rsubst(A)}}
\]
\fe{Mention well-formedness of $\Gamma, x : \forall \many{a}.H$?}

\item[Case $\Let\; x = M' \;\In\; N'$ if $M' \not\in \dec{GVal}$:]
This case is largely analogous to the previous one.

This time we have a derivation of the form:
\[
\inferrule
  {
    \meta{mostgen}(\Delta, (\Xi, a), \Gamma, \congen{M' : b}, \Delta_\mathrm{m}, \rsubst_\mathrm{m}) \\\\
    \Delta \vdash \rsubst' : \Delta_\mathrm{m} \Rightarrow_\mono \emptydelta \\
    B = \rsubst'(\rsubst_\mathrm{m}(b)) \\\\
    \Delta; (\Xi, a); \Gamma; \rsubst[b \mapsto B] \vdash \congen{M' : b} \\
    \Delta; \Xi; (\Gamma,x : B); \rsubst \vdash \congen{N' : A}
  }
  {\Delta; \kenv; \Gamma; \rsubst \vdash \Let_\mono\; x = \letexists{b} \congen{M' : b} \;\In\; \congen{N' : A}}
\]

Let $A' \coloneqq \rsubst_\mathrm{m}(b)$ and
$\omany{a} \coloneqq \ftv(A') - \Delta$.
By
$\meta{mostgen}(\Delta, (\Xi, a), \Gamma, \congen{M' : b}, \Delta_\mathrm{m}, \rsubst_\mathrm{m})$
we have
$(\Delta, \Delta_\mathrm{m}); (\Xi, a); \Gamma; \rsubst_\mathrm{m} \vdash \congen{M' : b}$.
By induction, this implies $(\Delta, \Delta_\mathrm{m}); \Gamma \vdash M' : A'$,
which we can strengthen to $(\Delta, \many{a}); \Gamma \vdash M' : A'$

We obtain $\Delta; (\Gamma, x : B) \vdash \rsubst(A)$ directly by applying the
induction hypothesis to
$\Delta; \Xi; (\Gamma,x : B); \rsubst \vdash \congen{N' : A}$.
Likewise, we obtain $\meta{principal}(\Delta, \Gamma, M', \many{a}, A')$ using
the same reasoning as in the previous case.

Finally, we observe that $\Delta \vdash \restriction{\rsubst'}{\many{a}} : \many{a} \Rightarrow \emptydelta$ holds
and $B = \restriction{\rsubst'}{\many{a}}(A')$.
Therefore, we have $(\Delta, \omany{a}, M, A') \Updownarrow B$ \\

\fe{Check wf-ness for applying IH}
Thus, we can derive the following:
\[
\inferrule*
    {
     \omany{a} = \ftv(A') - \Delta \\
     (\Delta, \omany{a}, M, A') \Updownarrow B \\
     \Delta, \many{a}; \Gamma \vdash M : A' \\
     \type{\Delta; \Gamma, x : A}{N : \rsubst(A)} \\
     \meta{principal}(\Delta, \Gamma, M, \many{a}, A')
    }
    {\type{\Delta; \Gamma}{\Let \; x = M'\; \In \; N' : \rsubst(A)}}
\]

\item[Case $\Let\; (x : B) = M' \;\In\; N'$ if $M' \in \dec{GVal}$:]
Let $\omany{b}$ and $H$ such that $B = \forall \omany{b}.H$.

We then have
$\Delta; \Xi; \Gamma; \rsubst \vdash (\forall \omany{b}. \congen{M' : H})
  \;\wedge\; \Def\; (x : B) \;\In\; \congen{N' : A}$.
The derivation tree of this contains derivations for
$(\Delta, \many{b}); \Xi; \Gamma; \rsubst \vdash \congen{M' : H}$ and
$\Delta; \Xi; (\Gamma, x : \rsubst(B)); \rsubst \vdash \congen{N' : A}$.
By $\Delta;\Gamma \vdash \wf{M}$ we have $\Delta \vdash B$ (i.e., $B$ contains no flexible variables).
Therefore, $\rsubst(B) = B$ and $\rsubst(H) = H$.
\fe{Here we rely on the fact that $M$ only contains variables from $\Delta$.
Otherwise, it seems like even the monomorphism restriction on the def constraint
wouldn't save us (which means that the toplevel quantifiers don't change): If
$\rsubst(B) \neq B$ then the derivation below doesn't work anymore: Note that it
requires the $B$ in the annotation to be the same as the one used for typing
$N'$.}
By induction, this then gives us
$(\Delta, \many{a}); \Gamma\vdash M' : H$ and
$\Delta; (\Gamma, x : B) \vdash N' : \rsubst(A)$.
\fe{Check wf-ness for applying IH}

Hence, we can derive
\[
  \inferrule*
    {(\many{a}, H) = \msplit(B, M') \\
     \type{(\Delta, \many{a}); \Gamma}{M' : H} \\
     \type{\Delta; \Gamma, x : B}{N' : \rsubst(A)}
    }
    {\type{\Delta; \Gamma}{\Let \; (x : B) = M'\; \In \; N' : \rsubst(A)}}
\]

\item[Case $\Let\; (x : B) = M' \;\In\; N'$ if $M' \not\in \dec{GVal}$]
This case is similar to the previous one:
We  have
$\Delta; \Xi; \Gamma; \rsubst \vdash  \congen{M' : B})
  \;\wedge\; \Def\; (x : B) \;\In\; \congen{N' : A}$ this time
and $\rsubst(B) = B$ due to $\Delta;\Gamma \vdash \wf{M}$.

We get
$\Delta; \Gamma\vdash M' : B$ and
$\Delta; (\Gamma, x : B) \vdash N' : \rsubst(A)$
by induction and can derive
\fe{Check wf-ness for applying IH}
\[
  \inferrule*
    {(\emptydelta, B) = \msplit(B, M') \\
     \type{\Delta; \Gamma}{M' : B} \\
     \type{\Delta; \Gamma, x : B}{N' : \rsubst(A)}
    }
    {\type{\Delta; \Gamma}{\Let \; (x : B) = M'\; \In \; N' : \rsubst(A)}}
\]

\end{description}
\end{proof}


\section{Proofs for Section~\ref{section:constraint-solving-metatheory}}

We again proceed by collecting auxiliary lemmas in \cref{section:solver-aux-lemmas}
before proving each theorem from \cref{section:constraint-solving-metatheory} in
an individual subsection.

\subsection{Auxiliary Lemmas}
\label{section:solver-aux-lemmas}

The following lemma states how solutions for constraints $\Uof{\Theta, \theta}$
look like.
The lemma is somewhat specialised for the specific places where we use it, by
introducing an extra type context $\Delta'$ and an existential quantifier around
$\Uof{\Theta, \theta}$.
\begin{lem}
\label{lemma:satisfying-U}
Let
$\Delta \vdash \theta : \Theta \Rightarrow \Theta$
and
$\Delta, \Delta' \vdash \wf{\Gamma}$
and
$\ftv(\Theta) = \Xi, \many{a}$.

Then we have
\[
\ba{c}
(\Delta, \Delta'); \Xi; \Gamma; \rsubst \vdash \exists \many{a}. \Uof{\Theta, \theta} \\
\text{iff} \\
\text{there exists } \theta' \text{ such that } \Delta, \Delta' \vdash \theta' : \Theta \Rightarrow \emptydelta
  \text{ and } \rsubst = \restriction{(\theta' \comp \theta)}{\Xi}.
\ea
\]
\end{lem}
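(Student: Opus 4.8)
The plan is to strip both sides down to their atomic content and then exhibit the dictionary that translates between a \emph{uniform} (polymorphic) instantiation witnessing the existentials and a \emph{restriction-respecting} substitution $\theta'$. Recall that $\Uof{\Theta,\theta} = \bigwedge_{(b\,:\,\mono)\in\Theta}\monoc(b) \;\wedge\; \bigwedge_{b\in\ftv(\Theta)} b \ceq \theta(b)$. Discharging the outer $\exists\many{a}$ by repeated use of $\lab{Sem-Exists}$ produces, on the left, an instantiation $\rsubst''$ extending $\rsubst$ over all of $\ftv(\Theta)=\Xi,\many{a}$ with $(\Delta,\Delta')\vdash\rsubst'':(\Xi,\many{a})\Rightarrow_\poly\emptydelta$, under which every conjunct of $\Uof{\Theta,\theta}$ must hold. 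By $\lab{Sem-Mono}$ each $\monoc(b)$ becomes $(\Delta,\Delta')\vdash_\mono\wf{\rsubst''(b)}$, and by $\lab{Sem-Equiv}$ each $b\ceq\theta(b)$ becomes $\rsubst''(b)=\rsubst''(\theta(b))$; the well-formedness side conditions of $\lab{Sem-Equiv}$ hold automatically since $b$ and $\theta(b)$ mention only variables in $\Delta,\ftv(\Theta)$. The term context $\Gamma$ plays no role, since $\Uof{\Theta,\theta}$ contains no term-variable constraints.

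For the forward direction I would take $\theta' \coloneqq \restriction{\rsubst''}{\ftv(\Theta)}$. This is a well-formed substitution $\Delta,\Delta'\vdash\theta':\Theta\Rightarrow\emptydelta$: for a variable carrying restriction $\poly$ any well-formed codomain type is admissible, while for $(b\,:\,\mono)\in\Theta$ the required $(\Delta,\Delta')\vdash_\mono\wf{\theta'(b)}$ is exactly the $\monoc(b)$ conjunct. It then remains to verify $\rsubst=\restriction{(\theta'\comp\theta)}{\Xi}$: for $b\in\Xi$ we have $(\theta'\comp\theta)(b)=\theta'(\theta(b))=\rsubst''(\theta(b))$, because $\theta'$ and $\rsubst''$ agree on $\ftv(\Theta)$, both fix $\Delta$, and $\theta(b)$ ranges only over $\Delta,\ftv(\Theta)$; by the equation conjunct this equals $\rsubst''(b)=\rsubst(b)$, the last step using that $\rsubst''$ extends $\rsubst$ on $\Xi$. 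Notably this direction needs no idempotency.

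For the backward direction, given $\theta'$ with $\Delta,\Delta'\vdash\theta':\Theta\Rightarrow\emptydelta$ and $\rsubst=\restriction{(\theta'\comp\theta)}{\Xi}$, I would supply as existential witness the instantiation $\rsubst''\coloneqq\restriction{(\theta'\comp\theta)}{\ftv(\Theta)}$. On $\Xi$ it agrees with $\rsubst$ by hypothesis, so it is a genuine extension of $\rsubst$, and it maps into types over $(\Delta,\Delta')$, so the $\lab{Sem-Exists}$ premises are met. The $\monoc$ conjuncts hold because $\theta$ maps a $\mono$-restricted variable to a monotype all of whose flexible variables are themselves $\mono$-restricted (as $\theta:\Theta\Rightarrow\Theta$ respects restrictions), and $\theta'$ then carries those to monotypes over $(\Delta,\Delta')$, so $\rsubst''(b)=\theta'(\theta(b))$ is a monotype. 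The equation conjuncts $\rsubst''(b)=\rsubst''(\theta(b))$ are where idempotency of $\theta$ is essential: writing $\rsubst''=\theta'\comp\theta$, we compute $\rsubst''(\theta(b))=\theta'(\theta(\theta(b)))=\theta'(\theta(b))=\rsubst''(b)$ using $\theta\comp\theta=\theta$. Assembling the conjuncts with $\lab{Sem-And}$ and reintroducing $\exists\many{a}$ with $\lab{Sem-Exists}$ gives the left-hand side.

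The main obstacle is precisely this last step: the clean witness $\rsubst''=\theta'\comp\theta$ satisfies every equation $b\ceq\theta(b)$ only because $\theta$ is idempotent, as is guaranteed for the substitutions arising in well-formed solver states (cf.\ the idempotency remark). For a non-idempotent $\theta$ such as $[b\mapsto a,\,c\mapsto b]$ the construction breaks, since the equation for $c$ would demand $\theta'(b)=\theta'(a)$, which the hypothesis does not provide. The remaining work---bookkeeping which variables are flexible versus rigid when applying $\theta'$ and $\rsubst''$, and checking the well-formedness premises of $\lab{Sem-Equiv}$ and $\lab{Sem-Exists}$---is routine.
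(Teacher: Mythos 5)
Your proof is correct and is essentially a fully worked-out version of the paper's own (two-sentence) argument: the equation conjuncts of $\Uof{\theta}$ pin the witness instantiation down to a refinement $\theta'\comp\theta$ of $\theta$, and the $\monoc$ conjuncts are exactly the restriction conditions making $\theta'$ a well-formed substitution $\Theta\Rightarrow\emptydelta$. Your observation that the backward direction leans on idempotency of $\theta$ (not listed in the lemma's hypotheses but supplied by the paper's state well-formedness invariant) is accurate and is a point the paper's proof glosses over.
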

\begin{proof}
By definition, the subconstraint $\Uof{\theta}$ of $\Uof{\Theta, \theta}$
contains constraints of the form $a \ceq \theta(a)$ for all $a$ in $\Theta$.
These constraints are satisfied by exactly those substitutions $\rsubst$ refining $\theta$.
In addition, the $\mono{}$ subconstraints in $\Uof{\Theta}$ are satisfied iff the refinement
respects the restrictions in $\Theta$.
\end{proof}
\fe{commented out, would be good to have}


The next lemma states how \emph{most general} solutions of constraints
$\Uof{\Theta, \theta}$ look like.
\begin{lem}
\label{lemma:mostgen-iff-mapping-free-to-fresh}
Let the following conditions hold:
\begin{itemize}
\item $\Delta \vdash \theta : \Theta \Rightarrow \Theta$
\item $\ftv(\Theta) = \Xi, \many{a}$
\item $\Delta \vdash \wf{\Gamma}$
\item $\omany{b} \approx \ftv(\theta) - \Delta$
\end{itemize}

Then we have
\[
\ba{c}
\meta{mostgen}((\Delta, \Delta'), \Xi, \Gamma, \exists \many{a}. \mathfrak{U}(\Theta, \theta), \Delta_\mathrm{m}, \rsubst_\mathrm{m}) \\
\text{iff} \\
\text{there exists } \omany{c} \subseteq \Delta_\mathrm{m} \text{ s.t. }
\rsubst_\mathrm{m} = \restriction{([\omany{b} \mapsto \omany{c}] \comp \theta)}{\Xi}
\ea
\]
\end{lem}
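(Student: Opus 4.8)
The plan is to reduce both directions to \cref{lemma:satisfying-U}, which already characterises every model of $\exists \many{a}. \mathfrak{U}(\Theta, \theta)$ in an arbitrary extension of the context as a composition with $\theta$. Concretely, for any $\Delta''$ a substitution $\rsubst''$ satisfies $((\Delta, \Delta'), \Delta''); \Xi; \Gamma; \rsubst'' \vdash \exists \many{a}. \mathfrak{U}(\Theta, \theta)$ iff $\rsubst'' = \restriction{(\theta'' \comp \theta)}{\Xi}$ for some $\theta''$ with $(\Delta, \Delta', \Delta'') \vdash \theta'' : \Theta \Rightarrow \emptydelta$. Thus the whole statement becomes a claim about which composite $\restriction{(\theta' \comp \theta)}{\Xi}$ is a most general model, and since the only flexible variables occurring in the range of $\theta$ are those in $\omany{b} \approx \ftv(\theta) - \Delta$ (all undetermined, by idempotence of $\theta$), the composite depends on $\theta'$ only through its action on $\omany{b}$.

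For the $\Leftarrow$ direction, suppose $\rsubst_\mathrm{m} = \restriction{([\omany{b} \mapsto \omany{c}] \comp \theta)}{\Xi}$ for distinct fresh $\omany{c} \subseteq \Delta_\mathrm{m}$. I would first discharge the model part of $\meta{mostgen}$: applying \cref{lemma:satisfying-U} with the witness extending $[\omany{b} \mapsto \omany{c}]$ yields $((\Delta, \Delta'), \Delta_\mathrm{m}); \Xi; \Gamma; \rsubst_\mathrm{m} \vdash \exists \many{a}. \mathfrak{U}(\Theta, \theta)$; here the $\monoc$ conjuncts of $\mathfrak{U}(\Theta)$ hold automatically because each $c_i$ is rigid, hence a monotype, and replacing the undetermined flexibles of a well-restricted type by monotypes preserves the restriction. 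For the minimality part, take any model $\rsubst''$ over $((\Delta, \Delta'), \Delta'')$ with witness $\theta''$ as above, and set $\rsubst' := [\omany{c} \mapsto \theta''(\omany{b})]$ (sending any unused variables of $\Delta_\mathrm{m}$ anywhere well-formed). Then $(\Delta, \Delta') \vdash \rsubst' : \Delta_\mathrm{m} \Rightarrow_\poly \Delta''$ since each $\theta''(b_i)$ is well-formed over $(\Delta, \Delta', \Delta'')$, and for $a \in \Xi$ one computes $\rsubst'(\rsubst_\mathrm{m}(a)) = [\omany{b} \mapsto \theta''(\omany{b})](\theta(a)) = \theta''(\theta(a)) = \rsubst''(a)$, using that the free flexibles of $\theta(a)$ lie in $\omany{b}$; hence $\rsubst' \comp \rsubst_\mathrm{m} = \rsubst''$, establishing $\meta{mostgen}$. (Distinctness of $\omany{c}$ is exactly what prevents an accidental merge that would destroy most-generality.)

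For the $\Rightarrow$ direction, assume $\meta{mostgen}(\ldots, \Delta_\mathrm{m}, \rsubst_\mathrm{m})$. Its model part together with \cref{lemma:satisfying-U} gives $\rsubst_\mathrm{m} = \restriction{(\theta_0 \comp \theta)}{\Xi}$ for some $\theta_0$ with rigid codomain over $(\Delta, \Delta', \Delta_\mathrm{m})$, so it only remains to show that $\theta_0$ acts on $\omany{b}$ as an injective renaming into a fresh rigid block of $\Delta_\mathrm{m}$. Here I would invoke the canonical most general model $\rsubst_\mathrm{g}$ built in the $\Leftarrow$ direction: since $\rsubst_\mathrm{g}$ is a model, $\meta{mostgen}$ of $\rsubst_\mathrm{m}$ supplies an instantiation $\rsubst'$ with $\rsubst' \comp \rsubst_\mathrm{m} = \rsubst_\mathrm{g}$. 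Because $\rsubst_\mathrm{g}$ sends each $b_i$ (through $\theta$) to a distinct fresh rigid variable, and an instantiation can neither merge distinct images nor turn a compound or quantified type into a variable, this forces each $\theta_0(b_i)$ to be a single rigid variable of $\Delta_\mathrm{m}$, pairwise distinct; taking $\omany{c} := \theta_0(\omany{b})$ and noting $\theta(a)$ mentions only the flexibles $\omany{b}$ for $a \in \Xi$ then gives $\rsubst_\mathrm{m} = \restriction{([\omany{b} \mapsto \omany{c}] \comp \theta)}{\Xi}$, as required. I expect this last structural step — pinning down that a most general $\theta_0$ must act as an injective renaming into the fresh rigid block rather than as an arbitrary instantiation — to be the main obstacle; it is precisely where the freshness of $\Delta_\mathrm{m}$ and the fact that instantiations can only specialise (never generalise or identify) variables are essential, mirroring the standard uniqueness-up-to-renaming argument for most general solutions.
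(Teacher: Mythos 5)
Your proposal is correct and takes essentially the same route as the paper: the paper's own proof is a two-line argument that the claim ``follows directly from'' \cref{lemma:satisfying-U} together with the observation that the most general refinement of $\theta$ is the one sending the undetermined flexibles $\ftv(\theta)-\Delta$ to fresh, pairwise distinct rigid variables (plus the remark that $[\omany{b}\mapsto\omany{c}]\comp\theta$ remains well-formed because rigid variables count as monomorphic). You have simply filled in the details the paper elides — in particular the explicit construction of the refining instantiation $[\omany{c}\mapsto\theta''(\omany{b})]$ for minimality and the injective-renaming argument in the $\Rightarrow$ direction — all of which are consistent with the paper's intended argument.
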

\begin{proof}
Follows directly from \cref{lemma:satisfying-U} and the observation that
the most general solution of $\Uof{\Theta, \theta}$
is the one that maps all flexible variables in $\ftv(\theta)$ to fresh, pairwise disjoint rigid variables.
Observe that by assumption $\Delta \vdash \theta : \Theta \Rightarrow \Theta$
and the fact that rigid variables are considered monomorphic we have
$\Delta \vdash [\omany{b} \mapsto \omany{c}] \comp \theta : \Theta \Rightarrow \emptydelta$
as well.
\end{proof}

\begin{restatable}{lem}{pluggedconstraintsamestackirrelevantsat}
\label{lemma:aux:plugged-constraint-same-stack-irrelevant-sat}
Let $C_1$, $C_2$, $\Delta, \Xi, \rsubst$ and $F$ be given
and let the following condition hold:
for all
$\Delta', \rsubst'$ we have
$(\Delta, \Deltaof{F}, \Delta'); (\Xi, \Xiof{F}); \rsubst'(\Gammaof{F}); \rsubst' \vdash C_1$ iff
$(\Delta, \Deltaof{F}, \Delta'); (\Xi, \Xiof{F}); \rsubst'(\Gammaof{F}); \rsubst' \vdash C_2$.

Then we have
have $\Delta; \Xi; \emptygamma; \rsubst
\vdash F[C_1] \text{ iff } \Delta; \Xi; \emptygamma; \rsubst \vdash F[C_2]$.
\end{restatable}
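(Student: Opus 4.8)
The plan is to prove the statement by structural induction on the stack $F$, exploiting the fact that $F[-]$ is defined by peeling off the topmost frame, so each inductive step reduces a claim about $F = F' :: f$ to a claim about the strictly smaller stack $F'$. For the base case $F = \emptystack$ we have $\Deltaof{F} = \Xiof{F} = \Gammaof{F} = \emptygamma$ and $F[C_i] = C_i$, so the conclusion is exactly the instance of the assumed equivalence obtained by taking $\Delta' = \emptydelta$ and $\rsubst' = \rsubst$ (note $\rsubst'(\Gammaof{F}) = \emptygamma$).

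For the inductive step, write $F = F' :: f$. Using the defining clause of $F[-]$ for the frame $f$, we have $F[C_i] = F'[\widehat{C_i}]$, where $\widehat{C_i}$ is the constraint obtained by wrapping $C_i$ in the construct associated with $f$ (e.g. $\widehat{C_i} = C_i \wedge D$ for $f = \Box \wedge D$, $\widehat{C_i} = \forall a.\,C_i$ for $f = \forall\,a$, $\widehat{C_i} = \Def\;(x:A)\;\In\;C_i$ for $f = \Def\;(x:A)$, and the analogous existential and let clauses). First I would verify that the hypothesis of the lemma holds for $F'$ and the pair $\widehat{C_1}, \widehat{C_2}$; then the induction hypothesis applied to $F'$ yields $\Delta;\Xi;\emptygamma;\rsubst \vdash F'[\widehat{C_1}]$ iff $\Delta;\Xi;\emptygamma;\rsubst \vdash F'[\widehat{C_2}]$, which is the desired conclusion. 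To verify the $F'$-hypothesis I fix arbitrary $\Delta', \rsubst'$, unfold the satisfiability rule for $\widehat{C_i}$ in the context synthesised from $F'$, relate $\Deltaof{F}, \Xiof{F}, \Gammaof{F}$ to $\Deltaof{F'}, \Xiof{F'}, \Gammaof{F'}$ through the frame $f$, and feed the result into the original hypothesis.

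For the conjunction, $\forall$, $\exists$ and $\Def$ frames this is routine: the corresponding rules ($\lab{Sem-And}$, $\lab{Sem-Forall}$, $\lab{Sem-Exists}$, $\lab{Sem-Def}$) reduce satisfiability of $\widehat{C_i}$ to satisfiability of $C_i$ in a context that extends the rigid context by the bound rigid variable ($\forall$), extends the flexible context and instantiation ($\exists$), or extends the term context by $x : \rsubst'(A)$ ($\Def$), while leaving the part not mentioning $C_i$ (the side conjunct $D$, the monomorphism side-condition on $A$) identical on both sides. In each case the required equivalence is an instance of the original hypothesis for suitably chosen $\Delta'', \rsubst''$; the only points needing care are (i) alpha-renaming the bound variable of a $\forall$ or $\exists$ frame to keep the type contexts disjoint, and (ii) observing that the freshly bound flexible variable of an $\exists$ frame does not occur in $\Gammaof{F'}$, so extending the instantiation leaves $\rsubst'(\Gammaof{F'})$ unchanged.

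The hard part will be the let frame $f = \Let_R\;x = \letexists{a}\Box\;\In\;D$, where $\widehat{C_i} = \Let_R\;x = \letexists{a}\,C_i\;\In\;D$. Unfolding $\lab{Sem-LetPoly}$ (respectively $\lab{Sem-LetMono}$), all premises that do not mention the body — the computation of $\Delta_\mathrm{o}$, $\omany{b}$, $A$ and the continuation derivation for $D$ — are literally identical for $C_1$ and $C_2$, and the body premise $(\Delta,\many{b});(\Xi,a);\Gamma;\rsubst'[a\mapsto A]\vdash C_i$ transfers directly through the original hypothesis, since there the ambient instantiation and term context $\rsubst'(\Gammaof{F'})$ are coupled and $a \notin \ftv(\Gammaof{F'})$. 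The genuine obstacle is the premise $\meta{mostgen}(\dots, C_i, \Delta_\mathrm{m}, \rsubst_\mathrm{m})$: by definition $\meta{mostgen}$ quantifies over all instantiations while holding the term context fixed at the ambient $\rsubst'(\Gammaof{F'})$, whereas the hypothesis only supplies the equivalence of $C_1$ and $C_2$ when the term context is coupled to the instantiation as $\rsubst''(\Gammaof{F'})$. I therefore expect the crux to be showing that $C_1$ and $C_2$ have the same model set over precisely the contexts $\meta{mostgen}$ inspects; I plan to obtain this by combining the original hypothesis with the refinement relationship (the ambient model refines the most general solution $\rsubst_\mathrm{m}$) and a substitution/refinement lemma that reconciles the fixed term context with the varying instantiation, so that $\meta{mostgen}(\dots,C_1,\Delta_\mathrm{m},\rsubst_\mathrm{m})$ and $\meta{mostgen}(\dots,C_2,\Delta_\mathrm{m},\rsubst_\mathrm{m})$ become interderivable with the same witnesses.
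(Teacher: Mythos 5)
Your proposal takes exactly the same route as the paper: the paper's entire proof of this lemma is the one-line remark ``By induction on structure of $F$, observing that any judgement involving a constraint with subconstraint $C_1$ can be replaced by a corresponding judgement involving $C_2$ (and vice versa).'' Your fleshed-out version --- peeling the innermost frame, transferring the hypothesis through each satisfiability rule, and flagging the $\meta{mostgen}$ premise of the let frame (where the term context is held fixed while the instantiation varies) as the genuine crux --- is consistent with and considerably more careful than what the paper records.
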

\fe{%
Comment on \cref{lemma:aux:plugged-constraint-same-stack-irrelevant-sat}:
We have the $\Delta, \Deltaof{F}$ part here because of constraints of the form
$\Uof{\Theta, \theta}$. If the rigid variables are completely undetermined, the
most general solution of such a constraint differs based on how we choose the
rigid vars.
}
\begin{proof}
By induction on structure of $F$, observing that any judgement involving a
constraint with subconstraint $C_1$ can be replaced by a corresponding judgement
involving $C_2$ (and vice versa).
\end{proof}

\begin{lem}[Stack machine steps preserve well-formedness of states]
\label{lemma:stack-machine-steps-preserve-wf-ness}
If $\wf{s}$ and $s \to s'$ then $\wf{s'}$.
\end{lem}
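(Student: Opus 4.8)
The plan is to proceed by case analysis on the transition rule applied in $s \to s'$, writing $s = (F, \Theta, \theta, C)$ and $s' = (F', \Theta', \theta', C')$, and for each of the fifteen rules of \cref{paper-fig:constraints-stack-machine} re-establish the four components of $\wf{s'}$: (i) $\Deltaof{F'} \vdash \theta' : \Theta' \Rightarrow \Theta'$, (ii) idempotence of $\theta'$, (iii) $\Deltaof{F'}; \Xiof{F'}; \Gammaof{F'} \vdash \wf{C'}$, and (iv) $\Theta' \vdash \wf{F'}$. The purely structural push/pop rules ($\lab{S-ConjPush}$, $\lab{S-ConjPop}$, $\lab{S-ForallPush}$, $\lab{S-ExistsPush}$, $\lab{S-DefPop}$, $\lab{S-LetPush}$) are routine: each merely relocates a subconstraint between the in-progress slot and a matching frame, so the required well-formedness of $C'$ and $F'$ is read off by inverting the relevant rule of \cref{fig:constraint-wellformedness} and \cref{fig:stack-wellformedness}, using that $\Deltaof{\cdot}$, $\Xiof{\cdot}$, and $\Gammaof{\cdot}$ agree before and after the move. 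Freshness of any newly bound variable (the side conditions $a \notin \btv(F)$, etc.) is arranged by $\alpha$-renaming, and extending an idempotent, well-formed $\theta$ by a self-mapping $a \mapsto a$ for fresh $a$ keeps it idempotent and well-formed. For $\lab{S-Freeze}$ and $\lab{S-Inst}$, which only rewrite the in-progress constraint into an (existentially quantified) equality, I would invoke that $\Gammaof{F}(x)$ is well-formed over $\Deltaof{F}, \Xiof{F}$ (from $\Theta \vdash \wf{F}$ and the $\Def$-frame scoping discipline) and that the fresh instantiation variables are disjoint from $\btv(F)$.

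Two recurring observations would carry the demotion rules $\lab{S-Mono}$ and $\lab{S-DefPush}$ together with the unification rule $\lab{S-Eq}$. First, by idempotence of $\theta$, any flexible $b \in \ftv(\theta(a))$ satisfies $\theta(b) = b$, since applying $\theta$ twice cannot change $\theta(a)$; such $b$ are \emph{undetermined}. Second, demoting variables from $\poly$ to $\mono$ in $\Theta$ preserves both stack well-formedness---the judgement $\Theta \vdash \wf{F}$ only ever \emph{requires} variables to be monomorphic, never polymorphic---and substitution well-formedness, because the only variables demoted are undetermined, hence mapped to themselves, which is trivially monomorphic under the stricter context. For $\lab{S-Mono}$ and $\lab{S-DefPush}$ the explicit monomorphy side conditions supply exactly invariant (i), and $\lab{S-DefPush}$ additionally discharges the $\Def$-frame premise of \cref{fig:stack-wellformedness} by construction. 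For $\lab{S-Eq}$ I would appeal to soundness of the unifier $\substunifier$~\cite[Theorems 4 and 5]{EmrichLSCC20}, which gives that $\Theta'$ is a demotion of $\Theta$, that $\theta' \comp \theta$ is idempotent with $\Deltaof{F} \vdash \theta' \comp \theta : \Theta' \Rightarrow \Theta'$, and then combine this with the demotion observation to recover $\Theta' \vdash \wf{F}$.

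The remaining cases reorder or drop frames. For $\lab{S-ForallPop}$ the escape check $a \notin \ftv(\theta(\Theta))$ is precisely the condition that $\theta$ does not mention the rigid variable removed, so restricting the rigid context keeps invariant (i). For $\lab{S-ExistsLower}$, $\lab{S-LetPolyPop}$, and $\lab{S-LetMonoPop}$ the legitimacy of dropping the partitioned-out variables rests on the defining property of $\meta{partition}$: no variable surviving in $\Theta'$ has a dropped variable in the image of $\theta$ restricted to the lower frames, so $\restriction{\theta}{\Theta'}$ lands in types over $\Theta'$ and stays well-formed. I would also use that permuting $\exists$ frames past a single non-$\exists$/non-$\Let$ frame leaves $\Deltaof{\cdot}$ and $\Xiof{\cdot}$ unchanged, since each operator counts only its own kind of frame, so the rigid and flexible contexts of the state are stable across the reordering.

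The hard part, and the case I would treat most carefully, is $\lab{S-LetPolyPop}$ (with $\lab{S-LetMonoPop}$ analogous), because it simultaneously reorders frames and generalises. Here I must show that the synthesised annotation $B = \forall \omany{c}. A$ with $A = \theta(b)$ and $\omany{c} = \ftv(A) \cap \wmany{a'}$ is well-formed over $\Deltaof{F}$ and $\Xiof{F :: \exists \wmany{a''}}$, so that $\Def\;(x : B) \;\In\; C$ satisfies \cref{fig:constraint-wellformedness}. The key point is again idempotence: every $c \in \omany{c}$ lies in $\ftv(\theta(b))$ and so is undetermined, meaning the generalised variables are exactly undetermined variables not depended upon by lower frames; removing them from $\Theta$ to form $\Theta'$ while binding them inside $B$ is therefore consistent, and $\restriction{\theta}{\Theta'}$ no longer references them. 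Establishing $\Theta' \vdash \wf{F :: \exists \wmany{a''}}$ and invariant (i) for $\restriction{\theta}{\Theta'}$ then follows by the same $\meta{partition}$-plus-idempotence reasoning as for $\lab{S-ExistsLower}$, together with the demotion observation to account for the monomorphisation latent in rewriting the $\Let$ frame into a $\Def$ frame.
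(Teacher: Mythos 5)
Your proposal is correct and takes the same approach as the paper: the paper's own proof of this lemma is literally the single sentence ``By case analysis over which stack machine rule was applied,'' so your write-up supplies exactly the detail the paper elides. The invariants you isolate (idempotence forcing variables in $\ftv(\theta(a))$ to be undetermined, demotion only strengthening the stack well-formedness premises, and the $\meta{partition}$ property justifying the dropped bindings) are the right ones and match how the paper uses these facts elsewhere in its metatheory.
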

\begin{proof}
By case analysis over which stack machine rule was applied.
\fe{Need to give more detail?}
\end{proof}

\begin{lem}[Weakening $\meta{mostgen}$]
\label{lemma:weakening-mostgen}
Let
$\meta{mostgen}(\Delta, \Xi, \Gamma, C, \Delta_\mathrm{m}, \rsubst_\mathrm{m})$
and $(\Delta, \Xi, \Delta_\mathrm{m}) \disjoint (\omany{a}, \omany{b})$.
Then we have
$\meta{mostgen}(\Delta, (\Xi, \many{a}), \Gamma, C,
  (\Delta_\mathrm{m}, \omany{b}), \rsubst_\mathrm{m}[\omany{a} \mapsto \omany{b}])$.
\end{lem}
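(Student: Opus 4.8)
The plan is to unfold the definition of $\meta{mostgen}$ and establish its two conjuncts—the model part and the minimality part—separately for the weakened statement. The observation driving the whole argument is that $\many{a}$ does not occur in $C$. Indeed, from the model conjunct of the hypothesis we have $(\Delta, \Delta_\mathrm{m}); \Xi; \Gamma; \rsubst_\mathrm{m} \vdash C$, so \cref{lemma:sat-implies-wf-ness} gives $(\Delta, \Delta_\mathrm{m}); \Xi; \Gamma \vdash \wf{C}$; hence the free flexible variables of $C$ lie within $\Xi$, and since the disjointness hypothesis forces $\many{a} \disjoint \Xi$, the variables $\many{a}$ are unused in $C$. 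Consequently, the value any instantiation assigns to $\many{a}$ is irrelevant to whether it satisfies $C$, which is exactly what makes the added flexible variables harmless.

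For the model conjunct I would weaken $(\Delta, \Delta_\mathrm{m}); \Xi; \Gamma; \rsubst_\mathrm{m} \vdash C$ to $(\Delta, \Delta_\mathrm{m}, \omany{b}); (\Xi, \many{a}); \Gamma; \rsubst_\mathrm{m}[\omany{a} \mapsto \omany{b}] \vdash C$. The extra rigid variables $\omany{b}$ and flexible variables $\many{a}$ are fresh by the disjointness hypothesis, $\many{a}$ is unused in $C$, and each $a_i$ is sent to the rigid variable $b_i$, which is a well-formed (monomorphic, hence $\poly$) type in the enlarged rigid context. This is a routine weakening of the satisfiability judgement and requires only checking that the extended map is a legal instantiation $(\Delta, \Delta_\mathrm{m}, \omany{b}) \vdash \rsubst_\mathrm{m}[\omany{a} \mapsto \omany{b}] : (\Xi, \many{a}) \Rightarrow_\poly \cdot$.

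For the minimality conjunct I would fix arbitrary $\Delta''$ and $\rsubst''$ with $(\Delta, \Delta''); (\Xi, \many{a}); \Gamma; \rsubst'' \vdash C$. Since $\many{a}$ is not free in $C$, the restriction $\restriction{\rsubst''}{\Xi}$ satisfies $(\Delta, \Delta''); \Xi; \Gamma; \restriction{\rsubst''}{\Xi} \vdash C$, so the minimality conjunct of the hypothesis supplies $\rsubst'_0$ with $\Delta \vdash \rsubst'_0 : \Delta_\mathrm{m} \Rightarrow_\poly \Delta''$ and $\rsubst'_0 \comp \rsubst_\mathrm{m} = \restriction{\rsubst''}{\Xi}$. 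I would then set $\rsubst' := \rsubst'_0[\omany{b} \mapsto \rsubst''(\many{a})]$, agreeing with $\rsubst'_0$ on $\Delta_\mathrm{m}$ and sending each $b_i$ to $\rsubst''(a_i)$. Well-formedness $\Delta \vdash \rsubst' : (\Delta_\mathrm{m}, \omany{b}) \Rightarrow_\poly \Delta''$ holds because each $\rsubst''(a_i)$ is well-formed over $\Delta, \Delta''$ by the implicit precondition on $\rsubst''$, and the $\poly$ restriction imposes nothing further. The composition $\rsubst' \comp (\rsubst_\mathrm{m}[\omany{a} \mapsto \omany{b}]) = \rsubst''$ is then checked variable-by-variable: on $\many{a}$ both sides yield $\rsubst''(a_i)$ via $a_i \mapsto b_i \mapsto \rsubst''(a_i)$, and on $\Xi$ the composition reduces to $\rsubst'_0 \comp \rsubst_\mathrm{m} = \restriction{\rsubst''}{\Xi}$, using that the fresh $\omany{b}$ do not occur in $\rsubst_\mathrm{m}(\Xi)$. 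I expect the only real work to be bookkeeping—confirming the freshness and disjointness side conditions so that both extensions are legal instantiations and so that $\omany{b}$ is absent from $\rsubst_\mathrm{m}(\Xi)$—since the absence of $\many{a}$ from $C$ removes any genuine difficulty.
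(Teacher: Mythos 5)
Your proof is correct and follows essentially the same route as the paper's: both rest on the observation (via \cref{lemma:sat-implies-wf-ness}) that $\many{a}$ cannot occur free in $C$, so these variables are unconstrained and may be mapped to fresh pairwise-distinct rigid variables $\omany{b}$ without affecting most-generality. The paper leaves the two conjuncts of $\meta{mostgen}$ implicit, whereas you spell out the weakening of the model part and the construction of the refining instantiation $\rsubst'_0[\omany{b} \mapsto \rsubst''(\many{a})]$ for minimality; this is just a more detailed rendering of the same argument.
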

\begin{proof}
By
$\meta{mostgen}(\Delta, \Xi, \Gamma, C, \Delta_\mathrm{m}, \rsubst_\mathrm{m})$
we have $(\Delta, \Delta_\mathrm{m}); \Xi; \Gamma; \rsubst_\mathrm{m} \vdash C$
and therefore $(\Delta, \Delta_\mathrm{m});\Xi;\Gamma \vdash \wf{C}$ (cf.\ \cref{lemma:sat-implies-wf-ness}).
This means that none of the variables in $\omany{a}$ are constrained in any way
by $C$, meaning that the most general solution maps them to pairwise disjoint
variables, like $\omany{b}$.
\end{proof}

\begin{lem}[Refinement]
\label{lemma:refinement-on-constraint-sat}
Let $\Xi \disjoint \Delta'$.
If $\Delta; \Xi; \Gamma; \rsubst \vdash C$ and
$\Delta' \vdash \rsubst' : \Delta \Rightarrow_\mono \emptydelta$ then
$\Delta'; \Xi; \rsubst'(\Gamma); \rsubst' \comp \rsubst \vdash C$.
\end{lem}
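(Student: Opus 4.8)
The plan is to prove the statement by induction on the derivation of $\Delta; \Xi; \Gamma; \rsubst \vdash C$, equivalently on the structure of $C$ (indexing by derivation height, so that the $\meta{mostgen}$ side-conditions appearing in the $\lab{Sem-LetPoly}$ and $\lab{Sem-LetMono}$ rules, which sit in a negative position, cause no foundational trouble, as in the remark of \cref{section:freezeml}). The single fact driving every case is that $\rsubst'$ commutes with $\rsubst$ through all type operations: for any type $A$ we have $(\rsubst' \comp \rsubst)(A) = \rsubst'(\rsubst(A))$, and for the term context $\rsubst'(\Gamma)(x) = \rsubst'(\Gamma(x))$. Crucially, because $\rsubst'$ maps its domain to \emph{monotypes} over $\Delta'$ (the restriction $R = \mono$), it sends monotypes to monotypes and well-formed types over $(\Delta,\Xi)$ to well-formed types over $(\Delta',\Xi)$; this monomorphicity is exactly what makes the $\monoc$-flavoured premises survive.

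First I would dispatch the atomic and structural cases. $\lab{Sem-True}$ and $\lab{Sem-And}$ are immediate (the latter by the induction hypothesis on each conjunct). For $\lab{Sem-Equiv}$ I apply $\rsubst'$ to both sides of $\rsubst(A) = \rsubst(B)$, obtaining $(\rsubst'\comp\rsubst)(A) = (\rsubst'\comp\rsubst)(B)$, and reuse the well-formedness observation. $\lab{Sem-Freeze}$ and $\lab{Sem-Instance}$ follow by pushing $\rsubst'$ through the defining equations $\Gamma(x) = \rsubst(A)$ and $\rsubst''(H) = \rsubst(A)$ (writing $\rsubst''$ for the quantifier-instantiation in the premise of $\lab{Sem-Instance}$, to avoid clashing with the refinement $\rsubst'$): the new witness is the composite $\rsubst' \comp \rsubst''$ restricted to $\omany{a}$, using that $\omany{a}$ can be alpha-renamed fresh for both $\Delta$ and $\Delta'$ so the two substitutions commute. $\lab{Sem-Mono}$ is the case that forces $R = \mono$: from $\Delta \vdash_\mono \wf{\rsubst(a)}$ I conclude $\Delta' \vdash_\mono \wf{\rsubst'(\rsubst(a))}$ precisely because $\rsubst'$ preserves monotypes. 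The $\lab{Sem-Def}$ case combines this with the induction hypothesis on the body, re-deriving its monomorphism obligations by the same argument and rewriting $\rsubst'(\Gamma, x : \rsubst A)$ as $(\rsubst'(\Gamma), x : (\rsubst'\comp\rsubst)(A))$.

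The binder cases $\lab{Sem-Forall}$ and $\lab{Sem-Exists}$ are handled by extending $\rsubst'$ with the identity on the freshly bound variable — rigid for $\forall$, flexible for $\exists$ — choosing the binder fresh for $\Delta'$ by alpha-conversion, and then invoking the induction hypothesis; for $\lab{Sem-Exists}$ the existential witness $A$ is transported to $\rsubst'(A)$, using that $\rsubst' \comp (\rsubst[a \mapsto A]) = (\rsubst'\comp\rsubst)[a \mapsto \rsubst'(A)]$.

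The main obstacle is the two let rules, whose premises contain a $\meta{mostgen}(\Delta, (\Xi,a), \Gamma, C_1, \Delta_\mathrm{m}, \rsubst_\mathrm{m})$ condition that is stated over the old rigid context $\Delta$ and is independent of the ambient $\rsubst$. Re-assembling $\lab{Sem-LetPoly}$/$\lab{Sem-LetMono}$ over $\Delta'$ requires an auxiliary ``refinement transports most-general models'' fact: that under $\rsubst'$ the most general model $\rsubst_\mathrm{m}$ of $C_1$ over $\Delta$ induces a most general model of $C_1$ over $\Delta'$ (carrying the fresh context $\Delta_\mathrm{m}$ along and post-composing $\rsubst_\mathrm{m}$ with $\rsubst'$). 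This is the delicate step, since $\meta{mostgen}$ quantifies universally over all competing models and its minimality clause must be re-established after specialising the rigid parameters; I expect to prove it separately, relying on the same commutation identities and on keeping the fresh variables $\Delta_\mathrm{m}$ disjoint from $\Delta'$. Once it is in hand, the sub-judgments for $C_1$ and $C_2$ follow from the induction hypothesis, and the computed type $A = \rsubst'(\rsubst_\mathrm{m}(\ldots))$ together with the partition into $\Delta_\mathrm{o}$ and $\omany{b}$ transports through $\rsubst'$ by the commutation identities above. Throughout, I would also track the (implicit) requirement that the free rigid variables of $C$ lie in $\Delta'$, which is what keeps the well-formedness premises satisfiable after refinement and which holds in the contexts where the lemma is applied (e.g.\ in the non-guarded-value let case of \cref{theorem:constraint-generation-soundness}, where the refined-away variables $\many{a}$ do not occur in $\congen{M':b}$).
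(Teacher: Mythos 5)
Your proposal takes essentially the same route as the paper: structural induction on $C$, driven by the observation that the monomorphic refinement preserves restriction-indexed well-formedness ($\Delta \vdash_R \wf{\rsubst(a)}$ iff $\Delta' \vdash_R \wf{\rsubst'(\rsubst(a))}$ for $a \in \Xi$) — the paper's entire proof consists of exactly that observation plus ``by structural induction on $C$''. Your elaboration is considerably more detailed than the paper's, and the $\meta{mostgen}$-transport step you isolate in the let cases (together with the implicit requirement that the refined-away rigid variables not occur free in $C$) is a genuine subtlety that the paper's two-line proof silently elides, so identifying it as a separate auxiliary fact to be proved is the right call rather than a deviation.
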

\begin{proof}
By structural induction on $C$, observing that for all
$a \in \Xi, R \in \{\mono, \poly \}$ we have $\Delta \vdash_R \wf{\rsubst(a)}$
iff $\Delta' \vdash_R \wf{\rsubst'(\rsubst(a))}$.
\end{proof}

\begin{lem}[Substitution]
\label{lemma:substitution-on-constraint}
If $\Delta; (\Xi, a); \Gamma; \rsubst[a \mapsto A] \vdash C$ then
$\Delta; \Xi; \Gamma; \rsubst \vdash C[a / A]$.
\end{lem}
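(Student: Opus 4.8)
The plan is to argue by structural induction on $C$, with the workhorse being the identity
\[
\rsubst[a \mapsto A](B) = \rsubst(B[a/A]) \quad \text{for every type } B,
\]
which holds because the implicit precondition $\Delta \vdash \rsubst[a \mapsto A] : (\Xi, a) \Rightarrow_\poly \emptydelta$ forces $\Delta \vdash_\poly \wf{A}$, so $\ftv(A) \subseteq \Delta$, which is disjoint from $\Xi$ (the domain of $\rsubst$), whence $\rsubst$ fixes $A$. With this identity the atomic cases are immediate: for $A' \ceq B'$, $\freeze{x : B'}$, and $x \preceq B'$ the conditions of $\lab{Sem-Equiv}$, $\lab{Sem-Freeze}$, $\lab{Sem-Instance}$ are phrased in terms of $\rsubst[a \mapsto A]$ applied to the relevant types, and rewriting each such application as $\rsubst$ applied to the $[a/A]$-substituted type yields exactly the premises for the substituted constraint. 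Well-formedness transfers since substituting a $\Delta$-type for $a$ introduces no variables outside $\Delta, \Xi$. The $\dec{true}$ and conjunction cases are routine, the latter by the hypothesis on each conjunct.

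The binder and definition cases use capture-avoidance and the usual bookkeeping. For $\forall b. C'$ I would apply the hypothesis in the enlarged rigid context $(\Delta, b)$ (legitimate since $A$ remains well-formed there and, after alpha-renaming, $b \notin \{a\} \cup \ftv(A)$), then reapply $\lab{Sem-Forall}$. For $\exists b. C'$ I would reorder the flexible context to $((\Xi, b), a)$, apply the hypothesis to eliminate $a$ while keeping the witness chosen for $b$, and reapply $\lab{Sem-Exists}$. For $\monoc{}(b)$ there are two subcases: if $b \neq a$ then $\rsubst[a \mapsto A](b) = \rsubst(b)$ and the constraint is untouched; if $b = a$ then $\lab{Sem-Mono}$ gives $\Delta \vdash_\mono \wf{A}$, so $A$ is a monotype and the substituted obligation holds trivially --- this minor mismatch between ``$\monoc$ of a variable'' and ``$\monoc$ of the substituted type'' is worth flagging. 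For $\Def\; (x : B') \In C'$ I would note that $\ftv(B'[a/A]) - \Delta = (\ftv(B') - \Delta) \setminus \{a\}$ (because $\ftv(A) \subseteq \Delta$), so every surviving monomorphism side-condition of $\lab{Sem-Def}$ is among the original ones and transfers via the key identity; the body follows from the hypothesis since the ascribed type $\rsubst[a \mapsto A](B') = \rsubst(B'[a/A])$ matches.

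The real difficulty is the two $\Let$ cases; take $\Let_\poly\; x = \letexists{b} C_1 \In C_2$ with $b$ renamed fresh. Using the hypothesis I can already discharge the body premise (for $C_2[a/A]$, via the extended term context) and the ambient subconstraint premise (for $C_1[a/A]$ under $\rsubst[b \mapsto A']$, by a further application of the lemma to eliminate $a$), and I can check that the generalised solution $A'$ is unaffected by $[a/A]$: a most general solution $\rsubst_\mathrm{m}$ maps $b$ to a type over the rigid context $(\Delta, \Delta_\mathrm{m})$, containing no flexible $a$, so $A'[a/A] = A'$ and the ascribed scheme $\forall \omany{d}. A'$ is preserved. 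What does \emph{not} transfer for free is the $\meta{mostgen}$ premise: $\lab{Sem-LetPoly}$ for the substituted constraint requires a most general solution of $C_1[a/A]$ over $(\Xi, b)$, whereas I am only given one of $C_1$ over $((\Xi, a), b)$, and the generalised variables $\omany{d}$ must coincide so that the body derivation lines up. I therefore expect to need an auxiliary lemma stating that most general solutions are stable under instantiating an outer flexible variable by a $\Delta$-well-formed (hence rigid) type: from $\meta{mostgen}(\Delta, ((\Xi, a), b), \Gamma, C_1, \Delta_\mathrm{m}, \rsubst_\mathrm{m})$ together with the fact (read off from the given solution premise and the universal part of $\meta{mostgen}$) that some refinement of $\rsubst_\mathrm{m}$ sends $a$ to $A$, construct $\meta{mostgen}(\Delta, (\Xi, b), \Gamma, C_1[a/A], \Delta_\mathrm{m}', \rsubst_\mathrm{m}')$ with the same generalisable $\omany{d}$. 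Establishing this mostgen-substitution property --- essentially that pinning an outer variable to a fixed rigid type commutes with taking the principal solution --- is where the genuine work lies; the monomorphic case $\lab{Sem-LetMono}$ is then analogous once it is available.
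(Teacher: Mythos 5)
Your proposal is correct and follows the same route as the paper: structural induction on $C$, driven by the identity $\rsubst[a \mapsto A](B) = \rsubst(B[a/A])$, which is justified exactly as you say by the implicit precondition forcing $\ftv(A) \subseteq \Delta$. The one point you flag as the genuine remaining work --- that the $\meta{mostgen}$ premise of the $\Let$ rules must be shown stable under pinning the outer flexible variable $a$ to a rigid type --- is precisely the single substantive remark the paper's (two-sentence) proof makes, namely that the substitution does not interfere with generalisation because $a$ is already in scope and hence never among the generalised variables; your treatment is considerably more explicit than the paper's and your observations about the $\monoc$ and $\Def$ cases are sound.
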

\begin{proof}
By structural induction on $C$.
Observe that the substitution does not interfere with generalisation: The
variable $a$ is already in scope and therefore not subject to generalisation by
any let constraint within $C$.
\end{proof}

\subsection{Proof of \cref{theorem:preservation}}



\proofContext{subject-reduction-statement}
\subjectreduction*

\begin{proof}
We carry out the proof by case analysis of the stack machine reduction rules.
Let $s$ be the state before, and $s'$ the state after the step.

\fe{
Consider  case $\lab{S-DefPop}$:
The right-to-left direction only holds, if well-formedness of \emph{stacks}
imposes all type variables in $A$ to be mono!
}
We focus on the let rules.
\begin{itemize}

\item Case $\lab{S-LetPolyPop}$:
We have that $s$ is of the form
$(\forall \Delta \snoc{} \exists \Xi \snoc{}  F :: \Let_\poly\; x = \letexists{a} \Box \;\In\; C :: \exists \many{a}, \Theta_0, \theta_0, \true)$ for some $F, C$, and $\many{a}$
and assume that the following conditions imposed by $\lab{S-LetPolyPop}$ hold:
\[
\bl
\wmany{a'};\: \wmany{a''} = \dec{partition}((\many{b}, a), \theta_0, \Theta_0) \\
A = \subst_0(a) \\
\omany{c}\: = \ftv(A) \cap \wmany{a'} \\
B = \forall \omany{c}.A \\
\Theta_1 = \Theta_0 - \wmany{a'} \\
\theta_1 = \restriction{\theta_0}{\Theta_1}
\el
\]

We need to show
\[
\ba{c}
  \Delta; \Xi; \emptygamma; \rsubst
   \vdash F :: \Let_\poly\; x = \letexists{a} \Box \;\In\; C :: \exists \many{a} [\true \wedge \Uof{\Theta_0, \theta_0}] \\
  \;\text{ iff }\; \\
\Delta; \Xi;
   \emptygamma; \rsubst \vdash F :: \exists{\wmany{a''}}[\Def\; (x: B) \;\In\; C  \wedge \Uof{\Theta_1, \theta_1}],
\ea
\]
which is equivalent to
\[
\ba{c}
  \Delta; \Xi; \emptygamma; \rsubst
   \vdash F[C'_0]  \\
  \;\text{ iff }\; \\
\Delta; \Xi;
   \emptygamma; \rsubst \vdash F[C'_1]
\ea
\]
if we define
$C'_0 := \Let_\poly\; x = \letexists{a} \exists \many{a}. \Uof{\Theta_0, \subst_0} \;\In\; C$
and
$C'_1 := \exists \wmany{a''}. (\Def\; (x: B) \;\In\; C) \wedge \Uof{\Theta_1, \theta_1}$.

We can prove this equivalence directly using \cref{lemma:aux:plugged-constraint-same-stack-irrelevant-sat}.
In order to apply this lemma, we need to show that the following holds:
For all
$\hat{\Delta}, \hat{\rsubst}$ we have
$(\Delta, \Deltaof{F}, \hat{\Delta}); (\Xi, \Xiof{F}); \hat{\rsubst}(\Gammaof{{F}}); \hat{\rsubst} \vdash C'_0$ iff
$(\Delta, \Deltaof{F}, \hat{\Delta}); (\Xi, \Xiof{F}); \hat{\rsubst}(\Gammaof{{F}}); \hat{\rsubst} \vdash C'_1$

To this end, suppose
$\hat{\Delta}$, and $\hat{\rsubst}$ are given.
Let $\Delta' \coloneqq \Delta, \Deltaof{F}, \hat{\Delta}$ and $\Xi' \coloneqq \Xi, \Xiof{F}$ and $\Gamma' \coloneqq \hat{\rsubst}(\Gammaof{F})$.
\begin{itemize}[wide]
\item[$\Longrightarrow$]:
We assume $\Delta'; \Xi'; \Gamma'; \hat{\rsubst} \vdash C'_0$~\premissNum{let-polt:C-p-one-sat}.
 The derivation of
this must have the following form, for some $\Delta_\mathrm{m}, \rsubst_\mathrm{m}, \Delta_\mathrm{o}, \omany{b}, \delta', A'$:
\[
\inferrule
  {
  \meta{mostgen}(\Delta', (\Xi', a), \Gamma', \exists \many{a}.\Uof{\Theta_0, \theta_0}, \Delta_\mathrm{m}, \rsubst_\mathrm{m}) \\\\
    \Delta_\mathrm{o} = \ftv(\rsubst_\mathrm{m}(\Xi')) - \Delta' \\
    \omany{b} = \ftv(\rsubst_\mathrm{m}(a)) - \Delta', \Delta_\mathrm{o} \\\\
    \Delta' \vdash \rsubst' : \Delta_\mathrm{o} \Rightarrow_\mono \emptydelta \\
    A' = \rsubst'(\rsubst_\mathrm{m}(a)) \\\\
    (\Delta', \many{b}); (\Xi', a); \Gamma'; \hat{\rsubst}[a \mapsto A'] \vdash \exists \many{a}.\Uof{\Theta_0, \theta_0} \\
    \Delta'; \Xi'; (\Gamma', x : \forall \omany{b}. A'); \hat{\rsubst} \vdash C
  }
  {\Delta'; \Xi'; \Gamma'; \hat{\rsubst} \vdash \Let_\poly\; x = \letexists{a} \exists \many{a}. \Uof{\Theta_0, \theta_0}  \;\In\; C}
\]
We can assume w.l.o.g.\ that the variables in $\Delta_\mathrm{m}$ are fresh.

By $\wf{s}$ and $\wf{s'}$, we have $\ftv(\Theta_0) = (\Xi, a, \many{a})$ as well as
idempotency of $\theta_0$ and $\theta_1$.

Let $\Xi_\mathrm{f} \coloneqq \ftv(\theta_0) - \Delta'$, which implies
$\Xi_\mathrm{f} \subseteq \Theta$.
By \cref{lemma:mostgen-iff-mapping-free-to-fresh} there exists a bijection
$\rsubst_\mathrm{r}$ such that
$\Delta_\mathrm{m} \vdash \rsubst_\mathrm{r} : \Xi_\mathrm{f} \Rightarrow_\mono \emptydelta$
and
$\rsubst_\mathrm{m} = \restriction{(\rsubst_\mathrm{r} \comp \theta_0)}{(\Xi', a)}$~\premissNum{let-poly:rsubst-r-composition}.

We observe $\rsubst_\mathrm{r}(\wmany{a''}) \subseteq \Delta_\mathrm{o}$.
We can now define $\widehat{\rsubst''}$ such that for all $b \in (\Xi', \wmany{a''})$ we have
\[
\widehat{\rsubst''}(b) =
\begin{cases}
\hat{\rsubst}(b) &\text{if } b \in \Xi' \\
\rsubst'(\rsubst_\mathrm{r}(b)) &\text{if } b \in \wmany{a''}
\end{cases}
\]
which yields
$\Delta' \vdash \widehat{\rsubst''} : (\Xi', \wmany{a''}) \Rightarrow_\poly \emptydelta$~\premissNum{let-poly:hat-theta-pp-wf}.

Next, we show
$\rsubst_\mathrm{r}(\omany{c}) = \omany{b}$~\premissNum{let-poly:omany-c-vs-omany-b}:
\[
\ba{r c l}
\rsubst_\mathrm{r}(\omany{c}) &=& \rsubst_\mathrm{r}(\ftv(\theta_0(a)) \cap \wmany{a'}) \\
&= &\rsubst_\mathrm{r}(\ftv(\theta_0(a)) - \Xi' -  \wmany{a''} - \Delta') \\
&&\;\text{(by $\ftv(\theta_0(a)) \subseteq \Delta', \Xi', \wmany{a'}, \wmany{a''}$)} \\
&= & \rsubst_\mathrm{r}(\ftv(\theta_0(a)) - \ftv(\theta_0(\Xi')) -  \wmany{a''} - \Delta') \\
&&\;(\text{by $\theta_0(b) = b $ for all $b \in \ftv(\theta(a))$ and $\wmany{a'} \disjoint \ftv(\theta(\Xi')) \subseteq \Delta', \Xi', \wmany{a''}$}) \\
&=& \rsubst_\mathrm{r}(\ftv(\theta_0(a)) - \ftv(\theta_0(\Xi')) - \Delta') \\
&&\;(\text{by } \wmany{a''} \subseteq \ftv(\theta_0(\Xi'))) \\
&=&\ftv(\rsubst_\mathrm{m}(a)) - \ftv(\rsubst_\mathrm{m}(\Xi')) - \Delta' \\
&&\;(\text{by \premissRef{let-poly:rsubst-r-composition}}) \\
&=&\ftv(\rsubst_\mathrm{m}(a)) - \Delta',\Delta_\mathrm{o} \\
&=&\omany{b}
\ea
\]

We now show $\widehat{\rsubst''}(c) = \rsubst'(\rsubst_\mathrm{r}(c))$ for all $c \in \ftv(\theta_0(a)) \cap (\Xi', \wmany{a''})$~\premissNum{let-poly:hat-rsubst-pp-on-Xi-p}.
First, we observe that for all such $c \in \wmany{a''}$ this holds by definition of $\widehat{\rsubst''}$.
Next, by
$(\Delta', \many{b}); (\Xi', a); \Gamma'; \widehat{\rsubst'} \vdash \exists \many{a}.\Uof{\Theta_0, \theta_0}$
(see derivation of \premissRef{let-polt:C-p-one-sat}) and
\cref{lemma:satisfying-U} we have that there exists $\rsubst_\mathrm{s}$ such
that
$(\Delta', \many{b}) \vdash \rsubst_\mathrm{s} : (\Xi', a, \many{a}) \Rightarrow_\poly \emptydelta$
and
$\widehat{\rsubst'} = \restriction{(\rsubst_\mathrm{s} \comp \theta_0)}{(\Xi', a)}$~\premissNum{let-poly:rsubst-s-comp}.
We have $\theta_0(b) = b$ for all $b \in \ftv(\theta_0)$ and therefore
$\rsubst_\mathrm{s}(b) = \widehat{\rsubst'}(b)$~\premissNum{let-poly:rsubst-s-vs-hat-rsubst-p} for any such $b$.
Using this, we observe
\[
\ba{r c l l}
&&\widehat{\rsubst'}(a) \\
&=& \rsubst_\mathrm{s}(\theta_0((a)) \qquad&\text{(by \premissRef{let-poly:rsubst-s-comp})}\\
&=& A' &\text{(by def.\ of $\widehat{\rsubst'}$)}\\
&=&\rsubst'(\rsubst_\mathrm{m}(a)) &\text{(by def.\ of $A'$)} \\
&=&\rsubst'(\rsubst_\mathrm{r}(\theta_0(a))) &\text{(by \premissRef{let-poly:rsubst-r-composition})}\\
\ea
\]
This implies that for all $b \in \ftv(\theta_0(a)) \cap \Xi'$ we have
$\rsubst'(\rsubst_\mathrm{r}(b)) = \rsubst_\mathrm{s}(b) \stackrel{\premissRef{let-poly:rsubst-s-vs-hat-rsubst-p}}{=}  \widehat{\rsubst'}(b) = \widehat{\rsubst''}(b)$
and therefore \premissRef{let-poly:hat-rsubst-pp-on-Xi-p} holds.

We now show that $\widehat{\rsubst''}(B)$ is alpha-equivalent to $\forall \many{b}.A'$:
\[
\ba{rcl}
\widehat{\rsubst''}(B) &=& \widehat{\rsubst''}(\forall \omany{c}. \theta_0(a)) \\
&=& \forall \omany{c}. \widehat{\rsubst''}(\theta_0(a)) \\
  &&\; (\text{due to } \premissRef{let-poly:hat-theta-pp-wf}, \omany{c} \disjoint \Xi', \wmany{a''} \text{ and } \omany{c} \disjoint \Delta',\Delta_\mathrm{m}) \\
&=& \forall \omany{b}. \left(\widehat{\rsubst''}(\theta_0(a))[\omany{c} \mapsto \omany{b}] \right) \\
&&\;(\text{due to \premissRef{let-poly:omany-c-vs-omany-b}, $\omany{c} \disjoint \omany{b}$}) \\
&=& \forall \omany{b}. \rsubst'(\rsubst_\mathrm{r}(\theta_0(a))) \\
&&\;(\text{due to \premissRef{let-poly:rsubst-r-composition}, \premissRef{let-poly:hat-rsubst-pp-on-Xi-p}, $\widehat{\rsubst''}(b) = b$ for all $b \in \omany{b}$}) \\
&=& \forall \omany{b}. \rsubst'(\rsubst_\mathrm{m}(a)) \\
&=& \forall \omany{b}. A'
\ea
\]
This means that by
$\Delta'; \Xi'; (\Gamma', x : \forall \omany{b}. A'); \hat{\rsubst} \vdash C$
(see derivation of \premissRef{let-polt:C-p-one-sat}) we also have
$\Delta'; \Xi'; (\Gamma', x : \widehat{\rsubst''}(B)); \hat{\rsubst} \vdash C$,
which we can weaken to
$\Delta'; (\Xi', \wmany{a''}); (\Gamma', x : \widehat{\rsubst''}(B)); \widehat{\rsubst''} \vdash C$~\premissNum{let-poly:C-sat}.

We now show that for all $b \in \ftv(B) - \Delta'$ we have
$\Delta'; \Xi'; \Gamma'; \widehat{\rsubst''} \vdash \monoc(b)$~\premissNum{let-poly:ftvs-mono}:
We have
$\ftv(B) - \Delta' = \ftv(\forall \omany{c}. \theta_0(a)) - \Delta' \subseteq \Xi, \wmany{a''}$.
By \premissRef{let-poly:hat-rsubst-pp-on-Xi-p} we have
$\widehat{\rsubst''}(b) = \rsubst'(\rsubst_\mathrm{r}(b))$ for any such $b$.
By $\Delta' \vdash \rsubst' : \Delta_\mathrm{o} \Rightarrow_\mono \emptydelta$
and $\rsubst_\mathrm{r}$ being a bijection on variables, we have
$\Delta' \vdash_\mono \wf{\widehat{\rsubst''}(b)}$.

We now show
$\Delta'; (\Xi', \wmany{a''}); \Gamma'; \widehat{\rsubst''} \vdash \Uof{\Theta_1, \theta_1}$~\premissNum{let-poly:U-successor-sat}:
Recall that per \premissRef{let-poly:rsubst-s-comp}, we have
$\widehat{\rsubst'} = \restriction{(\rsubst_\mathrm{s} \comp \theta_0)}{(\Xi', a)}$,
where $\widehat{\rsubst'}$ and $\widehat{\rsubst''}$ coincide on $\Xi'$.
Further, $\theta_0(b) = b$ for all $b \in \wmany{a''}$.
We can therefore apply \cref{lemma:satisfying-U} to $\widehat{\rsubst''}$ to obtain
\premissRef{let-poly:U-successor-sat}.


Using \premissRef{let-poly:C-sat}, \premissRef{let-poly:ftvs-mono}, and \premissRef{let-poly:U-successor-sat} we can now derive
$\Delta'; \Xi'; \Gamma'; \hat{\rsubst} \vdash C'_1$ as follows:

\[
\inferrule*
  {
   \inferrule*
    {
      \inferrule*
        {
        \inferrule*
          {
            \Delta'; (\Xi', \wmany{a''}); (\Gamma', x :  \widehat{\rsubst''} B); \widehat{\rsubst''} \vdash C \\
            \text{for all } b \in \ftv(B) - \Delta' \;\mid\;  \Delta'; \Xi'; \Gamma'; \widehat{\rsubst''} \vdash \monoc(b)
          }
          {
            \Delta'; (\Xi', \wmany{a''}); \Gamma'; \widehat{\rsubst''} \vdash \Def\; (x: B) \;\In\; C \\
          } \\
        \Delta'; (\Xi', \wmany{a''}); \Gamma'; \widehat{\rsubst''} \vdash \Uof{\Theta_1, \theta_1}
        }
        {
          \Delta'; (\Xi', \wmany{a''}); \Gamma'; \widehat{\rsubst''} \vdash (\Def\; (x: B) \;\In\; C) \wedge \Uof{\Theta_1, \theta_1}
        }
     }
     {\vdots}
  }
  {\Delta'; \Xi'; \Gamma'; \hat{\rsubst} \vdash \exists \wmany{a''}. (\Def\; (x: B) \;\In\; C) \wedge \Uof{\Theta_1, \theta_1}}
\]

\item[$\Longleftarrow$]:
We assume $\Delta'; \Xi'; \Gamma'; \hat{\rsubst} \vdash C'_1$~\premissNum{let-poly:C-p-two-sat}. The derivation of
this must have the following form for some $\widehat{\rsubst''}$:

\[
\inferrule
  {
   \inferrule
    {
      \inferrule
        {
        \inferrule
          {
            \inferrule
              {\vdots}
              {\Delta'; (\Xi', \wmany{a''}); (\Gamma', x :  \widehat{\rsubst''} B); \widehat{\rsubst''} \vdash C} \\\\
            \text{for all } b \in \ftv(B) - \Delta' \;\mid\;  \Delta'; \Xi'; \Gamma'; \widehat{\rsubst''} \vdash \monoc(b)
          }
          {
            \Delta'; (\Xi', \wmany{a''}); \Gamma'; \widehat{\rsubst''} \vdash \Def\; (x: B) \;\In\; C \\
          } \\
        \inferrule
          {\vdots}
          {
            \Delta'; (\Xi', \wmany{a''}); \Gamma'; \widehat{\rsubst''} \vdash \Uof{\Theta_1, \theta_1}
          }
        }
        {
          \Delta'; (\Xi', \wmany{a''}); \Gamma'; \widehat{\rsubst''} \vdash (\Def\; (x: B) \;\In\; C) \wedge \Uof{\Theta_1, \theta_1}
        }
     }
     {\vdots}
  }
  {\Delta'; \Xi'; \Gamma'; \hat{\rsubst} \vdash \exists \wmany{a''}. (\Def\; (x: B) \;\In\; C) \wedge \Uof{\Theta_1, \theta_1}}
\]
This immedately gives us
$\Delta' \vdash \widehat{\rsubst''} : (\Xi', \wmany{a''}) \Rightarrow_\poly \emptydelta$~\premissNum{let-poly:rtl-delta-pp-wf}.

Let $\Xi_\mathrm{f}$ be defined as in the $\Rightarrow$ case and let
$\rsubst_\mathrm{r}$ be a bijection from $\Xi_\mathrm{f}$ to fresh variables.
Further, let $\Delta_\mathrm{m} \coloneqq \ftv(\rsubst_\mathrm{r})$ and
$\delta_\mathrm{m} \coloneqq \restriction{(\rsubst_\mathrm{r} \comp \theta_0)}{(\Xi', a)}$.
By \cref{lemma:mostgen-iff-mapping-free-to-fresh} we then have
$\meta{mostgen}(\Delta', (\Xi', a),\allowbreak \Gamma',\allowbreak \exists \many{a}.\Uof{\Theta_0, \theta_0},\allowbreak \Delta_\mathrm{m}, \rsubst_\mathrm{m})$.
We may now define $\Delta_\mathrm{o}$ and $\omany{b}$ as in the $\Rightarrow$
case, making each of them a subset of $\Delta_\mathrm{m}$.

We now define $\rsubst'$ for all $b \in \Delta_\mathrm{o}$ as follows:
\[
\rsubst'(b) =
\begin{cases}
\widehat{\rsubst''}(\rsubst_\mathrm{r}^{-1}(b)) &\text{if } \rsubst_\mathrm{r}^{-1}(b) \in \ftv(\theta(a)) - \wmany{a'} - \Delta' \\
\dec{unit} &\text{otherwise}
\end{cases}
\]

We have $\wmany{a'} \subseteq \omany{c}$ and therefore
$\ftv(B) = \ftv(\forall \many{c}. \theta(a)) = \ftv(\theta(a)) - \wmany{a'}$.
Thus, by the definition above and the second premise of the derivation of
$\Delta'; (\Xi', \wmany{a''}); \Gamma'; \widehat{\rsubst''} \vdash \Def\; (x: B) \;\In\; C$,
we have
$\Delta' \vdash \rsubst' : \Delta_\mathrm{o} \Rightarrow_\mono \emptydelta$.

The definition of $\delta'$ immedately yields
$\widehat{\rsubst''}(c) = \rsubst'(\rsubst_\mathrm{r}(c))$ for all
$c \in \ftv(\theta_0(a)) \cap (\Xi', \wmany{a''})$. (which we called
\premissRef{let-poly:hat-rsubst-pp-on-Xi-p} in the $\Rightarrow$ direction).
We define $A' \coloneqq \rsubst'(\rsubst_\mathrm{m}(a))$.
Together with \premissRef{let-poly:rtl-delta-pp-wf}, we can use the same
reasoning as in the $\Rightarrow$ case to obtain
$\omany{b} = \rsubst_\mathrm{r}(\omany{c})$ and the alpha-equivalence of and
$\widehat{\rsubst''}(B)$ and $\forall \omany{b}.A'$.

Hence,
$\Delta'; (\Xi', \wmany{a''}); (\Gamma', x : \widehat{\rsubst''} B); \widehat{\rsubst''} \vdash C$
(see derivation of \premissRef{let-poly:C-p-two-sat}) is equivalent to
$\Delta';\allowbreak (\Xi', \wmany{a''});\allowbreak (\Gamma', x : \forall \omany{b}. A');\allowbreak \widehat{\rsubst''} \vdash C$.
We have $\Delta' \vdash \wf{\forall \omany{b}. A'}$ and
$(\Delta, \Deltaof{F}); \Xi'; (\Gammaof{F}, x : \bot) \vdash \wf{C}$ which means we can
weaken it to
$\Delta'; \Xi'; (\Gamma', x : \forall \omany{b}. A'); \hat{\rsubst} \vdash C$.

Next, we define $\widehat{\rsubst'} \coloneqq \hat{\rsubst}[a \mapsto A']$ as in the
$\Rightarrow$ case and show that
$(\Delta', \many{b}); (\Xi', a); \Gamma'; \widehat{\rsubst'}\vdash \exists \many{a}.\Uof{\Theta_0,\allowbreak \theta_0}$~\premissNum{let-polt:rtl-U-0-sat}
holds.
To this end, we wish to apply \cref{lemma:satisfying-U} to
$\Delta'; (\Xi', \wmany{a''}); \Gamma'; \widehat{\rsubst''} \vdash \Uof{\Theta_1, \theta_1}$
(see derivation of \premissRef{let-poly:C-p-two-sat}), which gives us the
existence of $\theta'_1$ such that
$\Delta' \vdash \theta'_1 : \Theta_1 \Rightarrow \emptydelta$ and
$\widehat{\rsubst''} = \theta'_1 \comp \theta_1$~\premissNum{let-poly:rtl-theta-p-1-comp}.
We now show that the necessary preconditions of the lemma are satisfied.

Recall that $\ftv(\Theta_1) = \ftv(\Theta_0) - \wmany{a'}$ and
$\many{c} \subseteq \wmany{a'}$.
We now define $\theta'_0$ as an extension of $\theta'_1$ by setting
$\theta'_0(c) = \rsubst_\mathrm{r}(c)$ for all $c \in \many{c}$, and
$\theta'_0(c) = \dec{unit}$ for all $c \in \wmany{a'} - \many{c}$.
This implies $\Delta', \many{b} \vdash \theta'_0 : \Theta_0 \Rightarrow \emptydelta$.
We now show that
$\widehat{\rsubst'} = \restriction{(\theta'_0 \comp \theta_0)}{(\Xi', a)}$~\premissNum{let-poly:rtl-theta-p-0-comp}.
For all $b \in \Xi'$ we immediately get
$\theta'_0(\theta_0(b)) = \widehat{\rsubst''}(b)$ by $\Xi' \subseteq \ftv(\Theta_1)$
and \premissRef{let-poly:rtl-theta-p-1-comp}.

It remains to show that $\theta'_0(\theta_0(a)) = \widehat{\rsubst'}(a) \stackrel{\text{def.\ }}{=} A'$.
By definition of $A'$ and $\rsubst_\mathrm{m}$ we have
$A' = \rsubst'(\rsubst_\mathrm{m}(a)) = \rsubst'(\rsubst_\mathrm{r}(\theta_0(a))$.
Therefore, it sufficies to show that for all $b \in \ftv(\theta_0(a)) - \Delta'$ we have
$\theta'_0(b) = \rsubst'(\rsubst_\mathrm{r}(b))$.
If $b \in (\Xi', \wmany{a''})$ we have $\theta'_0(b) = \theta'_1(b)$.
By $b \in \ftv(\theta_0)$ we have $\theta_0(b) = b$, which means
that \premissRef{let-poly:rtl-theta-p-1-comp} imposes $\theta'_1(b) = \widehat{\rsubst''}(b)$.
By definition of $\rsubst'$ we then have
$
\theta'_0(b) =
\theta'_1(b) =
\widehat{\rsubst''}(b) =
\widehat{\rsubst''}(\rsubst^{-1}_\mathrm{r}(\rsubst_\mathrm{r}(b))) =
\rsubst'(\rsubst_\mathrm{r}(b))
$.
Otherwise, if $b \in \wmany{a'}$, we have
$\rsubst_\mathrm{r}(b) \not\in \Delta_\mathrm{o}$ and therefore
$\theta'_0(b) = \rsubst_\mathrm{r}(b) = \rsubst'(\rsubst_\mathrm{r}(b))$.
Finally, having shown \premissRef{let-poly:rtl-theta-p-0-comp} we
may apply
\cref{lemma:satisfying-U}
to obtain \premissRef{let-polt:rtl-U-0-sat}.

In conclusion, we can now derive
$\Delta'; \Xi'; \Gamma'; \hat{\rsubst} \vdash C'_1$ as follows:
\[
\inferrule
  {
  \meta{mostgen}(\Delta', (\Xi', a), \Gamma', \exists \many{a}.\Uof{\Theta_0, \theta_0}, \Delta_\mathrm{m}, \rsubst_\mathrm{m}) \\\\
    \Delta_\mathrm{o} = \ftv(\rsubst_\mathrm{m}(\Xi')) - \Delta' \\
    \omany{b} = \ftv(\rsubst_\mathrm{m}(a)) - \Delta', \Delta_\mathrm{o} \\\\
    \Delta' \vdash \rsubst' : \Delta_\mathrm{o} \Rightarrow_\mono \emptydelta \\
    A' = \rsubst'(\rsubst_\mathrm{m}(a)) \\\\
    (\Delta', \many{b}); (\Xi', a); \Gamma'; \hat{\rsubst}[a \mapsto A'] \vdash \exists \many{a}.\Uof{\Theta_0, \theta_0} \\
    \Delta'; \Xi'; (\Gamma', x : \forall \omany{b}. A'); \hat{\rsubst} \vdash C
  }
  {\Delta'; \Xi'; \Gamma'; \hat{\rsubst} \vdash \Let_\poly\; x = \letexists{a} \exists \many{a}. \Uof{\Theta_0, \theta_0}  \;\In\; C}
\]

\end{itemize}
\end{itemize}
\end{proof}

\subsection{Proof of \cref{theorem:progress}}
\progress*
\begin{proof}
\proofContext{progress}
Let $s := (F, \Theta, \theta, C)$.
By assumption, we have
$\emptydelta; \emptykenv; \emptygamma ;\emptysubst \vdash F[C \wedge \Uof{\Theta, \theta}]$~\premissNum{sat}.

We first assume $C \neq \true$ and show that one of the rules in
\cref{paper-fig:constraints-stack-machine} is applicable.

\begin{itemize}
\item Case $C_1 \wedge C_2$: Rule $\lab{S-ConjPush}$ is applicable.
\item Case $A \ceq B$: The left-hand-side of $\lab{S-Eq}$ matches.
The derivation of \premissRef{sat}
must contain a subderivation of the form
\[
\inferrule
  {
    \rsubst'(A) = \rsubst'(B)
  }
  {\Delta'; \Xi'; \Gamma'; \rsubst' \vdash A \ceq B}
\]
for some $\Delta', \Xi', \Gamma', \rsubst'$.


Since we are using the same unification algorithm as in \cite{EmrichLSCC20}, we can then use
\cite[Theorem 5]{EmrichLSCC20} to show that unification succeeds.

\item Case $\freeze{x : A}$: The left-hand-side of $\lab{S-Inst}$ matches.
By \premissRef{sat} we have $x \in \Gammaof{F}$, meaning that the rule succeeds.

\item Case $x \preceq A$: Rule $\lab{S-Freeze}$ is applicable (using the same
argument to show $x \in \Gammaof{F}$ as in the previous case).
\item Case $\exists a. C'$: Rule $\lab{S-ExistsPush}$ is applicable.
\item Case $\forall a. C'$: Rule $\lab{S-ForallPush}$ is applicable.
\item Case $\Def\; (x : A) \;\In\; C'$: The left-hand-side of $\lab{S-DefPush}$ matches.

The derivation of \premissRef{sat} must contain a sub-derivation of the form

\[
\inferrule
  {
    (\text{for all } a \in \ftv(A) - (\Deltaof{F}, \Delta') \mid (\Deltaof{F}, \Delta'); \Xiof{F}; \rsubst(\Gammaof{F}); \rsubst \vdash \monoc(a) \\\\
    (\Deltaof{F}, \Delta'); \Xiof{F}; (\rsubst(\Gammaof{F}),x :  \rsubst A); \rsubst \vdash C'
  }
  {(\Deltaof{F}, \Delta'); \Xiof{F}; \rsubst(\Gammaof{F}); \rsubst \vdash \Def\; (x : A) \;\In\; C'}
\]
for some $\rsubst$,
where $C' = C \wedge \Uof{\Theta, \theta}$ and $\ftv(\Theta) = \Xi$.
Note that by $\wf{s}$ we have $\ftv(A) \subseteq (\Deltaof{F},\Xiof{F})$.
\fe{Here we actually use the wf-ness of $C$ part of state wf-ness!}

According to \cref{lemma:satisfying-U}, we have
$\rsubst = \theta' \comp \theta$ for some $\theta'$ with $ (\Deltaof{F}, \Delta') \vdash \theta' : \Theta \Rightarrow \emptydelta$.
Therefore, the monomorphism conditions imposed by $\lab{S-DefPush}$ are satisfied.
\item Case $\monoc(a)$: Analogous to def case; the sub-derivation for $\monoc(a)$ implies that
the monomorphism conditions imposed by $\lab{S-Mono}$ are satisifed, making the rule applicable.
\item Case $\Let_R\; x = \letexists{a} C_1 \;\In\; C_2$: Rule $\lab{S-LetPush}$ is applicable.
\end{itemize}

We now conside the case that $C$ is $\true$.
Due the assumption about the shape of $F[C]$, we know that $F$ is neither emtpy
nor of the shape $\forall \Delta :: \exists \omany{a}$ for any $\omany{a}$.

We perform a case analysis on the topmost stack frame of $F$:
\begin{itemize}
\item Case $\Box \wedge C_2$: Rule $\lab{S-ConjPop}$ is applicable.
\item Case $\Def\; (x : A)$: Rule $\lab{S-DefPop}$ is applicable.
\item
Case $\Let_R\; x = \letexists{a} C_1 \;\In\; C_2$: Rule $\lab{S-LetPolyPop}$ or
$\lab{S-LetMonoPop}$ is applicable, where the sequence $\omany{a}$ mentioned in
the rule's definition is empty.
None of the respective rule's side conditions can fail.

\item Case $\forall a$: Let $F'$ be defined such that $F = F' :: \forall a$.
By assumption $\wf{s}$ we have that the  variables in $\Theta$ are
exactly the variables bound by $\exists$ or $\Let$ frames in $F'$.
Suppose there exists $b \in \ftv(\Theta)$ such that $a \in \ftv(\theta(b))$,
which would cause the rule to fail.
Then there exists a frame $f$ in $F'$ that binds $b$.
Let $F_p$ be the (possibly empty) prefix of $F'$ up to, but not including, frame
$f$ and let $F_s$ be the (possibly empty) suffix from there (i.e., $F' = F_p \snoc{} f :: F_s$).

We distinguish two sub-cases further:
\begin{enumerate}
\item
If $f = \exists b$ then the derivation of \premissRef{sat} must contain a
sub-derivation of the following form:
\[
\inferrule
  {
    (\Deltaof{F_p}, \Delta'); (\Xiof{F_p}, b); \rsubst(\Gammaof{F_p}); \rsubst[b \mapsto A] \vdash C'
  }
  {(\Deltaof{F_p}, \Delta'); \Xiof{F_p}; \rsubst(\Gammaof{F_p}); \rsubst \vdash \exists b.C'}
\]
for some $A$, $\rsubst$, and $\Delta'$,
where $C' = F_s[C \wedge \Uof{\Theta, \theta}]$.

Note that
$(\Deltaof{F_p}, \Delta'); (\Xiof{F_p}, b); \rsubst(\Gammaof{F_p}); \rsubst[b \mapsto A] \vdash C'$
implies $(\Deltaof{F_p}, \Delta') \vdash \wf{A}$~\premissNum{A-wf}.
Further,
as $F_s$ contains the frame $\forall a$, we have $a \not\in \Deltaof{F_p}, \Delta'$~\premissNum{a-not-in-stuff}.

Likewise, there existis a subderivation showing
$(\Deltaof{F}, \Delta''); \Xiof{F}; \rsubst''(\Gammaof{F}); \rsubst'' \vdash \Uof{\Theta, \theta}$
for some $\Delta''$ and $\rsubst''$, where $\Delta'' \supseteq \Delta'$ and $\rsubst''$ is an extension of $\rsubst[b \mapsto A]$.
By \cref{lemma:satisfying-U} we have that there exists
$\theta'$ such that
$\Deltaof{F}, \Delta'' \vdash \theta' : \Theta \Rightarrow \emptykenv$ and
$\rsubst'' = (\theta' \comp \theta)$.
Because $\rsubst''$ is an extension of $\rsubst[b \mapsto A]$, this
implies $A = \theta'(\theta(b))$.


By $a \in \Deltaof{F}$ we have $\theta'(a) = a$. Due to assumption
$a \in \ftv(\theta(b))$ we then have $a \in \ftv(A)$
However, we have $a \not\in \btv(F_p),\Delta'$~\premissRef{a-not-in-stuff} and
$\Deltaof{F_p}, \Delta' \vdash \wf{A}$~\premissRef{A-wf}, yielding the contradiction $a \not\in \ftv(A)$.

\item
If $f$ is of the the form $\Let_\poly\; x = \letexists{b} C_1 \;\In\; C_2$ then
the derivation of \premissRef{sat} must contain a sub-derivation of the
following form:
\[
\inferrule
  {
    \dots \\\\
    (\Deltaof{F_p}, \Delta', \many{a}); (\Xiof{F_p}, b); \rsubst(\Gammaof{F_p}); \rsubst[b \mapsto A] \vdash C_1 \\
  }
  {(\Deltaof{F_p}, \Delta'); \Xiof{F_p}; \rsubst(\Gammaof{F_p}); \rsubst \vdash \Let_\poly\; x = \letexists{b} C_1 \;\In\; C_2}
\]
for some $A, \Delta', \many{a}$  and $\rsubst$,
where $C_1 = F_s[C \wedge \Uof{\Theta, \theta}]$.
We have $(\Deltaof{F_p}, \Delta', \many{a}) \vdash \wf{A}$ and $a \not\in \Deltaof{F_p}, \Delta', \many{a}$
and may therefore obtain the same contradiction as in the previous case $f = \exists b$.



\item The case $\Let_\mono\; x = \letexists{b} C_1 \;\In\; C_2$ is analogous.

\end{enumerate}

\item Case $\exists{a}$:
If the topmost stack frames of $F$ have the shape
$\Let_R x = \letexists{a} \Box \;\In\; C' :: \exists \omany{b}$, then
$\lab{S-LetPolyPop}$ or $\lab{S-LetMonoPop}$ is applicable, as discussed before.
Otherwise, due to our assumption about the shape of $F[C]$, there exists a frame
$f$ in $F$ that isn't an $\exists$ frame and we can apply $\lab{S-ExistsLower}$,
which always succeeds.

\end{itemize}
\end{proof}


\subsection{Proof of \cref{theorem:termination}}

\begin{restatable}[Well-Ordering on States]{lem}{wellorderingonstates}
\label{lemma:well-ordering-on-states}
There exists a strict well-ordering $<$ on the set $\dec{St}$ of stack machine states
such that for all $s, s' \in \dec{St}$ with $s \to s'$ we have $s' < s$.
\end{restatable}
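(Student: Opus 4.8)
The plan is to exhibit a measure $\mu$ from states into a lexicographically ordered tuple of natural numbers and show that every transition of \cref{paper-fig:constraints-stack-machine} strictly decreases it; the relation $s' < s$ defined by $\mu(s') <_{\mathrm{lex}} \mu(s)$ is then well-founded, which is all that is needed to exclude infinite reduction sequences. (If a genuinely total well-ordering is wanted it can be obtained by refining $\mu$ with a fixed well-ordering of the carrier, but below I concentrate on well-foundedness.) Writing $G = F[C]$ for the constraint represented by a state $(F,\kenv,\subst,C)$, I would let $\mu(F,\kenv,\subst,C)$ be the four-component tuple consisting of: \textbf{(1)} the number of frozen constraints $\freeze{x:A}$, instance constraints $x \preceq A$, and $\Let$-formers occurring in $G$; \textbf{(2)} the total syntactic size of $G$, counting every constraint former, binder, atom, and embedded type node; \textbf{(3)} the syntactic size of the in-progress constraint $C$; and \textbf{(4)} the sum, over all existential frames $e$ of $F$, of the number of non-existential frames lying below $e$ in the stack.

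The observation driving this design is that three families of rules cannot be controlled by size alone: $\lab{S-Inst}$ replaces an instance atom by an arbitrarily deep nest of existentials, $\lab{S-Freeze}$ and the two $\lab{S-Let*Pop}$ rules copy a context type into the constraint, and $\lab{S-ExistsLower}$ may merely permute frames. I would therefore first verify that component \textbf{(1)} is non-increasing under every rule and strictly decreasing under exactly $\lab{S-Freeze}$, $\lab{S-Inst}$, $\lab{S-LetPolyPop}$, and $\lab{S-LetMonoPop}$. This rests on the fact that no transition ever creates a frozen constraint, an instance constraint, or a $\Let$-former, whereas each of these four rules consumes one: $\lab{S-Freeze}$ and $\lab{S-Inst}$ turn their atom into an equality, and the $\Let$-pops rewrite a let frame into a $\Def$ frame, which is not counted. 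Because component \textbf{(1)} is most significant, the potentially large types these rules introduce are dominated.

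The remaining work is a case analysis showing that for every other rule component \textbf{(1)} is unchanged while a later component strictly decreases. For the ``pop'' rules $\lab{S-ConjPop}$, $\lab{S-ForallPop}$, $\lab{S-DefPop}$, $\lab{S-Eq}$, and $\lab{S-Mono}$, the represented constraint $G$ loses a connective or collapses an equality/monomorphism atom to $\true$, so component \textbf{(2)} drops. For the ``push'' rules $\lab{S-ConjPush}$, $\lab{S-ExistsPush}$, $\lab{S-ForallPush}$, $\lab{S-DefPush}$, and $\lab{S-LetPush}$, the defining equation of $F[-]$ gives $(F \snoc f)[C'] = F[C]$, so $G$ and hence components \textbf{(1)} and \textbf{(2)} are literally unchanged, while the in-progress constraint shrinks and component \textbf{(3)} drops. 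The only genuinely subtle case is $\lab{S-ExistsLower}$: when it discards some existentials component \textbf{(2)} falls, but in its pure-reordering instance components \textbf{(1)}--\textbf{(3)} are all unchanged (since $C = \true$ throughout), and here component \textbf{(4)} does the work—moving the block $\exists \omany{a}$ below the single non-$\exists$, non-$\Let$ frame $f$ reduces each surviving existential's count of frames-below by one, so the sum strictly decreases, using the side condition $|\omany{a}| > 0$. I expect this last case, together with pinning down a notion of size on $G$ that simultaneously forces $\lab{S-Eq}$ to decrease component \textbf{(2)} yet is definitionally preserved by the push rules, to be the main obstacle; the rest is routine bookkeeping over the rules.
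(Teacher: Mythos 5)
Your proposal is correct and follows essentially the same route as the paper: a lexicographic four\-/component measure whose leading component absorbs the rules that copy context types or expand into existentials, whose middle components track the sizes of $F[C]$ and of $C$, and whose last component handles the pure reordering performed by $\lab{S-ExistsLower}$, with an arbitrary auxiliary well-order supplying totality. The only substantive difference is bookkeeping: the paper keeps just $\meta{insts}(C)$ in the first component and instead makes $\lab{S-Freeze}$ and the $\lab{S-Let*Pop}$ rules decrease the second component by using a \emph{type-blind} size with tuned weights ($|\freeze{x:A}|=2>1=|A\ceq B|$ and $|\Let_R\;\dots|=3+\cdots$), whereas you count frozen, instance, and $\Let$ occurrences up front and can therefore afford an ordinary type-counting size; both choices go through.
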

\begin{proof}
First, we define the size of a constraint $C$, denoted $|C|$, s.t.\
\[
\ba{r c l}
|\true| &= &0 \\
|\monoc(a)| &= &1 \\
|A \ceq B| &= &1 \\
|\freeze{x : A}| &= &2 \\
|x \preceq A| &= &2 \\
|\exists a. C| &= &1 + |C| \\
|\forall a. C| &= &1 + |C| \\
|\Def\; (x : A) \;\In\; C| &= &1 + |C| \\
|C_1 \wedge C_2| &= &1 + |C_1| + |C_2| \\
|\Let_R\; x = \letexists{a} C_1 \;\In\; C_2| &= &3 + |C_1| + |C_2| \\
\ea
\]
Next, we define $\meta{insts}(C)$ to be the number of instantiation (sub-)constraints in $C$.

We now define the function $|\cdot|$ that maps  states to elements of
$\N_0 \times \N_0 \times \N_0 \times \N_0$:
\[
\ba{r c l}
|(f_{0} :: \dots :: f_{n}, \Theta, \theta, C)| =
(
\meta{insts}(C), \:
|F[C]|,\:
|C|,\:
\max \{i  \mid 0 \leq i \leq n, \text{ $f_i$ is an $\exists$ frame }\}
)
\ea
\]

We observe that the lexicographic ordering $<_{\mathit{lex}}$ on tuples from
$\N_0 \times \N_0 \times \N_0 \times \N_0$ constitutes a well-ordering on such
tuples and we will show below that for each step $s \to s'$ we have that
$|s'| <_{\mathit{lex}} |s|$ holds.
However, the function $|\cdot|$ on states is surjective, which implies that defining
$s' < s$ iff $|s'| <_{\mathit{lex}} |s|$ would \emph{not} yield a total order on
states.
Hence, let $<_{\mathit{ord}}$ be some arbitrary strict well-order on states.
We then define
\[
s' < s \;\;\text{iff}\;\; |s'| <_{\mathit{lex}} |s| \text{ or } |s'| = |s| \text{ and }
s' <_{\mathit{ord}} s
\]
which is indeed a well-ordering.

It remains to show that each step of the stack machine produces a smaller state w.r.t.\ $|\cdot|$.
Hence, assume $s \to s'$, where
$s = (F, \Theta, \theta, C)$ and $s' = (F', \Theta', \theta', C')$.

\begin{itemize}
\item
If the step is the result of applying the rule
\lab{S-Eq},
\lab{S-Freeze},
or \lab{S-Mono},
then we have $F = F'$ and $|C'| < |C|$, yielding $|s'| <_\mathit{lex} |s|$ via the second component of the tuples.

\item
If the step is the result of applying $\lab{S-Inst}$, we have
$\meta{insts}(C') = \meta{insts}(C) - 1$ and
we have $|s'| <_\mathit{lex} |s|$ via the first component of the tuples.

\item
If the step is the result of applying the rule
\lab{S-ConjPop},
\lab{S-ForallPop}, or
\lab{S-DefPop},
we have $\meta{insts}(C) = \meta{insts}(C')$ and $|F'[C']| < |F[C]|$, yielding $|s'| <_\mathit{lex} |s|$
via the second component of the tuple.

\item
If the step is the result of applying the rule
\lab{S-ConjPush},
\lab{S-ExistsPush},
\lab{S-ForallPush},
\lab{S-DefPush}, or
\lab{S-LetPush},
we have $\meta{insts}(C) = \meta{insts}(C')$ and $F[C] = F'[C']$, but $|C'| < |C|$, yielding $|s'| <_\mathit{lex} |s|$ via the third component of the tuples.

\item
If \lab{S-ExistsLower} got applied,
let
$\many{c}$ and $\many{a}$ be defined as in the rule,
$F = f_0 :: \dots f_n$, and $l = |\many{a}|$.
Note that the rule imposes $l > 0$ and we have $C = C' = \true$.

We observe that $\many{c}$ is a subset of $\many{a}$.
If $|\many{a}| > |\many{c}|$ , we have $|F'[C']| < |F[C]|$ because
the set of frames of $F'$ is a strict subset of the frames in $F$.
We then obtain $|s'| <_\mathit{lex} |s|$ immediately via the second component of the tuples
(the first component remains unchanged).
Otherwise, we have that the two sets are equal.  In that case we have
$|F[C]| = |F'[C']|$ (as there is merely a reordering of stack frames happening) and
$|C| = |C'| = 0$.
The resulting stack $F'$ is of the form $f'_0 :: \dots f'_n $,
where $f'_n = f_{n - l}$ (not an $\exists$ frame), and $f'_{n - 1}$ is an
$\exists$ frame.

Together, we have $|s| = (\meta{insts}(C), |F[C]|, 0, n)$, and
$|s'| = (\meta{insts}(C'), |F[C]|, 0, n - 1)$, and
$\meta{insts}(C) = \meta{insts}(C')$, yielding $|s'| <_\mathit{lex} |s|$.

\item
If rule \lab{S-LetPolyPop} was applied, we use the following reasoning:
$F$ is of the form
$F_0 :: \dec{let}_\poly \, x = \letexists{c} \Box \,\dec{in}\, \hat{C} :: \exists \many{a}$
and $C$ is $\true$.
This yields
\[
\ba{rcl}
|F[C]| &= &|F_0[\Let_\poly\; x = \letexists{c} \exists \many{a}. \true \;\In\; \hat{C}]| \\
&= &|\Let_\poly\; x = \letexists{c} \true \;\In\; \true| + |\exists{\many{a}. \true}| + |\hat{C}| + |F_{0}[\true]| \\
&= &3 + |\many{a}| + |\hat{C}| + |F_0[\true]|,
\ea
\]

Let $\wmany{a''}$ be defined as in the rule.
We have
\[
  |F'[C']| = |F_0[\exists \wmany{a''}. \Def\; (x : A) \;\In\; \hat{C} ]
= 1 + |\wmany{a''}| + |\hat{C}| + |F_0[\true]|
\]
Proving $F'[C'] < F[C]$ is therefore equivalent to proving
\[
\ba{c l}
&1 + |\wmany{a''}| + |\hat{C}| + |F_0[\true]| < 3 + |\many{a}| + |\hat{C}| + |F_0[\true]| \\
\text{equiv. } &\wmany{a''} < 2 + \many{a}.
\ea
\]
We observe that $\wmany{a''}$ is a strict subset of $(\many{a}, c)$ and hence
$|\wmany{a''}| < |\many{a}| + 2$, meaning that the inequality above holds.

We have $\meta{insts}(C) = \meta{insts}(C')$, and therefore $|s'| <_\mathit{lex} |s|$ via the
second component of the tuples.

\item
The reasoning for rule \lab{S-LetMonoPop} is analogous to the previous case. The
only change is that we need to observe that $(\many{c}, \wmany{a''})$ is a subset
of $(\many{a}, c)$.

\end{itemize}

\end{proof}

\termination*
\begin{proof}
Follows immediately from \cref{lemma:well-ordering-on-states}, which guarantees
the absence of infinite sequences of steps.
\end{proof}

\subsection{Proof of \cref{theorem:solver-correct}}

The following lemma is a slight variation of
\cref{theorem:solver-correct}; we use it in the proof of the
latter.
\begin{lem}
\label{lemma:helper-constraints-vs-solver}
Let $\wf{(\forall \Delta \snoc{} \exists \many{a} \snoc{} F, \Theta, \theta, C)}$. Then we have

\[
\bl
\Delta; \many{a} ; \emptygamma; \rsubst
\vdash F[C \wedge \Uof{\Theta,\theta}]   \\
\text{iff} \\
\text{there exist } \kenv', \theta'', \theta', \many{b} \text{ s.t.\ } \\
\ba{cl}
 \;&(\forall \Delta \snoc{} \exists \many{a}  \snoc{} F, \Theta, \theta, C) \to^{*}
    (\forall \Delta :: \exists\; (\many{a},\many{b}), \kenv', \subst', \true) \text{ and }  \\
  &\Delta \vdash \theta'' : \Theta' \Rightarrow \emptydelta \text{ and }  \\
  &\restriction{(\theta'' \comp \theta')}{\many{a}} = \rsubst
\ea
\el
\]

\end{lem}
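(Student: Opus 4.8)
The plan is to prove both directions by running the stack machine and transporting satisfiability across each transition with \cref{theorem:preservation}. The key structural observation is that the bottom prefix $\forall \Delta \snoc \exists \many{a}$ is never modified by any transition: every rule in \cref{paper-fig:constraints-stack-machine} acts on the topmost frames, and crucially $\lab{S-ExistsLower}$ only lowers an existential block past a non-$\exists$, non-$\Let$ frame, so it can never push existentials below the $\exists \many{a}$ block. Hence every reachable state has the form $(\forall \Delta \snoc \exists \many{a} \snoc F_i, \Theta_i, \theta_i, C_i)$, and any final state necessarily has shape $(\forall \Delta \snoc \exists (\many{a}, \many{b}), \Theta', \theta', \true)$, where $\many{b}$ collects the existentials that have bubbled down from $F$. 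This means \cref{theorem:preservation} applies to each step with the fixed prefix $\forall \Delta \snoc \exists \many{a}$ (taking $\Xi := \many{a}$), giving for every transition $s_i \to s_{i+1}$ that $\Delta; \many{a}; \emptygamma; \rsubst \vdash F_i[C_i \wedge \Uof{\Theta_i, \theta_i}]$ iff $\Delta; \many{a}; \emptygamma; \rsubst \vdash F_{i+1}[C_{i+1} \wedge \Uof{\Theta_{i+1}, \theta_{i+1}}]$, with the \emph{same} $\rsubst$ on both sides. \cref{lemma:stack-machine-steps-preserve-wf-ness} maintains the well-formedness precondition throughout.

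For the forward direction I would assume $\Delta; \many{a}; \emptygamma; \rsubst \vdash F[C \wedge \Uof{\Theta, \theta}]$ and run the machine. By \cref{theorem:termination} the maximal reduction sequence from the initial state is finite, ending in some state $s_n$ that cannot step. To rule out getting stuck at a non-final state, note that $\Delta; \many{a}; \emptygamma; \rsubst \vdash F_i[C_i \wedge \Uof{\Theta_i, \theta_i}]$ is, via $\lab{Sem-Forall}$ and $\lab{Sem-Exists}$, equivalent to $\emptydelta; \emptykenv; \emptygamma; \emptysubst \vdash (\forall \Delta \snoc \exists \many{a} \snoc F_i)[C_i \wedge \Uof{\Theta_i, \theta_i}]$, which is exactly the closed-satisfiability hypothesis of \cref{theorem:progress}. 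Since preservation keeps each intermediate state satisfiable by $\rsubst$ (by induction along the sequence), the contrapositive of progress forces the non-steppable $s_n$ to be a genuine final state $(\forall \Delta \snoc \exists (\many{a}, \many{b}), \Theta', \theta', \true)$. There the plugged constraint above the prefix collapses to $\exists \many{b}. \Uof{\Theta', \theta'}$; using the well-formedness facts $\ftv(\Theta') = \many{a}, \many{b}$ and $\Delta \vdash \theta' : \Theta' \Rightarrow \Theta'$, I apply \cref{lemma:satisfying-U} (with its $\Xi$ instantiated to $\many{a}$ and its existentially-bound block to $\many{b}$) to extract a $\theta''$ with $\Delta \vdash \theta'' : \Theta' \Rightarrow \emptydelta$ and $\restriction{(\theta'' \comp \theta')}{\many{a}} = \rsubst$, which is the right-hand side.

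For the backward direction I would be handed the reduction $(\forall \Delta \snoc \exists \many{a} \snoc F, \Theta, \theta, C) \to^* (\forall \Delta \snoc \exists (\many{a}, \many{b}), \Theta', \theta', \true)$ together with a $\theta''$ satisfying $\Delta \vdash \theta'' : \Theta' \Rightarrow \emptydelta$ and $\restriction{(\theta'' \comp \theta')}{\many{a}} = \rsubst$. The right-to-left direction of \cref{lemma:satisfying-U} then yields $\Delta; \many{a}; \emptygamma; \rsubst \vdash \exists \many{b}. \Uof{\Theta', \theta'}$, i.e.\ satisfiability of the final state's constraint. I then induct on the length of the given reduction, applying the iff of \cref{theorem:preservation} at each step to carry satisfiability by $\rsubst$ backwards from $s_n$ to $s_0$, concluding $\Delta; \many{a}; \emptygamma; \rsubst \vdash F[C \wedge \Uof{\Theta, \theta}]$.

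The main obstacle I anticipate is the bookkeeping around the fixed bottom prefix together with the translation between the two notions of satisfiability used by the auxiliary results: \cref{theorem:preservation} is phrased relative to fixed $\Delta; \Xi$ and a fixed $\rsubst$, whereas \cref{theorem:progress} demands closed satisfiability under empty contexts. Carefully justifying that these coincide through the semantics of the leading $\forall$ and $\exists$ quantifiers, and verifying the invariant that $\exists \many{a}$ stays the bottom-most existential block so that the $\many{b}$ of the final state are precisely the lowered existentials, is where the genuine work lies; the extraction of $\theta''$ itself is immediate once \cref{lemma:satisfying-U} is in hand.
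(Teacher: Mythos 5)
Your proposal is correct and follows essentially the same route as the paper's proof: both directions rest on transporting satisfiability across individual transitions via \cref{theorem:preservation} (after establishing that the bottom prefix $\forall \Delta \snoc \exists \many{a}$ is never disturbed, with the same case analysis of \lab{S-ExistsLower} and the \lab{S-Let*Pop} rules), using \cref{theorem:progress} plus the closed-satisfiability reformulation to guarantee a satisfiable non-final state can step, and extracting $\theta''$ at the final state via \cref{lemma:satisfying-U}. The only cosmetic difference is that the paper organises the forward direction as a transfinite induction on the termination well-ordering of states (stepping once and invoking the induction hypothesis on the smaller state), whereas you invoke \cref{theorem:termination} to obtain a finite maximal sequence and then walk along it; these are the same argument.
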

\begin{proof}
\proofContext{helper-constraints-vs-solver}

We show each direction individually:
\begin{itemize}
\item[$\Longrightarrow$]
\proofContext{constraints-solvable-subst-no-rank-left-to-right}
By transfinite induction on the well-ordering $<$ on stack machine states $s$
whose existence is shown in \cref{lemma:well-ordering-on-states}.
Hence, we assume that the left-to-right direction of the lemma holds for all
$s'$ on the left of the $\to^{*}$ s.t.\ $s' < s$ and show that the left-to-right
direction holds for $s$ on the left of $\to^{*}$, too.

\fe{
If we want to we could spell out the induction hypothesis and principle more clearly:\\
More formally, let $\Delta$ and $\many{a}$ be fixed. Then we define $\meta{Prop}(\hat{s})$  as follows:
\begin{quote}
For all $F, \Theta, \theta, C, \rsubst$: If $\hat{s} = (\forall \Delta \snoc{} \exists \many{a} \snoc{} F, \Theta, \theta, C)$ and
$\wf{\hat{s}}$ and $\Delta; \many{a}; \emptygamma; \rsubst
\vdash F[C \wedge \Uof{\Theta, \theta}]$ then there exist $\kenv', \theta', \theta'', \many{b}$ s.t.\
\[
\ba{cl}
 \;&(\forall \Delta \snoc{} \exists \many{a}  \snoc{} F, \Theta, \theta,  C) \to^{*}
    (\forall \Delta :: \exists\; (\many{a},\many{b}), \kenv', \subst', \true) \text{ and }  \\
  &\Delta \vdash \theta'' : \Theta' \Rightarrow \emptydelta \text{ and }  \\
  &\restriction{(\theta'' \comp \theta')}{\many{a}} = \rsubst.
\ea
\]
\end{quote}

Then we show that if $\meta{Prop}(\hat{s})$ holds for all $\hat{s} < s$, then
$\meta{Prop}(s)$ holds.
By transfinite induction, this is sufficient to show that $\meta{Prop}(s)$ holds
for all $s$.

}

To this end, let $s = (\forall \Delta \snoc{} \exists \many{a} \snoc{}  F, \Theta, \theta, F, C)$ and we assume
$\wf{s}$~\premissNum{premiss:initial-ok} and
$\Delta; \many{a}; \emptygamma; \rsubst
\vdash F[C \wedge \Uof{\Theta, \theta}]$~\premissNum{premiss:sat}.

We first consider the case that $s$ is already a final state in the senses of
this lemma, meaning that $F$ is of the shape $\exists \many{b}$ for some
$\many{b}$ and $C$ is $\true$.
Further, we have $\theta = \theta'$ and $\Theta = \Theta'$, where
$\ftv(\Theta) = \many{a}, \many{b}$.

This makes \premissRef{premiss:sat} equivalent to
$\Delta; \many{a}; \emptygamma; \rsubst \vdash \exists \many{b}. \Uof{\Theta, \theta}$.
Applying \cref{lemma:satisfying-U} then gives us the existence of an appropriate
$\theta''$.

\vspace{0.3cm}
We now consider the case where $s$ is not a final state, i.e., we don't
have $F[C] = \exists \many{b}. \true$ for any $\many{b}$.
We observe that
\premissRef{premiss:sat} implies
$\empty; \emptydelta; \emptygamma; \emptysubst
\vdash \forall \Delta :: \exists \many{a} :: F[C \wedge \Uof{\Theta, \theta}]$.
This allows us to apply \cref{theorem:progress}, showing that the machine can
take a step from $s$ to a new state $s_1$.
We now show that $s_1$ is of the form
$(\forall \Delta \snoc{} \exists \many{a} \snoc{} F_{1}, \Theta_1, \theta_1, C_1)$
for some $F_{1}, \Theta_1, \theta_1$, and  $C_1$:

If $F$ is empty, then $C$ must not be $\true$.
All stack machine rules applicable in this case preserve all existing stack
frames.
Otherwise, if $F$ is not empty, we observe that the only rules of the stack
machine that may replace more than the topmost stack frame are
$\lab{S-ExistsLower}$, $\lab{S-LetMonoPop}$, and $\lab{S-LetPolyPop}$.

If \lab{S-ExistsLower} was applied, we observe that the only way for the
variables $\many{a}$ in the definition of the rule \lab{S-ExistsLower} not to be
disjoint from the variables $\many{a}$ in the statement of this lemma is if the
stack of $s$ is of the form
$\forall \Delta :: \exists \many{a} :: \exists \many{b}$ for some $\many{b}$,
which violates the assumption about the shape of $F[C]$ above.
Therefore, if \lab{S-ExistsLower} was applied, the bottom-most frames
$\forall \Delta :: \exists \many{a}$ of $s$ remained unchanged.
If $\lab{S-LetPolyPop}$ or $\lab{S-LetMonoPop}$ was applied, then $F$ must contain a
$\Let$ frame and any stack frames below that in $s$ (in particular, the frames
$\forall \Delta :: \exists \many{a}$) remain unchanged.

Therefore, the $\forall \Delta \snoc{} \exists \many{a}$ frames at the bottom of
$s$'s stack are preserved by any rule possibly turning $s$ into $s_{1}$.
Using \premissRef{premiss:initial-ok} and the fact that the lower stack frames
of $s_{1}$ are $\forall \Delta \snoc \exists \many{a}$, we may apply
\cref{theorem:preservation} to the step $s \to s_1$, which gives us
\begin{premisses}
 \item $\Delta; \many{a}; \emptygamma; \rsubst
   \vdash F[C \wedge \Uof{\Theta, \theta}] \;\text{ iff }\; \Delta;  \many{a};
   \emptygamma; \rsubst \vdash F_1[C_1 \wedge \Uof{\Theta_1, \theta_1}]$
   \premissLabel{constraints-equivalent}
\end{premisses}
By \cref{lemma:well-ordering-on-states}, we further have
$s_1 < s$~\premissNum{s1-is-smaller} and by
\cref{lemma:stack-machine-steps-preserve-wf-ness} $\wf{s_1}$
\premissNum{final-state-wf}.

Combining \premissRef{premiss:sat} with \premissRef{constraints-equivalent}
gives us
$\Delta; \many{a};
   \emptygamma; \rsubst \vdash F_1[C_1 \wedge \Uof{\Theta_1, \theta_1}]$.
This, together with \premissRef{final-state-wf} and \premissRef{s1-is-smaller}
allows us to apply the induction hypothesis to $s_1$.
This gives us the existence of
$\kenv', \theta'', \theta', \many{a}$ \text{ s.t.\ }
\begin{premisses}
  \item $(\forall \Delta \snoc{} \exists \many{a}  \snoc{} F_1, \Theta_1, \theta_1, C_1) \to^{*}
  (\forall \Delta :: \exists\; (\many{a},\many{b}), \kenv', \subst', \true)$
  \premissLabel{final-run}
  \item $\Delta \vdash \theta'' : \Theta' \Rightarrow \emptydelta$ \premissLabel{theta-prime-s-wf}
  \item $\restriction{(\theta'' \comp \theta')}{\many{a}} = \rsubst$.
  \premissLabel{theta-1-vs-theta-s-composition}
\end{premisses}
The step $s \to s_1$ extends \premissRef{final-run} to
$(\forall \Delta \snoc{} \exists \many{a} \snoc{} F, \Theta, \theta, C) \to^{*} (\forall \Delta :: \exists\; (\many{a},\many{b}), \kenv', \subst', \true)$
and \premissRef{theta-prime-s-wf} as well as
\premissRef{theta-1-vs-theta-s-composition} show us that $\theta''$ has the desired properties.

\item[$\Longleftarrow$]
Let $s$ be the state
$(\forall \Delta \snoc{} \exists \many{a} \snoc{} F, \Theta, \theta, C)$.
We prove this direction by induction on the length $n$ of the sequence
$s \to^{n} (\forall \Delta :: \exists\; (\many{a},\many{b}), \kenv', \subst', \true)$.
By assumption, we also have
$\Delta \vdash \theta'' : \Theta' \Rightarrow \emptykenv$~\premissNum{rtl:theta-pp-wf}
and
$\restriction{(\theta'' \comp \theta')}{\many{a}}= \rsubst$~\premissNum{rtl:comp}.

If $n = 0$ we have
$\Theta = \Theta'$, $\theta = \theta'$, $C = \true$, and $F = \exists \many{b}$.
The property to prove simplifies to
$\Delta; \many{a}; \emptygamma; \rsubst \vdash \exists \many{b}. \Uof{\Theta, \theta}$.
This follows directly from applying \cref{lemma:satisfying-U} to \premissRef{rtl:theta-pp-wf}
and \premissRef{rtl:comp}.


\vspace{0.3cm}
In the inductive step there exists some $s_{1}$  s.t.\
\[
s \to
s_{1} \to^{*}
(\forall \Delta \snoc{} \exists (\many{a}, \many{b}) , \kenv', \subst', \true)
\]

We now assume that $F[C]$ is not of the form $\exists \many{c}. \true$ for any
$\many{c}$ (otherwise, $s$ would already be a final state in the sense of this
lemma and we finish the proof directly using the $n = 0$ case above).

Therefore, using the same reasoning as in the $\Longrightarrow$ direction, we
know that $s_1$ is of the form
$(\forall \Delta \snoc{} \exists \many{a} \snoc{} F_{1}, \Theta_1, \theta_1, C_1)$
for some $F_{1}, \Theta_1, \theta_1$, and $C_1$.
According to \cref{lemma:stack-machine-steps-preserve-wf-ness}, we have $\wf{s_1}$.
We can therefore apply \cref{theorem:preservation} to this single step,
yielding
\begin{premisses}
 \item $\Delta; \many{a}; \emptygamma; \rsubst
   \vdash F[C \wedge \Uof{\Theta,\theta}] \;\text{ iff }\; \Delta; \many{a};
   \emptygamma; \rsubst \vdash F_1[C_1 \wedge \Uof{\Theta_1, \theta_1}]$
\premissLabel{rtl:equiv}
\end{premisses}

We apply the induction hypothesis to the sequence
$s_{1} \to^{*}
(\forall \Delta \snoc{} \exists (\many{a}, \many{b}) , \kenv', \subst', \true)$,
yielding
$\Delta; \wmany{a}; \emptygamma; \rsubst
\vdash F_{1}[C_{1} \wedge \Uof{\Theta_1, \theta_1}]$
By
\premissRef{rtl:equiv}, this gives us the desired property
$\Delta; \wmany{a}; \emptygamma; \rsubst \vdash F[C \wedge \Uof{\Theta, \theta}]$.
\end{itemize}
\end{proof}

\constraintssolvableunifybysubstnorank*

\begin{proof}
Let $\omany{a}$ be an arbitrary ordering of the variables in $\Xi$.
Further, let $\theta_{\mathrm{a}} := [\omany{a} \mapsto \omany{a}]$ and $\Theta_{\mathrm{a}} := (\wmany{a : \poly})$.
We have
\[
(\emptystack, \emptykenv, \emptysubst, \forall \Delta. \exists \many{a}. \Def\; \Gamma \;\In\; C) \;\to^{*}\;
(\forall \Delta :: \exists \many{a}, \Theta_{\mathrm{a}}, \theta_{\mathrm{a}}, \Def\; \Gamma \;\In\; C)
\]
after $|\Delta|$ applications of the rule $\lab{S-ForallPush}$ and $|\many{a}|$
applications of $\lab{S-ExistsPush}$.
Let the former state be defined as $s$, the latter one as $s'$.
Here, due to $\Delta; \Theta;\Gamma \vdash \wf{C}$, we have $\wf{s'}$.

Therefore, for all
$\hat{\Theta}, \hat{\theta}, \many{b}$ we have
\[
\ba{r c l}
s' &\to^{*}
    &(\forall \Delta :: \exists\; (\many{a},\many{b}), \hat{\kenv}, \hat{\subst}, \true) \\
\multicolumn{3}{c}{\text{iff}} \\
s &\to^{*} &(\forall \Delta :: \exists\; (\many{a},\many{b}), \hat{\kenv}, \hat{\subst}, \true) \\
\ea
\]

\fe{Here we use the $\Delta \vdash \Gamma$ precondition of the Theorem}
Now, let $F$ be the empty stack. We then have
\[
\ba{r r@{} l@{} l@{} l@{}}
                     &\Delta; \Xi; \Gamma; \rsubst \;&\vdash \;&C \\
\;\text{ iff }\quad &\Delta; \Xi; \emptygamma; \rsubst \;&\vdash &(\Def\; \Gamma \;\In\; C) &\text{(by $\Delta \vdash \Gamma$: all mono. conditions satisfied)}\\
\;\text{ iff }\quad &\Delta; \Xi\,; \emptygamma; \rsubst \;&\vdash &(\Def\; \Gamma \;\In\; C) \wedge \Uof{\Theta_\mathrm{a}, \theta_{\mathrm{a}}}
  \qquad\quad&\text{($\Uof{\Theta_\mathrm{a}, \theta_{\mathrm{a}}}$ is equivalent to $\true$)}\\
\;\text{ iff }\quad &\Delta; \Xi\,; \emptygamma; \rsubst \;&\vdash &F[(\Def\; \Gamma \;\In\; C) \wedge \Uof{\Theta_\mathrm{a}, \theta_{\mathrm{a}}}]
  &\text{($F$ is empty)}\\
\ea
\]

Using this, the equivalence to prove then follows directly from
\cref{lemma:helper-constraints-vs-solver}.
\fe{As long as wf-ness of states needs wf-ness of the constraint, we use the wf-ness premise about $C$ here.}
\end{proof}

\subsection{Proof of \cref{theorem:constraint-based-type-inference-sound}}

\proofContext{constraint-based-type-inference-right}
\constraintbasedtypeinferenceright*

\begin{proof}

Suppose
$(\emptystack, \emptykenv, \emptysubst, \forall \Delta. \exists a. \Def\; \Gamma \;\In\; \congen{M : a})$
$\to^{*}$
$( \forall \Delta :: \exists\; (a, \many{b}), \Theta, \subst,\allowbreak \true)$ and $\Delta \vdash \theta' : \Theta \Rightarrow \emptydelta$.
Let $s$ refer to the first state of the sequence above and $s'$ to its last state.

We apply \cref{lemma:translation-yields-wf-constraint}, which gives us
$\Delta; a ; \Gamma \vdash \wf{\congen{M : a}}$.
We therefore have $\vdash \wf{s}$.
By \cref{lemma:stack-machine-steps-preserve-wf-ness} we then have
$\vdash \wf{s'}$, too, which implies
$\Delta \vdash \Theta \Rightarrow \emptydelta$ and $\ftv(\Theta) = \many{b},a$.

We can therefore define $\rsubst$ as
$\theta' \comp \restriction{\theta}{\{ a \}}$.
This allows us to
apply \cref{theorem:solver-correct}.
We instantiate the right-to-left direction of the theorem such that we need to
show that the following properties hold:
\begin{premisses}
 \item $(\emptystack, \emptykenv, \emptysubst, \forall \Delta.\, \exists a.\, \Def\: \Gamma \:\In\: C) \to^{*}
    (\forall \Delta :: \exists\; (a,\many{b}), \kenv, \subst, \true)$
  \item $\Delta \vdash \theta' : \Theta \Rightarrow \emptydelta$
 \item $\restriction{(\theta' \comp \theta)}{a} = \rsubst$
\end{premisses}

The first two properties follow immediately by assumption, the third one holds
by definition of $\rsubst$.
Therefore, the right-to-left direction of
\cref{theorem:solver-correct} gives us
$\Delta;  a ; \Gamma; \rsubst \vdash \congen{M : a}$.
\Cref{theorem:constraint-generation-completeness} then immediately
yields $\Delta; \Gamma \vdash M : \rsubst(a)$.
By definition of $\rsubst$ this is equivalent to the property to show.

\end{proof}


\subsection{Proof of \cref{theorem:constraint-based-type-inference-complete}}
\proofContext{constraint-based-type-inference-left}
\constraintbasedtypeinferenceleft*

\begin{proof}
We assume $\Delta; \Gamma \vdash M : A$, which implies $\Delta \vdash \Gamma$ and $\Delta;\Gamma \vdash \wf{M}$.
This means that
\cref{theorem:constraint-generation-soundness} gives us
$\Delta; a ; \Gamma; [a \mapsto A] \vdash \congen{M : a}$.

We apply the left-to-right direction of
\cref{theorem:solver-correct} (using $\congen{M : a}$ for
$C$ and $[a \mapsto A]$ for $\rsubst$ in the theorem's statement) which gives us the existence of
$\Theta, \theta, \theta', \many{c}$ s.t.\
\begin{premisses}
 \item $(\emptystack, \emptykenv, \emptysubst, \forall \Delta. \exists a.\Def\;
   \Gamma\; \In \; \congen{ M : a }) \to^* (\forall \Delta :: \exists (a, \many{c}), \kenv, \subst, \true)$ \premissLabel{premiss:machine-steps}
 \item $\Delta \vdash \subst' : \kenv \Rightarrow \emptydelta$
 \item $\restriction{(\subst' \comp \subst)}{\lbrace a \rbrace} = [ a \mapsto
   A ]$ \premissLabel{premiss:substs-compose}
\end{premisses}

Clearly, we have
$(\subst' \comp \restriction{\subst}{\lbrace a \rbrace})(a) = (\subst' \comp \subst)(a) =  [ a \mapsto
   A ](a) = A,
$
which is the second property we need to show.
By choosing $\Xi = (a, \many{c})$, property \premissRef{premiss:machine-steps} becomes
the first property that we needed to show.

\end{proof}


\section{Further discussion of  Let and Def constraints}

As mentioned in the paper, our treatment of let and def constraints differs from \citet{PottierR05} in some ways, and in particular, lacks the nice property that let constraints can be defined in terms of def constraints, which can in turn be eliminated by a form of inlining.  We have investigated alternative designs, and not found one that has these properties and works otherwise.  In this appendix we outline the results of this exploration of the design space.  We use without further explanation notation from \citet{PottierR05}.
\subsection{Def constraints}
\label{subsection:def-constraint-discussion}

In \cite{PottierR05}, qualified types $\sigma$ are of the form
$\forall \omany{X} [C].T$, where $T$ does not contain further quantifiers or
constraints.

 Due to the existence of constraints in types in their system, the only
difference between Let constraints (using their syntax: of the form
$\Let\; x : \forall \omany{X} [C_1]. T \;\In\; C_2$) and Def constraints (of the form
$\Def\; x : \forall \omany{X} [C_1]. T \;\In\; C_2$) is that the former imposes
satisfaction of the constraint $C_1$.
This is expressed by the following equivalence
in \cite{PottierR05}: we have that
\[
\Let\; x : \forall \omany{X} [C_1]. T \;\In\; C_2
\]
is equivalent
to
\[
\exists \omany{X}. C_1 \;\;\wedge\;\; \Def\; x : \forall \omany{X} [C_1]. T \;\In\; C_2.
\]

\subsubsection{Substituting def away in HM(X)}
\label{Pottier-substituting-def-away}

As stated in \cite{PottierR05}, their def constraints can be substituted away.
Concretely, in their work, the constraint
$\Def\; (x : \forall \omany{X} [C_1].T) \;\In\; C_2$ is equivalent to $C'$,
where $C'$ results from $C_2$ by replacing all $x \preceq T'$ with
$\exists \omany{X}. C_1 \wedge T \le T'$. Here, $\le$ is their subtyping
relation, and we may just use  $\ceq$ for the purposes of this discussion.

\subsubsection{Can we substitute def away?}
\label{substituting-def-away}
We initially did not think this could be possible, however, this is not entirely clear.
In our system, consider the constraint
$\Def\; (x: \forall \omany{a}.H) \;\In\; C_2$.

Straightforwardly adapting the equivalence from \cite{PottierR05}  would yield
a constraint $C'$ that results from $C_2$ by replacing all $x \preceq A$
with $\exists \omany{a}. H \ceq A$ and all $\freeze{x : A}$ with
$\forall \omany{a}. H \ceq A$.

In our system, we then have that $\Def\; (x: \forall \omany{a}.H) \;\In\; C_2$
entails $C'$, but not the other direction.

To understand why, consider
$\Def\; (x : a) \;\In\; x \preceq \forall b. (b \to b)$, where $a$ is some
flexible variable.
Its substituted version $C'$ is simply $a \ceq \forall b. (b \to b)$, which is
clearly satisfiable, whereas the original constraint is not, as we need to pick
the polymorphic type $\forall b. b \to b$ for $a$, which def constraints
prohibit (due to the monomorphism premise in \lab{Sem-Def} \cref{fig:constraint-wellformedness}).

\subsubsection{What about redefining def?}
\label{what-if-redefining-def}
One may now reasonably ask ``is there a \emph{different} notion of def
constraints for which the reverse direction of the entailment holds'' (which
would mean that a def constraint is indeed equivalent to the substituted
version)?  After all, it's
the monomorphism restriction that is preventing the reverse direction from holding.

One option we have considered is to try defining the semantics of def
constraints in terms of the substitution rule directly. Concretely,
consider the following, alternative version of the rule \lab{Sem-Def} from
\cref{paper-fig:constraints-semantics}:
\[
  \inferrule[\lab{Sem-Def-Alt}]
  {
    \Delta; \Xi; \Gamma; \rsubst  \vdash C[x \preceq A \leadsto \exists \omany{a}. H \ceq A][\freeze{x : A} \leadsto \forall \omany{a}.H \ceq A]
  }
  {\Delta; \Xi; \Gamma; \rsubst \vdash \Def\; (x : \forall \omany{a}.H) \;\In\; C}
\]

%
We use $\leadsto$ here to indicate that \emph{all} subconstraints of $C$ of the forms
$x \preceq A$ and $\freeze{x : A}$ are replaced.

We conjecture that this yields a constraint language that still has most general
solutions and we could adapt the solver to behave accordingly.
The reason is that we still only \emph{instantiate}
those variables $\omany{a}$ that were explicitly given in the def constraint. If
the remaining type $H$ contains flexible variables $a$ that are instantiated
with polymorphic types (which this version of def constraints would allow),
those quantifiers could never be instantiated.
This is observable in the earlier example
$\Def\; (x : a) \;\In\; x \preceq \forall b. (b \to b)$, which would now be
\emph{defined} to hold if $a \ceq \forall b. (b \to b)$ holds, and therefore
permits the solution $a \mapsto (\forall b. b \to b)$ without being able to
instantiate $b$.

However, there are several issues with this, which is why we have  not
adopted this idea in the paper.

\begin{enumerate}
\item
It appears rather un-intuitive that with this updated semantics, the \emph{only}
solution for the constraint
$\Def\; (x : a) \;\In\; x \preceq \forall b. (b \to b)$ is the one where we pick
$a \mapsto (\forall b. b \to b)$. When just seeing the constraint, we would
expect $a \mapsto (\forall b.b)$ to work, too.

\item
Similarly, consider the following two constraints:
\[
\ba{rcll}
C_1 \; &:= \; &&\Def (x: \forall a. a \to a) \;\In\; x \preceq c \\
C_2 \; &:= \; &\exists b. b \ceq (\forall a. a \to a) \;\;\wedge &\Def (x: b) \;\In\; x \preceq c \\
\ea
\]
If we use the new version of let constraints (i.e., \lab{Sem-Def-Alt}), we then
have that the solution for $C_1$ requires $c \mapsto d \to d$ for any type $d$,
whereas the solution for $C_2$ requires $c \mapsto (\forall a. a \to a)$. This
seems very un-intuitive.

Using our original semantics for def constraints (i.e., \lab{Sem-Def} from
\cref{paper-fig:constraints-semantics}), we get the same solution for $C_1$ and
$C_2$ is unsatisfiable.

\item
The rule $\lab{Sem-Def-Alt}$ would make def constraints more permissive than
unannotated $\lambda$ terms in FreezeML.
If we used $\lab{Sem-Def-Alt}$, we would be able to conclude that the term
$\lambda x. \lambda (f : (\forall b. b) \to \Int). f\; \freeze{x}$ is
well-typed, because the constraint generated from it would now be satisfiable.
However, this term is not accepted by the FreezeML typing rules.

On the other hand, this problem could be mitigated by generating appropriate
$\monoc{}$ constraints: We would change
\cref{fig:translation}
such that
\[
\congen{\lambda x.M : A}                    =
 \exists a_1, a_2 . (a_1 \to a_2 \ceq A \wedge \Def\; (x : a_1)\; \In \;\congen{M : a_2} \wedge \monoc{}(a_1) )
\]

\end{enumerate}

Instead of following the approach in \ref{what-if-redefining-def}, one may think
that we should make instantiation constraints $A \preceq B$ part of the
constraint language (rather than being syntactic sugar for something else, as in
\cite{PottierR05}).
Given a constraint $\Def (x: A) \;\In\; C_2$, we could then substitute all
$x \preceq B$ in $C_2$ with $A \preceq B$ (and substitute all $\freeze{x : B}$
as in \ref{substituting-def-away}, meaning with $A \ceq B$).

The problem with these constraints is that we need to ensure that the
polymorphism on the left-hand-side is already known:
Without any restrictions, the constraint $a \preceq \Int \to \Int$ would have no
most general solution, as both $a \mapsto (\forall b. b \to b)$ and
$a \mapsto \Int \to \Int$ are feasible.

However, term variables are \emph{the very mechanism} we use in our constraint
language to ensure that the polymorphism on the left-hand-side of an
instantiation constraint is fully determined!
Therefore, using such constraints would not simplify the treatment of
instantiation.

\subsection{Let constraints}

 Just using the semantics of let constraints, we can observe the following properties

\begin{enumerate}
\item
\label{case:let-mono-as-def}
If
$\Delta; \Xi; \Gamma; \rsubst \vdash \Let_\mono\; x = \sqcap a. C_1 \;\In\; C_2$
holds, then for all $\Delta_\mathrm{m}, \rsubst_\mathrm{m}$ such that
$\meta{mostgen}(\Delta, (\Xi, a),\allowbreak \Gamma,\allowbreak C_1, \Delta_\mathrm{m},\allowbreak \rsubst_\mathrm{m})$
we have
$\Delta; \Xi; \Gamma; \rsubst \vdash \exists \Delta_\mathrm{m}. \Def ( x : \rsubst_\mathrm{m}(a)) \;\In\; C_2$.
\item
\label{case:let-poly-as-def}
If
$\Delta; \Xi; \Gamma; \rsubst \vdash \Let_\mono\; x = \sqcap a. C_1 \;\In\; C_2$
holds, then for all
$\Delta_\mathrm{m}, \rsubst_\mathrm{m}, \Delta_\mathrm{o}, \omany{b}$ such that
\[
\bl
\meta{mostgen}(\Delta, (\Xi, a),\allowbreak \Gamma,\allowbreak C_1, \Delta_\mathrm{m},\allowbreak \rsubst_\mathrm{m}) \\
\Delta_\mathrm{o} = \ftv(\rsubst_\mathrm{m}(\Xi)) - \Delta \\
\omany{b} = \ftv(\rsubst_\mathrm{m}(a)) - \Delta, \Delta_\mathrm{o} \\\\
\el
\]
we have
$\Delta; \Xi; \Gamma; \rsubst \vdash \exists \Delta_\mathrm{o}. \Def ( x : \forall \omany{b}. \rsubst_\mathrm{m}(a)) \;\In\; C_2$.
\end{enumerate}
Note that most general types are only unique up to the names of the freshly
introduced variables, which is why we need to universally quantify over
$\rsubst_\mathrm{m}$ and $\Delta_\mathrm{m}$ here.
The extra conditions in (\ref{case:let-poly-as-def}) defining $\Delta_\mathrm{o}$ and $\omany{b}$ are identical to those in \lab{Sem-Let-Poly}.

The properties stated in (\ref{case:let-mono-as-def}) and
(\ref{case:let-poly-as-def}) are not equivalences between let and def constraints,
but only entailments.

It would probably be possible to state some form of equivalence, but that doesn't seem helpful:
Because we don't have constraints in types, a constraint
$\Def\; (x : A) \;\In\; C_2$ with an arbitrary type $A$ entails any constraint
$\Let_R\; x = \sqcap a. C_1 \;\In\; C_2$ where $C_1$ (a constraint pulled out of
thin air) has a most general solution that is somehow related to $A$.

Given the entailments stated in in (\ref{case:let-mono-as-def}) and
(\ref{case:let-poly-as-def}), we could apply the substitution of def constraints
discussed in point \ref{substituting-def-away} in the previous section
\ref{subsection:def-constraint-discussion} to relate def constraints to
constraints without def constraints.
Recall that these are also just entailments, not equivalences.

So far in this section, we have related constraints
$\Let\; x = \sqcap a. C_1 \;\In\; C_2$ to other constraints that effectively
require us to solve $C_1$ first (because we refer to
$\meta{mostgen}(\dots,C_1,\dots)$ in the expansion).
Would it be possible to relate let constraints to other
constraints without solving $C_1$ first? In \cite{GarrigueR99} this is possible,
by using constraints in types.

First, we observe that due to the impredicative nature of our system, we cannot deal with
instantiation constraints
as in \cite{GarrigueR99} (which we discussed Section~\ref{Pottier-substituting-def-away}).
Concretely, given a constraint $\Let_R\; x = \sqcap a. C_1 \;\In\; C_2$, we cannot
simply replace all $x \preceq A$ occuring in $C_2$ with
$\exists a. C_1 \wedge a \ceq A$.
While this may work for examples where generalisation occurs, like
$\Let_\poly\; x = \sqcap a. \exists b. a \ceq b \to b  \;\In\; x \preceq \Int \to \Int$,
it doesn't work in cases such as
$\Let_R\; x = \sqcap a. a \ceq (\forall b. b \to b) \;\In\; x \preceq \Int \to \Int$
(irrespective of the choice of $R$), because the scheme above would yield
$\exists a. a \ceq (\forall b. b \to b) \wedge a \ceq (\Int \to \Int)$, which is
unsatisfiable.

When considering $\Let_\poly$ constraints, we observe that we would need some
form of explicit generalisation constraints to substitute away freeze constraints under a generalising let constraint.

Consider the example
$\Let_\poly\; x = \sqcap a. \exists b. a \ceq b \to b \;\In\; \freeze{x : \forall b. b \to b}$.
Here, we cannot substitute away $\freeze{x : \forall b. b \to b}$ with
$\exists a. \exists b. a \ceq b \to b \;\wedge\; a \ceq \forall b. b \to b$ (as
discussed for handling freeze constraints under $\Def$ in
\ref{substituting-def-away}).

Instead, we would need some kind of generalisation constraint
$\sqcap a. C : A$ that generalises the type for $a$ in $C$
and asserts that the generalised solution for $a$ is equal to $A$.
However, we would have to generalise not any type but the most general type:
Otherwise, $\sqcap a. \exists b. a \ceq b \to b : c$ would have the two solutions
$c \mapsto (\Int \to \Int)$ and $c \mapsto \forall b. b \to b$, depending on
what type we chose for $b$ prior to generalising.
This is the same reason why let terms and constraints have principality
conditions in our system.

Therefore, we would need to push the $\meta{mostgen}$ logic from let constraints
into the semantics of such generalisation constraints, too.

\end{document}
